\def\theequation{\thesection.\arabic{equation}}  
\def\abstract{\if@twocolumn
\section*{Abstract}
\else \normalsize 
\begin{center}
{\bf Summary\vspace{-.5em}\vspace{0pt}} 
\end{center}
\quotation 
\fi}
\def\endabstract{\if@twocolumn\else\endquotation\fi}
\newcommand{\myappendix}[1]{
	\setcounter{section}{1}
        \renewcommand{\thesection}{A\arabic{section}}}
\DeclareMathOperator{\E}{E}
\DeclareMathOperator*{\argmin}{{arg\,min}}
\DeclareMathOperator*{\argmax}{{arg\,max}}
\def \calD {\mathcal D}
\def \calF {\mathcal F}
\def \calL {\mathcal L}
\def \uvec {\text{\boldmath$u$}}    
\def \vvec {\text{\boldmath$v$}}    
\def \xvec {\text{\boldmath$x$}}    
\def \yvec {\text{\boldmath$y$}}
\def \etavec          {\text{\boldmath$\eta$}}
\def \thetavec        {\text{\boldmath$\theta$}}
\def \lambdavec       {\text{\boldmath$\lambda$}}
\def \muvec           {\text{\boldmath$\mu$}}
\def \phivec          {\text{\boldmath$\phi$}}
\def \varphivec       {\text{\boldmath$\varphi$}}
\def \psivec          {\text{\boldmath$\psi$}}
\newlength{\breite}
\newcounter{aufg}[section]
  {\refstepcounter{aufg}\noindent\textbf{Exercise \arabic{aufg}:}
   \\*[1ex]\noindent}{\vspace{.5cm}}
 \newcounter{notes}[section]
  {\refstepcounter{aufg}\noindent\textbf{}
   \\*[1ex]\noindent}{\vspace{.5cm}}
\newtheorem{defin}{Definition} 
\newtheorem{lemma}[defin]{Lemma}
\theoremstyle{definition}
\newtheorem*{beisp*}{Example}
\newtheorem{Proof}{Proof}
\newtheoremstyle{break}
  {}
  {}
  {}
  {}
  {\bfseries}
  {.}
  {\newline}
  {}
\theoremstyle{break}
\newcommand{\head}[2]%
 {\hrule \vspace{.15cm} {\sfbold Advanced Statistical Inference, Summer Term 2012, Georg-August-University G\"ottingen}\hfill
{\sfbold Sheet #1}\\
{\sfbold Prof. Dr. Thomas Kneib, Nadja Klein}\hfill {\sfbold #2}

\vspace{.2cm}
\hrule

\vspace{1cm}

}
\newcounter{auf}
{\refstepcounter{auf}
\begin{center}
\fcolorbox[gray]{0}{.95}{
\makebox[\breite]{
\textbf{Exercise \arabic{auf}}
}}\\*[1ex]\noindent
\end{center}
}{\vspace{.5cm}}
\newcounter{loes}[section]
{\stepcounter{loes}
\begin{center}
\fcolorbox[gray]{0}{.95}{
\makebox[\breite]{
\textbf{L"osung \arabic{loes}}
}}\\*[1ex]\noindent
\end{center}
}{}
\newcounter{ka}
{\refstepcounter{ka}
\begin{center}
\framebox[\textwidth]{
\textbf{Aufgabe \arabic{ka}} \hfill #1 Punkte
}\\*[1ex]\noindent
\end{center}
}{\vspace{1cm}}
\newcounter{lka}
{\refstepcounter{lka}
\begin{center}
\framebox[\textwidth]{
\textbf{L\"osung \arabic{lka}} \hfill #1 Punkte
}\\*[1ex]\noindent
\end{center}
}{\vspace{1cm}}
\titlespacing*\section{0pt}{0pt plus 4pt minus 2pt}{0pt plus 2pt minus 2pt}
\titlespacing*\subsection{0pt}{0pt plus 4pt minus 2pt}{0pt plus 2pt minus 2pt}
\titlespacing*\subsubsection{0pt}{0pt plus 4pt minus 2pt}{0pt plus 2pt minus 2pt}
\definecolor{myblue}{RGB}{0,73,114}
\newcounter{myremark}
\newcounter{mynotation}
\renewenvironment{itemize}[1]{\begin{compactitem}#1}{\end{compactitem}}
\renewenvironment{enumerate}[1]{\begin{compactenum}#1}{\end{compactenum}}
\newcommand{\mycomment}[1]{}
\newcommand{\bt}{\bm{\theta}}
\newcommand{\be}{\bm{\eta}}
\newcommand{\bp}{\bm{\psi}}
\newcommand{\MD}{\mathcal{D}}
\newcommand{\cut}{\mathrm{cut}}
\theoremstyle{plain}
\newtheorem{theorem}{Theorem}
\newtheorem{corollary}{Corollary}[theorem]
\newtheorem{assumption}{Assumption}
\def\@seccntformat#1{\@ifundefined{#1@cntformat}%
	{\csname the#1\endcsname\quad}  
	{\csname #1@cntformat\endcsname}
}
\let\oldappendix\appendix 
\renewcommand\appendix{%
	\oldappendix
	\newcommand{\section@cntformat}{\appendixname~\thesection\quad}
}
\newcommand\reallywidehat[1]{%
\savestack{\tmpbox}{\stretchto{%
  \scaleto{%
    \scalerel*[\widthof{\ensuremath{#1}}]{\kern-.6pt\bigwedge\kern-.6pt}%
    {\rule[-\textheight/2]{1ex}{\textheight}}
  }{\textheight}%
}{0.5ex}}%
\stackon[1pt]{#1}{\tmpbox}%
}
\renewcommand{\baselinestretch}{1.8}
\begin{document}
\setlength{\abovedisplayskip}{0.15cm}
\setlength{\belowdisplayskip}{0.15cm}
\pagestyle{empty}
\begin{titlepage}

\title{\bfseries\sffamily\color{myblue}  
	 Cutting Feedback in Misspecified Copula Models}
\author{Michael Stanley Smith, Weichang Yu, David J. Nott and David T. Frazier}
\date{\today}
\maketitle
\noindent
{\small Michael Smith is Professor of Management (Econometrics) at Melbourne Business School in the University of Melbourne, Weichang Yu is Lecturer in the School of Mathematics and Statistics at the University of Melbourne, David Nott is an Associate Professor in the Department of Statistics and Data Science at the National University of Singapore and is affiliated with the NUS Institute of Operations Research and Analytics, and 
David Frazier is Professor in Econometrics and Business Statistics at Monash University. David Frazier acknowledges funding from the Australian Research Council under Projects DE200101070 and DP200101414, while
David Nott and Michael Smith's research was supported by the Ministry of Education, Singapore, under the Academic Research Fund Tier 2 (MOE-T2EP20123-0009). The authors thank two anonymous referees whose insightful comments have improved the paper.
}

\newpage
\begin{center}
\mbox{}\vspace{2cm}\\
{\LARGE \title{\bfseries\sffamily\color{myblue} Cutting Feedback in Misspecified Copula Models}
}\\
\vspace{1cm}
{\Large Abstract}
\end{center}
\vspace{-1pt}
\onehalfspacing
\noindent
In copula models the marginal distributions and 
copula function are specified separately. We treat these as two modules in a modular Bayesian inference framework, and propose conducting modified Bayesian inference by ``cutting feedback''. 
Cutting feedback 
limits the influence of potentially misspecified modules in posterior inference. 
We consider two types of cuts. The first limits
the influence of a misspecified copula on inference for the marginals, which is 
a Bayesian analogue of the popular Inference for Margins (IFM) estimator. The second limits the
influence of misspecified marginals on inference for the copula parameters by using
a pseudo likelihood of the ranks to define the cut model.  We establish that if only one of the modules is misspecified, then the appropriate cut posterior gives accurate uncertainty quantification asymptotically for the parameters in the other module. Computation of the cut posteriors
is difficult, and new variational inference methods to do so are proposed.
The efficacy of the new methodology is demonstrated using both simulated data and a substantive multivariate time series
application from macroeconomic forecasting. 
In the latter, cutting feedback from misspecified marginals to a 1096 dimension copula
improves posterior inference and predictive accuracy greatly, compared
to conventional Bayesian inference.

\vspace{20pt}
 
\noindent
{\bf Keywords}:  Modular Inference, Posterior Consistency, Pseudo Rank Likelihood, Time Series Copula, Variational Inference. 

\end{titlepage}

\newpage
\pagestyle{plain}
\setcounter{equation}{0}
\renewcommand{\theequation}{\arabic{equation}}

\section{Introduction}\label{sec:intro}
A copula model 
specifies a multivariate distribution using its marginal distributions and a copula function to capture the dependence structure~\citep{nelsen06,joe2014dependence}. This  simplifies multivariate stochastic modelling, making copula models popular in hydrology~\citep{genest2007metaelliptical},
financial econometrics~\citep{patton2006},
transportation studies~\citep{bhat2009} 
and elsewhere.
However, exploiting this modularity of copula models to improve the accuracy of statistical inference has been less explored.
The goal of the present work is to do so using Bayesian modular inference methods which, to the best of our knowledge, 
have not been considered previously for copula models. We use a technique called ``cutting feedback'' \citep{liu+bb09,jacob+mhr17}, which is applicable to models that comprise multiple components labelled
modules. If some modules
are misspecified, cutting feedback modifies conventional Bayesian inference to limit the influence of the unreliable modules on the others, producing a ``cut posterior''
that is more accurate than a conventional posterior for this case. By treating the marginals of a copula model as one module, and the copula function as a second module, we specify two types of
cut posterior and develop methods for their evaluation. We establish both theoretically and empirically that the cut posteriors are more accurate than the conventional posterior
under the given module misspecification. 

Conventional Bayesian inference for copula models 
using the joint posterior can be
unreliable when either the copula function or the marginals are misspecified, and we consider both scenarios. 
In the first scenario, the goal is to
prevent misspecification of the copula fuction with unknown parameters $\psivec$ from contaminating inference
about the marginals with unknown parameters $\thetavec$. 
We construct a 
joint cut posterior for $(\thetavec,\psivec)$ by cutting feedback
from $\psivec$ to $\thetavec$. 
The approach is a Bayesian analogue
of the IFM method~\citep{joexu1996,joe2005}, but with Bayesian propagation of uncertainty.  
We prove that this cut posterior
asymptotically quantifies uncertainty
about $\thetavec$ accurately 
when the marginal distributions
are well-specified, even if the copula function is misspecified, and 
that the cut posterior mean is first-order equivalent
to the IFM estimator.  


The second scenario is where the goal is to
prevent misspecification of the marginals from contaminating inference
about the copula function. This is the more challenging case.  
To cut feedback from $\thetavec$ to $\psivec$ we define a novel marginal\footnote{The term ``marginal'' has two usages here. The first is to refer to the marginal distributions of the copula model, while the second is to Bayesian marginal cut or conventional posterior distributions. Care is taken to clarify between the two throughout.} cut posterior of $\psivec$ using a pseudo likelihood of the rank data. This is then combined with the conventional conditional posterior for 
$\thetavec$, given $\psivec$, to define
a joint cut posterior.   
We prove that this cut posterior
asymptotically 
quantifies uncertainty accurately about the copula parameters
if the copula is correctly specified, even if the
marginal distributions are misspecified. We are unaware of any existing analog to this approach.  

Computation of cut posteriors using Markov chain Monte Carlo (MCMC) or importance sampling methods is difficult because a cut posterior features an
unignorable intractable term. Nested MCMC methods 
are a common approach
to circumvent this problem~\citep{plummer15}, but they do not scale well. A major contribution of this paper is the development of efficient and scalable variational methods for the evaluation of the cut posteriors outlined above. 
Variational inference \citep{ormerod2010,blei2017} formulates the
approximation of a Bayesian posterior
distribution as an optimization problem. It is particularly attractive for evaluating cut posterior distributions because the problematic intractable term does not need to be computed during the optimization~\citep{yu+ns21,carmona+n22}.
We show how to implement variational inference for both 
cut posteriors of the copula model. 

In the scenario where feedback from $\thetavec$ to $\psivec$ is cut, the introduction of the pseudo likelihood of the ranks adds an additional computational bottleneck because it is difficult to evaluate or optimize directly in even moderate dimensions. To solve this problem we introduce
an extended likelihood~\citep{PitChaKoh2006,hoff07,smith2012estimation} and then define a cut version of the resulting augmented 
posterior which is both tractable and has the desired 
cut posterior as its marginal in $(\thetavec,\psivec)$. We develop variational
approximations to this augmented cut posterior that are both accurate
and allow for fast solution of the variational optimization.

We demonstrate the efficacy of the proposed methodology 
in the presence of both forms of module misspecification using two simulation studies. These are low dimensional to allow evaluation of the exact cut posteriors, which is difficult otherwise. Here, the cut posteriors provide substantially improved statistical inference in comparison to the conventional posterior, while their variational approximations are also shown to be accurate. 
The effectiveness of the variational inference methodology is demonstrated in a substantive macroeconomic application where it is infeasible to use exact methods. The example updates the four-dimensional multivariate time series analysis of~\cite{smithvahey2016} to contemporary data. 
A copula model is used with four unique skew-t marginals and a Gaussian copula of dimension 1096.
This copula arises from a four-dimensional Gaussian copula process with 72 unique parameters observed at 274 time points. The primary objective is density and tail forecasting, and parametric marginals and copula function are necessary to do so, but both are difficult to select. We show that cutting feedback from the marginals to the copula (i.e. from $\thetavec$ to $\psivec$) improves statistical inference and forecasting accuracy substantially, compared to the conventional posterior. 

The rest of the paper is structured as follows.  Section~\ref{sec:bcm} gives some necessary background on copula models, variational inference and cutting feedback. Section~\ref{sec:cutting1} considers 
cutting feedback when the copula is misspecified, but the marginals are not. Both the theoretical behaviour and computation of the cut posterior are discussed and demonstrated in a simulation study. 
Section~\ref{sec:cutting2} considers cutting feedback when the marginals are misspecified, but the copula is not. The theoretical behaviour of the cut posterior is established, while an augmented posterior and an appropriate variational approximation is proposed for its computation, and a simulation study demonstrates. Section~\ref{sec:macro} 
contains the macroeconomic application, while 
Section~\ref{sec:conc} concludes.
\section{Background}\label{sec:bcm}
We begin with a brief outline of copula models and their 
estimation using the conventional joint posterior. This is followed by an introduction to cutting 
feedback methods and the computation of cut posteriors for a model with two modules.
\subsection{Copula Models}
If $\bm{Y}=(Y_1,\ldots,Y_m)^\top\sim F_Y$ with marginals $Y_j\sim F_{j}$,
then the joint distribution function
\begin{equation}
	F_Y(\yvec)=C\left(F_1(y_1),\ldots,F_m(y_m)\right)\,,\label{eq:copmod}
\end{equation}
where $\yvec=(y_1,\ldots,y_m)^\top$ and $C:[0,1]^m \rightarrow \mathbb{R}^+$ is 
a copula function; see~\citet[p.45]{nelsen06}. 
This decomposition provides a convenient modular way to construct a multivariate distribution, where the marginals $F_1,\ldots,F_m$ and copula function $C$ can be selected separately.  Parametric marginals 
$F_j(y_j;\thetavec_j)$ and copula function 
$C(\uvec;\psivec)$ are often used,\footnote{The notation $C(\uvec;\psivec)$ and $C(u_1,\ldots,u_m;\psivec)$ are used interchangably throughout the paper, as are $c(\uvec;\psivec)$ and $c(u_1,\ldots,u_m;\psivec)$ for the copula density.} with $\uvec=(u_1,\ldots,u_m)^\top$ and parameters $\thetavec=(\thetavec_1^\top,\ldots,
\thetavec_m^\top)^\top$ and $\psivec$. The resulting distribution $F_Y$ is commonly called a ``copula model'' and is employed widely.
Many copula
functions with different dependence properties have been studied previously; 
see~\cite{nelsen06} and \cite{joe2014dependence} for some examples.

If $F_1,\ldots,F_m$ are all continuous distributions, the joint density of $\bm{Y}$ is 
\begin{equation}
	f_Y(\yvec;\thetavec,\psivec)=c\left(F_1(y_1;\thetavec_1),\ldots,F_m(y_m;\thetavec_m);\psivec\right)\prod_{j=1}^m f_j(y_j;\thetavec_j)\,,\label{eq:copden}
\end{equation}
where $c(\uvec;\psivec)=\frac{\partial}{\partial\uvec}C(\uvec;\psivec)$ is called the copula density, and $f_j(y_j;\thetavec_j)=\frac{\partial}{\partial y_j} F_j(y_j;\thetavec_j)$ is the marginal density
of $Y_j$. When one or more $F_j$ is discrete or mixed, the joint mixed density function involves differencing over those dimensions;
see~\cite{genest2007}.

Let ${\cal D}=\{\yvec_1,\ldots,\yvec_n\}$ be $n$ observations drawn independently from $F_Y$
at~\eqref{eq:copmod}. For continuous marginals,
the joint posterior density is $p(\thetavec,\psivec|{\cal D})\propto
\prod_{i=1}^n f_Y(\yvec_i|\thetavec,\psivec)p(\thetavec,\psivec)$, where 
 $p(\thetavec,\psivec)$ is the prior. 
Evaluation of the posterior 
 using Markov chain Monte Carlo (MCMC) methods
has been discussed previously by~\cite{PitChaKoh2006,silva2008copula,min2010,smithmin2010} and~\cite{murray2013} among others. 
However, evaluation of the posterior using MCMC methods can be slow for large $m$, and variational inference (VI) is a faster and more scalable alternative. 

\subsection{Variational Inference}
MCMC methods evaluate the  posterior exactly (up to a controllable level of Monte Carlo error),
whereas VI approximates the posterior by a density chosen from a family
of tractable distributions with densities $q\in\calF$. The density is chosen to minimize 
the distance between the two, with the Kullback-Leibler (KL) divergence
the most commonly used measure, so that for a copula model
\begin{equation}
	q^*(\thetavec, \psivec) = \argmin_{q \in \calF} \int \int q(\thetavec, \psivec) \log \left \{ \frac{q(\thetavec, \psivec)}{p(\thetavec, \psivec \, \vert \calD)} \right \} \, \text{d} \thetavec d \psivec.\label{eq:viprob}
\end{equation}
Many families $\calF$ have been considered in the literature, but a Gaussian with 
density $q_\lambda(\xvec)=\phi_N(\xvec;\muvec,\Sigma)$ indexed by its unique parameters $\lambdavec=(\muvec^\top,\mbox{vech}(\Sigma)^\top)^\top$ is one of the most popular~\citep{titsias2014doubly,kucukelbir2017automatic,tan2018gaussian}. 
It is straightforward to show (e.g. see~\cite{ormerod2010}) that $q^*(\thetavec, \psivec) = \argmax_{q \in \calF}\calL(\lambdavec)$, where
the function
\[
\calL(\lambdavec)=E_q\left(\log h(\thetavec,\psivec)-\log q_\lambda(\thetavec,\psivec)\right)\,,  
\] 
is called the Evidence Lower Bound (ELBO) and
$h(\thetavec,\psivec)=p(\calD|\thetavec,\psivec)p(\thetavec,\psivec)$. 

A popular way to solve
this problem is to use stochastic gradient
optimization~\citep{bottou10}. 
This employs an unbiased approximation of the gradient $\nabla_\lambda \calL(\lambdavec)$ along with automatic adaptive step sizes for
the updates of $\lambdavec$, such as the ADADELTA method of~\cite{zeiler12} that
we use here. The combination of stochastic optimization and generic approximations is often called black box VI~\citep{ranganath14,titsias2014doubly}.
VI has been used to estimate copula models 
by~\cite{loaiza2019VBDA}, \cite{nguyen2020VI} and~\cite{smithklein2021}.

\subsection{Cutting feedback methods}\label{sec:cfm}
Cutting feedback is a form of Bayesian
modular inference~\citep{liu+bb09} that removes
the impact of mis-specifying one or more model components
on inference for the other components. 
Comprehensive overviews of cutting feedback methods are provided by
\cite{lunn+bsgn09}, \cite{plummer15}, \cite{jacob+mhr17} and~\cite{yu+ns21}.
A short introduction is 
given here for a two module 
system because 
the methods developed later for copula models are two module systems.

Consider a model for data ${\cal D}$ with density $g({\cal D}|\bm{\eta})$ and parameter vector $\bm{\eta}$. Consider the partition
$\bm{\eta}=(\etavec_1^\top,\etavec_2^\top)^\top$
and assume the density can be factorized as
\begin{equation}
	g({\cal D}|\bm{\eta}) = g_1({\cal D}|\etavec_1)g_2({\cal D}|\etavec_1,\etavec_2)\,.\label{eq:cflikefactor}
\end{equation}
Often in a two module system the data consists
of two sources ${\cal D}_1$ and ${\cal D}_2$, with $g_1({\cal D}|\etavec_1)=g_1({\cal D}_1|\etavec_1)$ and
$g_2({\cal D}|\etavec_1,\etavec_2)=g_2({\cal D}_2|\etavec_1,\etavec_2)$; e.g. 
see~\cite{plummer15}. 
However, we do not assume this simplification here because a more general perspective,
where $g_1$ and $g_2$ represent
different terms in a decomposition of the likelihood, is needed for the
cut methods for copulas. 

Denoting the prior density as $p(\etavec)=p(\etavec_1)p(\etavec_2|\etavec_1)$, 
we define two ``modules'', 
with Module~1 consisting of
$g_1({\cal D}|\etavec_1)$ and $p(\etavec_1)$, and Module~2
consisting of $g_2({\cal D}|\etavec_1,\etavec_2$) and $p(\etavec_2|\etavec_1)$. 
The conventional joint posterior density is
$$p(\etavec_1,\etavec_2|{\cal D})=
p(\etavec_1|{\cal D})\times p(\etavec_2|\etavec_1,{\cal D}).$$
Writing
$\bar{g}({\cal D})=\int p(\etavec_1)p(\etavec_2|\etavec_1) g({\cal D}|\bm{\eta})d\bm{\eta},$
and 
$\bar{g}_2({\cal D}|\etavec_1)=\int p(\etavec_2|\etavec_1)g_2({\cal D}|\etavec_1,\etavec_2)d\etavec_2,$
a simple derivation shows that the marginal posterior density is
\begin{align}
p(\etavec_1|{\cal D}) & =\frac{p(\etavec_1)g_1({\cal D}|\etavec_1)\bar{g}_2({\cal D}|\etavec_1)}{\bar{g}({\cal D})},  \label{theta-marginal}
\end{align}
and the conditional posterior density is
\begin{align}
 p(\etavec_2|\etavec_1,{\cal D}) & = 
 \frac{p(\etavec_2|\etavec_1)g_2({\cal D}|\etavec_1,\etavec_2)}{\bar{g}_2({\cal D}|\etavec_1)}. \label{psi-conditional}
\end{align}
In~\eqref{theta-marginal}, $\bar{g}_2({\cal D}|\etavec_1)$ is 
called the ``feedback'' term, because it captures the effect
of Module~2 on inference for $\etavec_1$. 
If Module~2 is misspecified 
 the
influence of the feedback term can result in misleading marginal
inference for $\etavec_1$.  Hence in joint Bayesian inference, 
even if Module~1 is correctly specified, misspecification of Module~2
can result in misleading inference about parameters appearing
in both modules.  

To eliminate the impact of a misspecification of Module~2 on 
inference for $\etavec_1$, the feedback term can be removed from~\eqref{theta-marginal} to define the following
marginal cut posterior density
\begin{align}
  p_{\text{cut}}(\etavec_1|{\cal D}) & = \frac{p(\etavec_1)g_1({\cal D}|\etavec_1)}{\int p(\etavec_1')g_1({\cal D}|\etavec_1')\,d\etavec_1'}. \label{cut-theta-marginal}
\end{align} 
The joint cut posterior density is then defined as
\begin{align}
     p_{\text{cut}}(\etavec_1,\etavec_2|{\cal D}) & = 
     p_{\text{cut}}(\etavec_1|{\cal D})p(\etavec_2|\etavec_1,{\cal D}), \label{cut-joint}
\end{align}
where $p(\etavec_2|\etavec_1,{\cal D})$ is the same conditional posterior for the
cut and uncut cases. A key observation is that uncertainty about $\etavec_1$ is still propagated when computing marginal cut posterior inference for $\etavec_2$ with
\[p_{\text{cut}}(\etavec_2|{\cal D})=\int p_{\text{cut}}(\etavec_1,\etavec_2|{\cal D})  d\etavec_1\,.
\]  

\subsection{Exact cut posterior computation}
Cut posterior computation is difficult.    
The joint cut posterior density is
$$p_{\text{cut}}(\etavec_1,\etavec_2|{\cal D})
\propto \frac{p(\etavec_1)p(\etavec_2|\etavec_1)g_1({\cal D}|\etavec_1)g_2({\cal D}|\etavec_1,\etavec_2)}{\bar{g}_2({\cal D}|\etavec_1)},$$
where $\bar{g}_2({\cal D}|\etavec_1)$ is 
usually intractable. This makes it hard to implement 
MCMC or importance sampling methods to evaluate the cut posterior in many 
models. 
One approach is to draw samples from~\eqref{cut-joint} by first
drawing $\etavec_1'\sim p_{\text{cut}}(\etavec_1|{\cal D})$, and then $\etavec_2'|\etavec_1'\sim p(\etavec_2|\etavec_1',{\cal D})$. Because $\etavec_1'$ is fixed in the second stage, the intractable term
$\bar{g}_2({\cal D}|\etavec_1')$ is not computed. However, direct generation
from these distributions is often difficult, and~\cite{plummer15} suggested 
using
``nested MCMC''  as in Algorithm~\ref{alg:nestedmcmc} below.
Other methods for cut posterior evaluation are discussed by \cite{liu+g20}, \cite{jacob2020unbiased} and  
\cite{pompe+j21}.
 
\begin{algorithm}
	\caption{Nested MCMC Sampler for Cut Posterior}
	\label{alg:nestedmcmc}
	\begin{algorithmic}
		\State Generate sample $\etavec_1^{(1)},\ldots,\etavec_1^{(S)} \sim p_{\text{cut}}(\etavec_1|{\cal D})\propto p(\etavec_1)g_1({\cal D}|\etavec_1)$ using an MCMC scheme
		\For{$s=1,\ldots,S$}
		\State Generate single value $\etavec_2^{(s)}\sim p(\etavec_2|\etavec_1^{(s)},{\cal D})$ using an MCMC scheme
		\EndFor
	\end{algorithmic}
\end{algorithm}

\subsection{Variational cut posterior computation}
Given the difficulty of exact cut posterior computation, 
variational inference methods to do so have been suggested
by \cite{yu+ns21} and~\cite{carmona+n22}.  Lemma 1 of \cite{yu+ns21}
establishes that the cut posterior distribution is closest
in Kullback-Leibler divergence to the true posterior
amongst distributions that have $\etavec_1$ marginal density
$p_{\text{cut}}(\etavec_1|{\cal D})$.  Therefore, 
if the family of approximations ${\cal F}$ is restricted to those that have $\etavec_1$ marginal
density $p_{\text{cut}}(\etavec_1|{\cal D})$, solving the conventional
variational optimization problem at~\eqref{eq:viprob} will also provide
the optimal variational approximation to the cut posterior. Crucially, solving 
this optimization does not require computation of the
intractable term $\bar{g}_2({\cal D}|\etavec_1)$ which creates
the computational bottleneck in MCMC.


This observation motivates a sequential VI procedure suggested by~\cite{yu+ns21}.
In a first stage an approximation of $p_{\text{cut}}(\etavec_1|{\cal D})$
is computed, which is then kept fixed in a second stage.
Consider a family of densities
of the form
$q_{\lambda}(\etavec)=q_{\widetilde{\lambda}}(\etavec_1)q_{\breve{\lambda}}(\etavec_2|\etavec_1)$,
where $\bm{\lambda}=(\widetilde{\bm{\lambda}}^\top,\breve{\bm{\lambda}}^\top)^\top$
are variational parameters partitioned into two sets.  The first set $\widetilde{\bm{\lambda}}$
parametrize the $\etavec_1$ marginal density, and the second set $\breve{\bm{\lambda}}$ parametrize the conditional density for $\etavec_2|\etavec_1$.  
If $D_{KL}\left(q\,||\,p\right)$ denotes the KL divergence of $q$ from $p$, then Algorithm~\ref{alg:vigeneral} below outputs an approximation 
to the joint cut posterior.

\begin{algorithm}
	\caption{VI for Cut Posterior}
	\label{alg:vigeneral}
	\begin{algorithmic}
	\State 1. Select a fixed form variational approximation $q_{\lambda}(\etavec_1,\etavec_2)=q_{\widetilde{\lambda}}(\etavec_1)q_{\breve{\lambda}}(\etavec_2|\etavec_1)$
	\State   2. Solve the optimization 
	$$\widetilde{\bm{\lambda}}^*=\arg \min_{\widetilde{\lambda}} D_{KL}\left( q_{\widetilde{\lambda}}(\etavec_1)||p_{\text{cut}}(\etavec_1|{\cal D})\right)$$
	to obtain an approximation $q_{\widetilde{\bm{\lambda}}^*}(\etavec_1)$ of $p_{\text{cut}}(\etavec_1|{\cal D})$
	\State 3. Solve the optimization
	$$\breve{\bm{\lambda}}^*=\arg \min_{\breve{\lambda}} 
	D_{KL}\left( q_{\widetilde{\lambda}^*}(\etavec_1)q_{\breve{\lambda}}(\etavec_2|\etavec_1)||
	p(\etavec_1,\etavec_2|{\cal D})\right)$$
	to obtain an approximation $q^*(\etavec)=q_{\widetilde{\lambda}^*}(\etavec_1)q_{\breve{\lambda}^*}(\etavec_2|\etavec_1)$ of $p_{\text{cut}}(\etavec_1,\etavec_2|{\cal D})$
	\end{algorithmic}
\end{algorithm}


In our empirical work, Gaussian variational approximations are used with Gaussian density $q_\lambda(\bm{\eta})=\phi_N(\bm{\eta};\bm{\mu},\Sigma)$, with mean $\bm{\mu}$ and variance $\Sigma=LL^\top$, where is $L$ a lower triangular Cholesky factor.  
Partitioning
$\muvec$ and $L$ to be conformable with  $\bm{\eta}=(\etavec_1^\top,\etavec_2^\top)^\top$, so that
$$\bm{\mu}=\left[\begin{array}{c}
\bm{\mu}_{\eta_1} \\
\bm{\mu}_{\eta_2} 
\end{array}\right], \;\;\;\;
L=\left[\begin{array}{cc}
  L_{\eta_1} & \bm{0} \\
  L_{\eta_1,\eta_2} & L_{\eta_2} 
  \end{array} \right],$$
  then $q_{\widetilde{\lambda}}(\etavec_1)$ and $q_{\breve{\lambda}}(\etavec_2|\etavec_1)$
  are both Gaussian densities with parameters
   $\widetilde{\lambdavec}=(\bm{\mu}_{\eta_1}^\top,
\text{vech}(L_{\eta_1})^\top)^\top$, and
$\breve{\lambdavec}=(\bm{\mu}_{\eta_2}^\top,\text{vec}(L_{\eta_1,\eta_2})^\top,\text{vech}(L_{\eta_2})^\top)^\top$, where `vec' and `vech' are the vectorization and 
half-vectorization matrix operators, respectively.
Methods for optimizing a Gaussian variational density parametrized
by a Cholesky factor are well-known in the literature 
(e.g. \citet{titsias2014doubly}, among many others) and
we do not describe this in detail here.  Other fixed form approximating families
can also be used in this framework.

\section{Cutting Feedback for Misspecified Copulas}\label{sec:cutting1}
A copula model can be viewed as a two module system, where the marginals $F_1(\cdot;\thetavec_1),\ldots,F_m(\cdot;\thetavec_m)$ form one module, and the copula $C(\cdot;\psivec)$ is a second module. 
This section discusses cutting feedback when the marginals are thought to be 
adequate, but the
copula function may be misspecified. We label the cut posterior for this case ``type 1'' to distinguish it from that in Section~\ref{sec:cutting2}.

\subsection{Type 1 cut posterior specification}
For the copula model with density at~\eqref{eq:copden}, we set
$\etavec_1=\thetavec$ and $\etavec_2=\psivec$ and factor the 
likelihood as
$g({\cal D}|\bm{\theta},\bm{\psi})=g_1({\cal D}|\bm{\theta})g_2({\cal D}|\bm{\theta},\bm{\psi})$,
where
$$g_1({\cal D}|\bm{\theta})=\prod_{i=1}^n \prod_{j=1}^m f_j(y_{ij};\bm{\theta}_j),\;\;
\mbox{ and }\;\;
g_2({\cal D}|\bm{\theta},\bm{\psi})=\prod_{i=1}^n 
c(F_1(y_{i1};\bm{\theta}_1),\dots, F_m(y_{im};\bm{\theta}_m);\bm{\psi}).$$
Assuming prior density $p(\bm{\theta},\bm{\psi})=p(\bm{\theta})p(\bm{\psi})$, 
with $p(\bm{\theta})=\prod_{j=1}^m p(\bm{\theta}_j)$, then the marginal cut posterior
at  \eqref{cut-theta-marginal}  simplifies to
$$p_{\text{cut}}(\bm{\theta}|{\cal D})=\prod_{j=1}^m p_j(\bm{\theta}_j|\bm{y}_{(j)}),$$
where $\bm{y}_{(j)}=(y_{1j},\dots, y_{nj})^\top$ denotes 
the data for the $j$th marginal and   
$p_j(\bm{\theta}_j|\bm{y}_{(j)})\propto p(\bm{\theta}_j)\prod_{i=1}^n f_j(y_{ij};\bm{\theta}_j).$

The ordinary and cut conditional posterior density is
$$p(\bm{\psi}|\bm{\theta},{\cal D})=
\frac{p(\bm{\psi})\prod_{i=1}^n c(F_1(y_{i1};\bm{\theta}_1),\dots, F_m(y_{im};\bm{\theta}_m);\bm{\psi})}{\bar{g}_2({\cal D}|\bm{\theta})},$$
where $\bar{g}_2({\cal D}|\bm{\theta})= \int p(\bm{\psi})\prod_{i=1}^n c(F_1(y_{i1};\bm{\theta}_1),\dots, F_m(y_{im};\bm{\theta}_m);\bm{\psi}) \,d\bm{\psi}$.
The joint cut posterior is 
\begin{equation}
p_{\text{cut}}(\bm{\theta},\psivec|{\cal D})=p_{\text{cut}}(\bm{\theta}|{\cal D})p(\bm{\psi}|\bm{\theta},{\cal D})\,,\label{eq:jcutpost1}
\end{equation} 
and we consider its computation using both the nested MCMC and variational approaches  in Algorithms~\ref{alg:nestedmcmc} and~\ref{alg:vigeneral}. 

\subsection{Theoretical equivalence with IFM}\label{sec:theory_cut1}
Among the most popular methods for estimating copula models is the  ``inference for margins''(IFM) procedure of~\cite{joexu1996} and~\cite{joe2005}. In IFM, each $\bm{\theta}_j$ is estimated by maximizing the likelihood of the $j$th marginal model, and then the copula parameters $\bm{\psi}$ are estimated by maximizing the likelihood conditional on these estimates. We now 
show for large $n$ the cut posterior
at~\eqref{eq:jcutpost1}
resembles a Bayesian version of IFM.

We first establish that the posterior mean of $p_{\text{cut}}(\bm{\theta},\bm{\psi}|{\cal D})$, denoted as $\bar\be:=\int \etavec p_{\text{cut}}(\be|{\cal D}) d\be$, is asymptotically equivalent to the IFM point estimator. To this end, define $\widehat{\bt}$ as the IFM estimator obtained by first maximizing $\log g_1(\MD|\bm{\theta})$, define $\widehat{\bm{\psi}}$ as the IFM estimator obtained by maximizing $\log g_2(\MD\mid \bp,\widehat{\bt})$ over $\bp$, and set $\widehat{\be}=(\widehat{\bt}^\top,\widehat{\bp}^\top)^\top$. Lemma~\ref{lem:ifm1} below establishes that the IFM point estimator and the cut posterior mean are asymptotically equivalent.
\begin{lemma}\label{lem:ifm1}
	If Assumptions~\ref{ass:cons} and~\ref{ass:dist2} in Part~\ref{app:dtf} of the Web Appendix are satisfied, then
	$
	\sqrt{n}(\overline{\be}-\widehat\be)=o_p(1).
	$
\end{lemma}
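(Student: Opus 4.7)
My plan is to reduce the claim to two applications of a Bernstein--von Mises (BvM) type argument, exploiting the fact that the marginal cut posterior for $\bm{\theta}$ factorizes across $j$ and each factor is built from exactly the likelihood that IFM maximizes in its first stage, while the conditional posterior $p(\bm{\psi}|\bm{\theta},\MD)$ is built from exactly the copula-likelihood IFM maximizes in its second stage. Decomposing $\bar{\be}=(\bar{\bt}^\top,\bar{\bp}^\top)^\top$, I would prove $\sqrt{n}(\bar{\bt}-\widehat{\bt})=o_p(1)$ and $\sqrt{n}(\bar{\bp}-\widehat{\bp})=o_p(1)$ separately, the latter requiring a Taylor expansion that carries the error in $\bar{\bt}$ into the $\bm{\psi}$ coordinate.

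For the $\bm{\theta}$ part, since $p_{\text{cut}}(\bm{\theta}|\MD)=\prod_j p_j(\bm{\theta}_j|\bm{y}_{(j)})$, I would apply a classical BvM theorem under the consistency and regularity assumptions (\ref{ass:cons}--\ref{ass:dist2}) to each marginal factor separately. This gives that $p_j(\bm{\theta}_j|\bm{y}_{(j)})$ concentrates on a $1/\sqrt{n}$ neighbourhood of the marginal MLE $\widehat{\bm{\theta}}_j$, with posterior mean differing from $\widehat{\bm{\theta}}_j$ by $o_p(n^{-1/2})$; stacking across $j$ yields $\sqrt{n}(\bar{\bt}-\widehat{\bt})=o_p(1)$. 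Uniform integrability of $\sqrt{n}(\bm{\theta}_j-\widehat{\bm{\theta}}_j)$ under $p_j(\bm{\theta}_j|\bm{y}_{(j)})$ (implied by the moment/tail assumptions) is needed to pass from weak convergence to convergence of means.

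For the $\bm{\psi}$ part, write $\bar{\bp}=\int \bar{\bp}(\bm{\theta})\,p_{\text{cut}}(\bm{\theta}|\MD)\,d\bm{\theta}$, where $\bar{\bp}(\bm{\theta}):=\int \bm{\psi}\,p(\bm{\psi}|\bm{\theta},\MD)\,d\bm{\psi}$ is the conditional posterior mean. Let $\widehat{\bp}(\bm{\theta})$ denote the maximizer of $\log g_2(\MD|\bm{\psi},\bm{\theta})$ over $\bm{\psi}$, so $\widehat{\bp}(\widehat{\bt})=\widehat{\bp}$. A (conditional) BvM applied to $p(\bm{\psi}|\bm{\theta},\MD)$ for $\bm{\theta}$ in a shrinking $O(n^{-1/2})$ neighbourhood of $\widehat{\bt}$ delivers
\[
\bar{\bp}(\bm{\theta})=\widehat{\bp}(\bm{\theta})+o_p(n^{-1/2}),
\]
uniformly in such $\bm{\theta}$. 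Taylor expanding the smooth map $\bm{\theta}\mapsto \widehat{\bp}(\bm{\theta})$ at $\widehat{\bt}$ and taking expectation under $p_{\text{cut}}(\bm{\theta}|\MD)$ gives
\[
\bar{\bp}=\widehat{\bp}+\frac{\partial \widehat{\bp}}{\partial \bm{\theta}^\top}\bigg|_{\widehat{\bt}}(\bar{\bt}-\widehat{\bt})+O_p\!\left(\mathbb{E}_{\text{cut}}\|\bm{\theta}-\widehat{\bt}\|^2\right)+o_p(n^{-1/2}).
\]
The first-order term is $o_p(n^{-1/2})$ by the $\bm{\theta}$-part, and the quadratic term is $O_p(1/n)$ because the cut posterior variance of $\bm{\theta}$ is $O(1/n)$; combining yields $\sqrt{n}(\bar{\bp}-\widehat{\bp})=o_p(1)$.

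The main obstacle is the \emph{uniform} conditional BvM for $p(\bm{\psi}|\bm{\theta},\MD)$: the centring point $\widehat{\bp}(\bm{\theta})$ and the Fisher-information normalizer both depend on $\bm{\theta}$, so one must verify that the usual BvM remainder, along with uniform integrability needed to convert weak convergence to moment convergence, holds uniformly over $\bm{\theta}$ in a shrinking neighbourhood of $\widehat{\bt}$. I would address this by establishing equicontinuity of the log-likelihood and its second derivative in $\bm{\theta}$ (a standard consequence of the smoothness/dominated-convergence conditions in Assumptions \ref{ass:cons}--\ref{ass:dist2}), which makes the standard BvM proof go through with error terms uniform in $\bm{\theta}$ on shrinking neighbourhoods. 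Once this uniformity is in hand, the remaining steps are mechanical.
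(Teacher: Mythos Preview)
Your argument is correct in outline but takes a genuinely different route from the paper. The paper does \emph{not} decompose into $\bm{\theta}$- and $\bm{\psi}$-blocks. Instead it invokes a single joint Bernstein--von Mises result for the cut posterior (Corollary~1 of Frazier and Nott, 2022, applied to $Q_n(\bm{\theta},\bm{\psi})=\log g_1(\mathcal{D}\mid\bm{\theta})+\log g_2(\mathcal{D}\mid\bm{\psi},\bm{\theta})$), which yields first-moment-weighted total variation convergence
\[
\int \|t\|\,\bigl|p_{\cut}(t\mid\mathcal{D})-\phi_N(t;0,V)\bigr|\,dt=o_p(1),
\]
for the local parameter $t=\sqrt{n}(\be-\be_0)-Z_n/\sqrt{n}$. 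From this, $\sqrt{n}(\bar\be-\be_0)-Z_n/\sqrt{n}=\int t\,p_{\cut}(t\mid\mathcal{D})\,dt$ is bounded by the displayed integral and is therefore $o_p(1)$; the identification $Z_n/\sqrt{n}=\sqrt{n}(\widehat\be-\be_0)+o_p(1)$ from Joe~(2005) finishes in one line. The paper thus sidesteps entirely the uniform conditional BvM you flag as the main obstacle: there is no need to control $p(\bm{\psi}\mid\bm{\theta},\mathcal{D})$ uniformly over a shrinking $\bm{\theta}$-neighbourhood, nor to Taylor-expand $\widehat{\bm{\psi}}(\bm{\theta})$.

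What your approach buys is self-containment---you do not need to import the Frazier--Nott joint cut-BvM as a black box---at the price of having to establish the uniform conditional BvM (including uniform integrability so that weak convergence transfers to moments) and to control the tail contribution to $\bar{\bm{\psi}}=\int\bar{\bm{\psi}}(\bm{\theta})\,p_{\cut}(\bm{\theta}\mid\mathcal{D})\,d\bm{\theta}$ from $\bm{\theta}$ outside the shrinking neighbourhood, which you do not address explicitly. These are standard but genuinely additional pieces of work. The paper's route is shorter precisely because the joint result already packages both the $\bm{\theta}$- and $\bm{\psi}$-directions and the first-moment control needed for mean convergence.
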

\noindent Assumptions \ref{ass:cons} and~\ref{ass:dist2} are similar to the standard regularity conditions employed in two-step copula modeling to deduce asymptotic normality of the IFM point estimator in \cite{joe2005}; see Part~\ref{app:dtf} of the Web Appendix for their specification and a 
detailed discussion. 

Lemma~\ref{lem:ifm1} does not address the accuracy with which the cut posterior quantifies uncertainty. To establish this we require the following additional definitions and observations.   
Let $P_0$ denote the true data generating process (DGP) for the observed data, and $p_0$ its density, then under Assumptions \ref{ass:cons} and \ref{ass:dist2} in Part~\ref{app:dtf} of the Web Appendix, it can be shown that both $\bar{\bt}$ and $\widehat{\bt}$ are consistent estimators of 
$$\bt_0=\argmin_{\bt}D_{\text{KL}}\left(p_0\,||\,g_1(\cdot\mid\bt)\right)\,.
$$
That is, $g_1(\cdot\mid\bt_0)$ is the closest element of the class $\{\bt: g_1(\cdot\mid\bt)\}$ to  $P_0$ in terms of KL divergence, and $\bt_0$ is the corresponding pseudo-true value. 
Further, define the following matrix of second derivatives for the marginal model parameters (i.e., $\bt$) :
\begin{flalign*}
	\mathcal{I}=-\lim_{n\rightarrow+\infty}n^{-1}\E\left(\nabla^2_{\bt\bt}\log g_1(\MD\mid\bt_{0})\right)\,, 
\end{flalign*}
where $\E$ is the expectation with respect to $P_0$.
Then Lemma~\ref{lem:two} below shows how the cut posterior for $\bt$ quantifies uncertainty.
\begin{lemma}\label{lem:two}
	If Assumptions \ref{ass:cons} and \ref{ass:dist2} in Part~\ref{app:dtf} of the Web Appendix are satisfied, then
	$$
\int\left|p_{\cut}(\bt|\MD)
-\phi_N\left(\bt;\widehat\bt,[n\mathcal{I}]^{-1}\right)\right| d\bt
=o_p(1). $$ 	
\end{lemma}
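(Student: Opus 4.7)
The plan is to reduce the statement to $m$ separate applications of a standard (possibly misspecified) Bernstein--von Mises (BvM) theorem and then recombine via the triangle inequality on $L^1$ distance.

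First, I would exploit the product structure of the cut posterior. Because the prior factorizes as $p(\bt)=\prod_{j=1}^m p(\bt_j)$ and the pseudo-likelihood $g_1(\MD\mid\bt)=\prod_{j=1}^m\prod_{i=1}^n f_j(y_{ij};\bt_j)$ is a product in $j$, the cut marginal satisfies $p_{\cut}(\bt\mid\MD)=\prod_{j=1}^m p_j(\bt_j\mid\bm{y}_{(j)})$. The same additivity makes $\log g_1$ a sum of terms in disjoint parameter blocks, so the limiting Hessian $\mathcal{I}$ is block-diagonal with blocks $\mathcal{I}_j=-\lim_n n^{-1}\E[\nabla^2_{\bt_j\bt_j}\log g_{1,j}(\bm{y}_{(j)}\mid\bt_{0,j})]$, and the marginal pseudo-true values assemble as $\bt_0=(\bt_{0,1}^\top,\ldots,\bt_{0,m}^\top)^\top$. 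The same is true of the IFM estimator: $\widehat\bt=(\widehat\bt_1^\top,\ldots,\widehat\bt_m^\top)^\top$, where $\widehat\bt_j$ maximizes the $j$-th marginal log-likelihood. Consequently $\phi_N(\bt;\widehat\bt,[n\mathcal{I}]^{-1})=\prod_{j=1}^m \phi_N(\bt_j;\widehat\bt_j,[n\mathcal{I}_j]^{-1})$.

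Second, I would apply BvM under possible misspecification to each $p_j(\bt_j\mid\bm{y}_{(j)})$. This is an ordinary posterior for a (generally misspecified) parametric model with i.i.d.\ data $\bm{y}_{(j)}$, pseudo-true value $\bt_{0,j}$ and limiting Hessian $\mathcal{I}_j$. Under Assumptions \ref{ass:cons}--\ref{ass:dist2} (which are designed to give posterior concentration near $\bt_{0,j}$, local asymptotic normality of $\log g_{1,j}$, and prior positivity/smoothness at $\bt_{0,j}$), the Kleijn--van der Vaart misspecified BvM yields, for each $j$,
\begin{equation*}
\int \left|p_j(\bt_j\mid\bm{y}_{(j)}) - \phi_N\!\left(\bt_j;\widehat\bt_j,[n\mathcal{I}_j]^{-1}\right)\right| d\bt_j = o_p(1).
\end{equation*}

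Third, I would combine the marginal BvM statements using the standard product-measure inequality: for probability densities $a_j,b_j$,
\begin{equation*}
\int \left|\prod_{j=1}^m a_j(\bt_j) - \prod_{j=1}^m b_j(\bt_j)\right| d\bt \;\le\; \sum_{j=1}^m \int |a_j(\bt_j)-b_j(\bt_j)|\,d\bt_j,
\end{equation*}
which follows from a telescoping decomposition and Fubini. Setting $a_j=p_j(\bt_j\mid\bm{y}_{(j)})$ and $b_j=\phi_N(\bt_j;\widehat\bt_j,[n\mathcal{I}_j]^{-1})$ and invoking the previous step for each $j$ gives the conclusion.

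The main obstacle is the second step: checking that the regularity conditions collected in Web Appendix \ref{app:dtf} are in fact strong enough to invoke the misspecified BvM for each marginal submodel, in particular (i) an exponentially consistent test against complements of shrinking Kullback--Leibler neighborhoods of $\bt_{0,j}$, (ii) the LAN expansion of $\log g_{1,j}(\cdot\mid\bt_j)$ around $\bt_{0,j}$ with quadratic form governed by $\mathcal{I}_j$, and (iii) non-vanishing prior density at $\bt_{0,j}$. The first and third pieces are mild, while verifying LAN with the Hessian-based (rather than sandwich) scaling is what ties the limiting covariance cleanly to $[n\mathcal{I}]^{-1}$ and is the same input that drives Lemma~\ref{lem:ifm1}; re-using that framework is what keeps this argument short.
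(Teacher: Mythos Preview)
Your argument is correct and self-contained, but it is a genuinely different route from the paper's. The paper does not work margin by margin. Instead it first proves a \emph{joint} Bernstein--von Mises result (Theorem~\ref{thm:joint}) for the full cut posterior $p_{\cut}(\be\mid\MD)$ with $\be=(\bt^\top,\bp^\top)^\top$, showing it is close in total variation to $\phi_N(\be;\widehat\be,n^{-1}V)$, by appealing to Corollary~1 of Frazier and Nott (2022) applied to $Q_n(\bt,\bp)=\log g_1(\MD\mid\bt)+\log g_2(\MD\mid\bp,\bt)$. Lemma~\ref{lem:two} then follows in one line by marginalising the Gaussian in $\bp$ and reading off the upper-left block $\mathcal{I}^{-1}$ of $V$, together with the fact that total variation does not increase under marginalisation.

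What each approach buys: your componentwise argument is more elementary and makes the block-diagonal structure of $\mathcal{I}$ and the product form of $p_{\cut}(\bt\mid\MD)$ do all the work, so you only need the classical misspecified BvM (Kleijn--van der Vaart) for each marginal model separately. The paper's joint route is less transparent for Lemma~\ref{lem:two} alone, but it yields Lemma~\ref{lem:ifm2} for $\bp$ essentially for free from the same Theorem~\ref{thm:joint}, whereas your decomposition gives nothing about $\bp$ and a separate argument would still be needed there. Your caveat about verifying the Kleijn--van der Vaart hypotheses from Assumptions~\ref{ass:cons}--\ref{ass:dist2} is fair; the paper sidesteps this by delegating the analytic checks to the Frazier--Nott framework.
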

This result shows that asymptotically the cut posterior for $\thetavec$ resembles a Gaussian distribution centred at the IFM $\widehat\bt$, and with variance $\mathcal{I}^{-1}/n$. Therefore, when the marginals are correctly specified, the type 1 cut posterior for $\bt$ correctly quantifies uncertainty\footnote{By this we mean that a level $(1-\alpha)$ credible set asymptotically has frequentist coverage at the $(1-\alpha)$ level under $P_0$; i.e., Bayesian credible sets agree asymptotically with frequentist confidence sets.} for the unknown parameter value $\bt_0$, even if the copula function is misspecified. That is, $p_{\cut}(\bt|\MD)$ delivers inferences that are asymptotically the same as IFM and also correctly quantifies uncertainty.  

To understand how the marginal cut posterior for $\bp$, 
$$p_\cut(\bp|\MD)=\int p_{\text{cut}}(\bm{\theta},\psivec|{\cal D})d\bt=\int p_{\text{cut}}(\bm{\theta}|{\cal D})p(\bm{\psi}|\bm{\theta},{\cal D})d\bt,$$
quantifies uncertainty, define 
$$\bp_0=\argmin_{\bp}D_{\text{KL}}\left({p^{}_0}\,||\,g_2(\cdot\mid\bp,\bt_0)\right),
$$which is the pseudo-true value for the copula parameters $\bp$ when the unknown $\bt$ is replaced by $\bt_0$, and define the following matrix of second derivatives:
\begin{flalign*}
\mathcal{M}=\begin{pmatrix}	\mathcal{M}_{\bt\bt}&\mathcal{M}_{\bt\bp}\\\mathcal{M}_{\bp\bt}&\mathcal{M}_{\bp\bp}
\end{pmatrix},\text{ where }\mathcal{M}_{\bt\bp}=-\lim_{n\rightarrow+\infty}n^{-1}\E\left(\nabla^2_{\bt\bp}\log g_2(\MD\mid\bt_{0},\bp_0)\right),
\end{flalign*}
with $\mathcal{M}_{\bt\bt}$ and $\mathcal{M}_{\bp\bp}$ defined analogously; then the following lemma holds.
\begin{lemma}\label{lem:ifm2}
	If Assumptions \ref{ass:cons} and \ref{ass:dist2} in Part~\ref{app:dtf} of the Web Appendix are satisfied, then
	$$
\int\left|p_{\cut}(\bp|\MD)-
\phi_N\left(\bp;\widehat{\bp},n^{-1}(\mathcal{M}_{\bp\bp}^{-1}+\mathcal{M}_{\bp\bp}^{-1}\mathcal{M}_{\bp\bt}\mathcal{I}^{-1}\mathcal{M}_{\bt\bp}\mathcal{M}_{\bp\bp}^{-1})\right)\right| d  \bp=o_p(1).
	$$	
\end{lemma}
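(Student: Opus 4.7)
The plan is to decompose the marginal cut posterior via
$$p_{\cut}(\bp|\MD)=\int p_{\cut}(\bt|\MD)\,p(\bp|\bt,\MD)\,d\bt,$$
and to replace each factor by a Gaussian approximation whose parameters depend smoothly on $\bt$. Lemma~\ref{lem:two} already gives the first replacement: $p_{\cut}(\bt|\MD)$ is within $o_p(1)$ in total variation of $\phi_N(\bt;\widehat\bt,[n\mathcal{I}]^{-1})$. For the second factor, I would invoke a conditional Bernstein--von Mises result applied to the (ordinary) conditional posterior for $\bp$ under $g_2$ with $\bt$ held fixed. Under Assumptions~\ref{ass:cons}--\ref{ass:dist2}, this yields, uniformly for $\bt$ in an $O(n^{-1/2})$ neighbourhood of $\widehat\bt$, that $p(\bp|\bt,\MD)$ is well-approximated in total variation by $\phi_N(\bp;\widehat\bp(\bt),[n\mathcal{M}_{\bp\bp}]^{-1})$, where $\widehat\bp(\bt)=\argmax_{\bp}\log g_2(\MD|\bt,\bp)$ and $\widehat\bp=\widehat\bp(\widehat\bt)$.

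Next, I would Taylor-expand $\widehat\bp(\bt)$ about $\widehat\bt$ using the implicit function theorem applied to the first-order condition $\nabla_{\bp}\log g_2(\MD|\bt,\widehat\bp(\bt))=0$. Differentiating in $\bt$ and solving gives
$$\widehat\bp(\bt)=\widehat\bp-\mathcal{M}_{\bp\bp}^{-1}\mathcal{M}_{\bp\bt}(\bt-\widehat\bt)+o_p(n^{-1/2}),$$
uniformly on shrinking neighbourhoods, using standard uniform convergence of the Hessians to $\mathcal{M}_{\bp\bp}$ and $\mathcal{M}_{\bp\bt}$ under Assumption~\ref{ass:dist2}. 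Substituting this expansion into the conditional Gaussian approximation and convolving with $\phi_N(\bt;\widehat\bt,[n\mathcal{I}]^{-1})$ produces a Gaussian in $\bp$: the mean remains $\widehat\bp$ because the linear correction integrates to zero against the symmetric $\bt$-marginal, while the covariance is the sum of the conditional covariance $n^{-1}\mathcal{M}_{\bp\bp}^{-1}$ and the variance of the linear term, namely $n^{-1}\mathcal{M}_{\bp\bp}^{-1}\mathcal{M}_{\bp\bt}\mathcal{I}^{-1}\mathcal{M}_{\bt\bp}\mathcal{M}_{\bp\bp}^{-1}$. This matches the claimed limiting variance.

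The last step is to bound the total variation distance between $p_{\cut}(\bp|\MD)$ and this Gaussian limit. By the triangle inequality, the error splits into three contributions: (i) swapping $p_{\cut}(\bt|\MD)$ for its Gaussian approximation, controlled by Lemma~\ref{lem:two} and contraction of total variation under marginalization; (ii) swapping $p(\bp|\bt,\MD)$ for its conditional Gaussian approximation, integrated against the $\bt$-marginal; and (iii) the remainder from the Taylor expansion of $\widehat\bp(\bt)$, which contributes $o_p(1)$ because it is $o_p(n^{-1/2})$ on a set of $\bt$-probability tending to one and these deviations act on the $\bp$ location at the same $n^{-1/2}$ scale as the Gaussian spread. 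For (ii), I would split the $\bt$ integral into a shrinking neighbourhood $\|\bt-\widehat\bt\|\leq M_n n^{-1/2}$ with $M_n\to\infty$ slowly, where the uniform conditional BvM delivers the $o_p(1)$ bound, and a complement where both $p_{\cut}(\bt|\MD)$ and its Gaussian approximation place asymptotically negligible mass.

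The main obstacle is establishing the \emph{uniform} conditional BvM for $p(\bp|\bt,\MD)$ over a shrinking neighbourhood of $\widehat\bt$, together with uniform stochastic equicontinuity of the Hessians of $\log g_2$, because these are exactly the ingredients that allow us to interchange the Gaussian approximations with the $\bt$-integration without incurring an error that dominates $o_p(1)$. Handling (iii) carefully --- in particular showing that a $o_p(n^{-1/2})$ shift in the conditional mean translates to $o_p(1)$ in total variation after marginalization --- also requires a short Lipschitz-type argument for Gaussian densities in their location parameter, but this is standard once the uniform conditional BvM is in hand.
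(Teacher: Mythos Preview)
Your argument is correct and lands on the right limiting covariance, but it proceeds differently from the paper. The paper does not work sequentially through $p_{\cut}(\bt|\MD)$ and $p(\bp|\bt,\MD)$; instead it first proves a \emph{joint} Bernstein--von Mises statement for the full cut posterior $p_{\cut}(\bt,\bp|\MD)$ (Theorem~\ref{thm:joint}, obtained by showing that the sum $\log g_1(\MD|\bt)+\log g_2(\MD|\bt,\bp)$ satisfies the conditions of Corollary~1 in Frazier and Nott, 2022), giving total-variation closeness to $\phi_N(\be;\widehat\be,n^{-1}V)$ for an explicit block matrix $V$. Lemmas~\ref{lem:two} and~\ref{lem:ifm2} then fall out in one line by marginalising the joint Gaussian, since the $\bp$-block of $V$ is exactly $\mathcal{M}_{\bp\bp}^{-1}+\mathcal{M}_{\bp\bp}^{-1}\mathcal{M}_{\bp\bt}\mathcal{I}^{-1}\mathcal{M}_{\bt\bp}\mathcal{M}_{\bp\bp}^{-1}$. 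Your conditional-then-convolve route is more self-contained and makes the law-of-total-variance structure of the limiting covariance explicit, but the uniform conditional BvM you correctly identify as the main obstacle is precisely the technical content that the cited joint result already packages; the paper's approach sidesteps having to establish that uniformity (and the Lipschitz control of the location shift) by hand.
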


Lemma \ref{lem:ifm2} demonstrates that the variability for the marginal cut posterior of $\bp$ depends on the variability of the cut posterior for $\bt$. Hence, uncertainty flows from $p_\cut(\bt|\MD)$ to $p_\cut(\bp|\MD)$, but not the other way. This implies that  if the cut posterior for  $\bp$ is to correctly quantify uncertainty, then both the marginal components and the copula function must be well-specified.

\subsection{Simulation~1}
\label{sec::simExample1}
A simulation study compares the accuracy of the type 1 cut posterior to that of the conventional (i.e. uncut) posterior and IFM. For each sample size $n \in \{100, 500, 1000\}$ a total of $S=500$ datasets are generated from a bivariate copula model. The marginal $f_1$ is a log-normal distribution, with mean and variance parameters $\mu=1$ and $\sigma^2=1$, while the marginal $f_2$ is a gamma 
distribution with shape and rate parameters $\alpha=7$ and $\beta=3$. A t-copula \citep{demarta2005} is used with Kendall's tau $\tau=0.7$ and unity degrees of freedom parameter.

For each dataset, we fit a
copula model with the correct marginal
distributional forms, along with a bivariate Gumbel
copula.
Thus, the copula is misspecified, but can still capture correlation, as measured by Kendall's tau $\tau$, equal to that of the DGP. 
We assign vague proper
priors $\mu \sim N(0,100^2)$, $\sigma^2 \sim \text{Half-Normal}(0,100^2)$, $\alpha \sim \text{Half-Cauchy}(0,5)$, $\beta \sim \text{Half-Cauchy}(0,5)$, and $\tau  \sim \text{Uniform}(0,1)$, where $\text{Half-Cauchy}(m,s)$ 
is a half Cauchy distribution with location $m$ and scale $s$.
Both the outlined variational methodology and MCMC algorithms are used, with details given in Part~A1 of the Web Appendix. This results in four Bayesian posteriors, the means of which are used as point estimators. IFM is also used for comparison.

\begin{table}[htbp]
\renewcommand\arraystretch{1.15}
\caption{Parameter Point Estimation Accuracy in Simulation 1 ($n=1000$)} 
\label{tab:sim1biasrmse}    
\begin{center}
\begin{tabular}{ccccccccc}
            \hline\hline
            & \multicolumn{5}{c}{Misspecified Copula Fit} & &\multicolumn{2}{c}{Correct Copula Fit} \\ \cline{2-6}\cline{8-9} 
            & Uncut/ & Cut/ & IFM & Uncut/ & Cut/ & &Uncut/ &Cut/\\
            & MCMC & MCMC & &VI &VI & &MCMC &MCMC\\ \hline
            Parameter &{\em Bias} & & & & & & &\\ \cline{2-6}\cline{8-9} 
            $\mu$ & 0.0102 & {\bf 0.0019} &  0.0039 & 0.0089 & {\bf 0.0019} & &0.0014 &0.0019\\ 
            $\sigma^2$ & 0.0360 & 0.0069 &  {\bf 0.0012} & 0.0386 & 0.0288  & &0.0057 &0.0069\\ 
            $\alpha$ & -0.1803 & -0.0073 & 0.0137  & -0.1869 & {\bf -0.0033} & &-0.0035 &-0.0073\\
            $\beta$ & -0.0924 & -0.0046 & 0.0045  & -0.0938 & {\bf -0.0028} & &-0.0019 &-0.0046\\
            $\tau$ & 0.0192 & 0.0107 & 0.0139  & 0.0176 & {\bf 0.0090} & &0.0002 &-0.0101\\
            &{\em RMSE} & & & & & & &\\ \cline{2-6}\cline{8-9}
            $\mu$ & 0.0327 & {\bf 0.0295} & 0.0296  & 0.0298 & 0.0307 & & 0.0263 & 0.0295\\
            $\sigma^2$ & 0.0612 & 0.0472 &  {\bf 0.0333}  & 0.0620 & 0.0560 & & 0.0447 & 0.0472\\
            $\alpha$ & 0.3774 & {\bf 0.3114} & 0.3132  & 0.3580 & 0.3114 & & 0.3048 & 0.3114\\
            $\beta$ & 0.1689 & {\bf 0.1406} & 0.1413  & 0.1644 & 0.1407 & & 0.1395 & 0.1406\\
            $\tau$ & 0.0244 & 0.0181 & 0.0209  & 0.0232 & {\bf 0.0175} & & 0.0132 & 0.0182\\ \hline\hline
\end{tabular}
\end{center}
The bias and RMSE values of different point estimators computed over the $S=500$ simulation replicates, with the lowest values
in bold. Results on the left are where the misspecified copula is fit using the posterior mean from the conventional (i.e. uncut) and cut (type 1) posteriors, computed exactly using MCMC or approximately using VI. IFM is included for comparison. Results on the right are where the correct copula is fit using the 
posterior mean from the conventional (i.e. uncut) and cut (type 1) posteriors
computed exactly using MCMC. 
\end{table}

To measure estimation accuracy of the true parameter
values in the DGP, the bias and root mean square error (RMSE) is evaluated over the $S$ replicates. Table~\ref{tab:sim1biasrmse} (left-hand side) reports these for the case where $n = 1000$, and we make four observations. First, the cut posterior has lower bias and RMSE than the conventional posterior for all parameters, so that cutting feedback from the misspecified copula improves estimation accuracy. Second,  $\tau$ is estimated more accurately using its cut posterior than IFM. Third, the variational and exact posterior results are similar, suggesting the former is an accurate approximation. (Although, if a Gaussian VA underestimates uncertainty for other target posteriors, then a richer variational family can also be used.) Last, IFM provides a more accurate estimate of $\sigma^2$, but
is less accurate than the cut posterior for all other parameters.

We also consider accuracy when the correctly specified
copula model (i.e. a t-copula with the correct degrees of freedom) is fit. Table~\ref{tab:sim1biasrmse} (right-hand side) reports the bias
and RMSE for both the cut and conventional posterior in this case, both estimated
exactly using MCMC and the same uniform prior on $-1<\tau<1$. The 
cut posterior is only slightly less accurate than the conventional (i.e. uncut) posterior.

To assess the accuracy of the marginal posterior distributions we compute the coverage of their 95\% credible intervals. To assess
the accuracy of the point estimates of the copula model
components (in addition to their parameter values) we compute their 
predictive KL divergences. The latter is defined for marginal $j=1,2$ as
$$
\text{KL}_j = \int f_j (y ; \widehat{\thetavec}_j) \left [ \log  f_j (y ; \widehat{\thetavec}_j) - \log f_j^\star (y) \right ] \; \text{d} y,
$$
where $\widehat{\thetavec}_j$ is a point estimate of $\thetavec_j$, and $f_j^\star$ is the true marginal density of the DGP. 
The predictive KL divergence for the copula is defined as
$$
\text{KL}_{cop} = \int \int c (u, v ; \widehat{\tau}) \left [ \log  c (u, v ; \widehat{\tau}) - \log c^\star (u, v)  \right ] \; \text{d} u \text{d} v\,,
$$
where $\widehat{\tau}$ is a point estimate of $\tau$, and $c^\star$ is the true copula density for the DGP. The integrals above are computed numerically.

Table~\ref{tab:sim1covkl} reports the coverage probability of the credible intervals, along with the mean of the KL divergence metrics over the $S$ replicates. The results further confirm that under misspecification of the copula function (left-hand side of the table), the cut posterior is substantially more accurate than the conventional (uncut) posterior, and that the variational and exact posteriors are very similar.
Despite IFM estimating $\sigma^2$ slightly more
accurately than the cut posterior, the cut posterior either equals or out-performs estimation accuracy of all model components as measured by the KL divergences. Again, we see that when the correct model is fit (right-hand side of the table), the cut posterior 
is only slightly less accurate than the conventional posterior.

\begin{table}[htbp]
\caption{Posterior Coverage and Copula Model Accuracy in Simulation 1 ($n=1000$)} \label{tab:sim1covkl}

\begin{center}
\renewcommand\arraystretch{1.15}
\begin{tabular}{ccccccccc}
            \hline\hline
& \multicolumn{5}{c}{Misspecified Copula Fit} & &\multicolumn{2}{c}{Correct Copula Fit} \\ \cline{2-6}\cline{8-9} 
& Uncut/ & Cut/ & IFM & Uncut/ & Cut/ & &Uncut/ &Cut/\\
& MCMC & MCMC & &VI &VI & &MCMC &MCMC\\ \hline
Parameter &\multicolumn{3}{l}{\em Coverage Probabilities} & & & & &\\ \cline{2-6}\cline{8-9} 
            $\mu$ & 0.9600 & 0.9660 & - & 0.9760 & {\bf 0.9520} & &0.9510 &0.9660\\ 
            $\sigma^2$ & 0.8660 & 0.9380 & - & 0.9780 & {\bf 0.9560} & &0.9460 & 0.9380\\
            $\alpha$ & 0.8920 & {\bf 0.9560} & - & 0.8240 & 0.9760 & &0.9460 &0.9560\\
            $\beta$ & 0.8660 & {\bf 0.9380} & - & 0.9840 & 0.9680 & &0.9460 &0.9380\\
            $\tau$ & 0.5400 & 0.7440 & - & 0.7060 & {\bf 0.7780} & &0.9500 &0.9260\\ 
Component &\multicolumn{3}{l}{\em Mean Predictive  KL Divergence} & & & & &\\ \cline{2-6}\cline{8-9} 
            $f_1$ & 0.0014 & {\bf 0.0010} &  {\bf 0.0010} & 0.0014 & 0.0012 & & 0.0008 & 0.0010\\
            $f_2$ & 0.0014 & {\bf 0.0010} & 0.0013 & 0.0012 & {\bf 0.0010} & & 0.0010 & 0.0010\\ 
            $c$ & 0.1488 & 0.1424 & 0.1457 & 0.1499 & {\bf 0.1411} & & 0.0008 & 0.0012\\ \hline\hline
    \end{tabular}
\end{center}
Top: coverage probabilities for 95\% credible intervals for each parameter, with
those closest to 0.95 in bold. Bottom: the mean predictive KL divergence for the two marginals and the copula density for their estimate, with the lowest values 
in bold. 
Results on the left are where the misspecified copula is fit using the conventional (i.e. uncut) and cut (type 1) posteriors, computed exactly using MCMC or approximately using VI. IFM is included for comparison. Results on the right are where the correct copula is fit using the conventional (i.e. uncut) and cut (type 1) posteriors
computed exactly using MCMC. 
\end{table}     

Results for the cases where $n = 100$ and $n=500$ are reported
in Part~A4 of the Web Appendix, and are very similar to those for $n=1000$.

\section{Cutting Feedback for Misspecified Marginals}\label{sec:cutting2}
This section discusses cutting feedback when the copula function $C(\cdot;\psivec)$ is adequate, but the 
marginals $F_1(\cdot;\thetavec_1),\ldots,F_m(\cdot;\thetavec_m)$ are misspecified. 
We label the cut posterior for this case ``type 2'' and use a pseudo likelihood of
the rank data for its specification. Evaluation of this cut posterior
is more challenging than that in Section~\ref{sec:cutting1}, and to do so in higher dimensions we introduce an 
extension of this pseudo likelihood~\citep{PitChaKoh2006,hoff07,smith2012estimation} 
and then define a cut version of the resulting augmented posterior which is both tractable
and has the desired
type 2 cut posterior as its marginal.

\subsection{Type 2 cut posterior specification}
Setting $\etavec_1=\psivec$ and $\etavec_2=\thetavec$, to define the marginal
cut posterior for $\psivec$ we use a pseudo likelihood based on the rank data.
For each $y_{ij}$ define its rank within marginal $j$ as $r(y_{ij})$ \footnote{For example, in the absence of ties this is $r(y_{ij})=\sum_{k=1}^n \mathds{1}(y_{kj}\leq y_{ij})$.} and denote all the rank data as $r({\cal D})=\{r(y_{ij});i=1,\ldots,n, j=1,\ldots,m\}$. We employ 
the following probability mass function for the (discrete-valued) ranks 
\begin{equation}
	p_{\text{PL}}(r({\cal D})|\psivec):=\prod_{i=1}^n \Delta_{a_{i1}}^{b_{i1}}\cdots  \Delta_{a_{im}}^{b_{im}} C(\vvec;\psivec)\,,\label{eq:rlike}
\end{equation}
where $a_{ij}=(r(y_{ij})-1)/(n+1)$, $b_{ij}=r(y_{ij})/(n+1)$, $\vvec=(v_1,\ldots,v_m)^\top$\,, and where
\begin{equation*}
	\Delta_{a_{ij}}^{b_{ij}} C(\vvec;\psivec):=C(v_1,\ldots,v_{j-1},b_{ij},v_{j+1},\ldots,v_m;\psivec)-
	C(v_1,\ldots,v_{j-1},a_{ij},v_{j+1},\ldots,v_m;\psivec)\,,
\end{equation*}
is a differencing operator over element $j$~\citep[p.43]{nelsen06}. 
This is the likelihood under the assumption that each marginal is an empirical distribution function. It is related to the ``rank likelihood'' that is obtained from the exact distribution of the ranks; for example, see~\cite{hoff07} for specification of the rank likelihood of a Gaussian copula. However, as we discuss
later, it is more tractable than a rank likelihood.  It is also related to the popular pseudo-likelihood in~\cite{genest95}, but corrects for the discrete nature of the rank data.

The pseudo rank likelihood at~\eqref{eq:rlike} does not depend on the marginal parameters $\bt$ because the ranks are a strictly increasing transformation of $\MD$ and are unaffected by the marginal distributions. Therefore it can be used to define
a marginal cut posterior for $\psivec$ with density
\begin{equation}
p_{\text{cut}}(\psivec|{\cal D})=\frac{p_{\text{PL}}(r(D)|\psivec)p(\psivec)}{\int p_{\text{PL}}(r(D)|\psivec')p(\psivec')d\psivec'}\,.\label{eq:mcutpsi}	
\end{equation}

This definition fits into the two module system described in Section~\ref{sec:cfm} by considering
the factorization at~\eqref{eq:cflikefactor} with $g_1({\cal D}|\psivec)=p_{\text{PL}}(r(D)|\psivec)$ and $g_2({\cal D}|\psivec,\thetavec)=p({\cal D}|\psivec,\thetavec)/p_{\text{PL}}(r(D)|\psivec)$. With these definitions, the feedback
term is 
\[
\bar{g}_2({\cal D}|\psivec)=\int g_2({\cal D}|\psivec,\thetavec)p(\thetavec)d\thetavec=
\int \frac{p({\cal D}|\psivec,\thetavec)}{p_{\text{PL}}(r(D)|\psivec)}p(\thetavec)d\thetavec\,.
\]
In the two module system, the cut posterior at~\eqref{eq:mcutpsi} is obtained
by removing this feedback term. Notice that if the likelihood $p({\cal D}|\psivec,\thetavec)$ is close to the
pseudo rank likelihood $p_{\text{PL}}(r(D)|\psivec)$, then $\bar{g}_2({\cal D}|\psivec)\approx 1$ 
and the cut and ordinary posteriors
for $\psivec$ will also be close. Conversely, if the likelihood and the pseudo rank likelihood
deviate, the cut and ordinary posteriors will differ. 

The joint cut posterior is defined as
\begin{equation}
p_{\text{cut}}(\psivec,\thetavec|{\cal D})=p_{\text{cut}}(\psivec|{\cal D})p(\thetavec|\psivec,{\cal D})\,,
\label{eq:sec4jntcutpost}
\end{equation}
where the conditional $p(\thetavec|\psivec,{\cal D})=p({\cal D}|\thetavec,\psivec)p(\thetavec)/\int p({\cal D}|\thetavec',\psivec)
p(\thetavec') d\thetavec'$. The normalizing constant of this conditional is not computed when implementing
Algorithm~\ref{alg:vigeneral}.


\subsection{Theoretical behavior of the type 2 cut posterior}\label{sec:theory_cut2}
An advantage of the pseudo rank likelihood at~\eqref{eq:rlike} is that it is both computationally and 
theoretically more tractable than the rank likelihood of~\cite{hoff07} and others.  
As~\cite{hoff2014information} state,  the
rank likelihood ``is the integral of a
copula density over a complicated set defined by multivariate order constraints'', making it  intractable and complicating the derivation of theoretical results for parameter inference.
For example,~\cite{hoff2014information} control an accurate approximation of the rank likelihood in order to deduce their theoretical results.

In contrast, the type 2 cut posterior depends on~\eqref{eq:rlike} and the parametric likelihood $p(\MD|\bt,\bp)$, both of which are tractable. This allows direct analysis of the behavior of the cut posteriors $p_{\text{cut}}(\psivec|{\cal D})$ and $p_{\text{cut}}(\psivec,\bt|{\cal D})$ in \eqref{eq:mcutpsi} and~\eqref{eq:sec4jntcutpost}. To this end, let $M_n(\bp):=\log p_{\text{PL}}(r(\MD)|\bp)$, with $\mathcal{M}(\bp)=\lim_{n
\rightarrow \infty}M_n(\bp)/(1+n)$; further define $\widehat\bp_r=\argmax_{\bp} M_n(\bp)$, $\bp_\star=\argmax_{\bp}\mathcal{M}(\bp)$, and $\mathcal{M}_{\bp\bp}(\bp_\star)=-\nabla_{\bp\bp}^2\mathcal{M}(\bp_\star)$. Theorem~\ref{thm:ranks} below characterizes the behavior of $p_{\cut}(\bp|\MD)$.
\begin{theorem}\label{thm:ranks} 
If Assumptions \ref{ass:DGP1}-\ref{ass:crit1} in Part~\ref{app:rank} of the Web Appendix are satisfied, then
	$$
\int\left|p_{\cut}(\bp|\MD)-\phi_N\left(\bp;\widehat\bp_{r},n^{-1}\mathcal{M}_{\bp\bp}(\bp_\star)^{-1}\right)\right|d\bp=o_p(1).
$$ 
\end{theorem}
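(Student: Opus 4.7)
The key observation is that the marginal type~2 cut posterior
\[
p_{\cut}(\bp\mid \MD)\;\propto\; p(r(\MD)\mid \bp)\,p(\bp)
\]
is, by construction, a bona fide Bayesian posterior whose sampling model is the rank likelihood $p(r(\MD)\mid \bp)$ with prior $p(\bp)$. The parameter $\bt$ has been eliminated entirely, and no intractable term $p(\MD\mid\bt,\bp,r(\MD))$ appears. Consequently the conclusion is exactly a Bernstein--von Mises (BvM) statement for the rank-likelihood posterior, and the plan is to follow the classical BvM proof strategy, using the assumptions in Web Appendix~\ref{app:rank} to supply the nonstandard regularity for $M_n(\bp)=\log p(r(\MD)\mid\bp)$.

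First, I would establish posterior concentration of $p_{\cut}(\bp\mid\MD)$ at $\bp_\star$. Uniform convergence of $M_n(\bp)/(n+1)$ to $\mathcal{M}(\bp)$, together with $\bp_\star$ being a well-separated maximizer of $\mathcal{M}$, yields by standard M-estimator arguments that $\widehat\bp_r\overset{p}{\to}\bp_\star$; an entirely parallel argument using the prior positivity at $\bp_\star$ yields $p_{\cut}(\|\bp-\bp_\star\|>\epsilon\mid\MD)\overset{p}{\to}0$ for every $\epsilon>0$. This reduces the problem to a local analysis in a shrinking neighborhood of $\widehat\bp_r$. Next, I would Taylor expand $M_n$ about $\widehat\bp_r$: using $\nabla_{\bp}M_n(\widehat\bp_r)=0$ (interior maximum) and uniform convergence of the normalized Hessian $-n^{-1}\nabla^2_{\bp\bp}M_n(\bp)$ to $\mathcal{M}_{\bp\bp}(\bp_\star)$ on a neighborhood of $\bp_\star$, one obtains
\[
M_n(\bp)=M_n(\widehat\bp_r)-\tfrac{n}{2}(\bp-\widehat\bp_r)^\top\mathcal{M}_{\bp\bp}(\bp_\star)(\bp-\widehat\bp_r)+R_n(\bp),
\]
with $R_n(\bp)=o_p(1)$ uniformly on sets $\{\bp:\sqrt n\|\bp-\widehat\bp_r\|\leq K\}$ for each fixed $K$.

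Third, I would perform the usual change of variables $\hvec=\sqrt{n}(\bp-\widehat\bp_r)$. On the ball $\|\hvec\|\leq K$, continuity and positivity of the prior at $\bp_\star$ together with consistency of $\widehat\bp_r$ give $p(\widehat\bp_r+\hvec/\sqrt n)\to p(\bp_\star)$, so that the rescaled density satisfies
\[
n^{-q/2}p_{\cut}(\widehat\bp_r+\hvec/\sqrt n\mid\MD)\;\propto\;\exp\!\left(-\tfrac{1}{2}\hvec^\top\mathcal{M}_{\bp\bp}(\bp_\star)\hvec\right)\bigl(1+o_p(1)\bigr),
\]
where $q=\dim(\bp)$. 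The final step combines this local Gaussian approximation with the posterior concentration bound (which controls the tails outside $\|\hvec\|\leq K$) via the standard argument (e.g.\ Lemma~1.4.1 of Ghosh and Ramamoorthi) to upgrade pointwise convergence to convergence in total variation, which is exactly the stated $L^1$ conclusion.

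The main obstacle, and where Web Appendix~\ref{app:rank} assumptions will do most of the work, is controlling the remainder $R_n$ for the rank log-likelihood. Unlike a standard i.i.d.\ likelihood, $M_n(\bp)$ is the log of a single joint probability of the rank vector, expressed through the differencing operator $\Delta_{a_{ij}}^{b_{ij}}C(\cdot;\bp)$ at \eqref{eq:rlike}, so the ranks across $i$ are not independent under $\bp$ and the ``score'' and ``Hessian'' of $M_n$ are not sums of i.i.d.\ summands. Verifying a uniform law of large numbers for $-n^{-1}\nabla^2_{\bp\bp}M_n(\bp)\to\mathcal{M}_{\bp\bp}(\bp)$ and a $\sqrt n$-rate asymptotic normality of $\widehat\bp_r$, so that $\sqrt n(\widehat\bp_r-\bp_\star)=O_p(1)$ and the quadratic approximation is valid, is the nontrivial part; beyond that step, the argument is a routine adaptation of BvM.
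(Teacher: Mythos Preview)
Your proposal is correct and follows essentially the same Bernstein--von Mises skeleton as the paper: posterior concentration via a uniform law of large numbers for $M_n(\bp)/(1+n)$, a local quadratic expansion, and then a change of variables to the $\sqrt{n}$ scale. The only substantive difference is cosmetic---the paper expands about $\bp_\star$ and recenters via the score term $Z_n=\mathcal{M}_{\bp\bp}^{-1}\nabla_{\bp}M_n(\bp_\star)$, whereas you expand directly about $\widehat\bp_r$; these are asymptotically equivalent and your version in fact lands more directly on the stated centering.

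The one place where your write-up stops short is precisely the ``main obstacle'' you flag: how the primitive Assumptions \ref{ass:DGP1}--\ref{ass:crit1} actually deliver the uniform LLN for $M_n/(1+n)$ and the uniform convergence of the normalized Hessian, given that the rank log-likelihood is not an i.i.d.\ sum. The paper's device here is to rewrite the rank log-likelihood as an integral against the rescaled empirical copula process,
\[
\frac{M_n(\bp)}{1+n}=\int_{[0,1]^2} J(u_1,u_2;\bp)\,dC_n(u_1,u_2),
\]
with $J$ as in \eqref{eq:newJ}, and then invoke the empirical-copula results of Chen and Fan (2005) (restated as Lemmas~\ref{lem:CF1}--\ref{lem:CF2}) to obtain the needed ULLN, Hessian convergence, and score CLT as Corollary~\ref{cor:two}. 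Once you have that representation, the non-i.i.d.\ structure of the ranks is absorbed into known weak-convergence theory for $C_n$, and the remainder control you need becomes routine. Your outline would be complete if you inserted this step where you currently defer to ``the assumptions in Web Appendix~\ref{app:rank}.''
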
	
Assumptions \ref{ass:DGP1}-\ref{ass:crit1} are given in  Part~\ref{app:rank} of the Web Appendix, 
and they 
ensure that $M_n(\bp)$ admits enough regularity so that the cut marginal posterior $p_{\cut}(\bp|\MD)$ satisfies a Berstein-von Mises result. 
While these assumptions are specific to the copula function, they are satisfied for popular choices,
including elliptical copulas, such as the student-t and Gaussian copulas, and key Archimedean copulas such as the Gumbel and Clayton copulas; see Part~\ref{sec:discuss2} of the Web Appendix for further discussion. 

Theorem~\ref{thm:ranks} implies that in large samples the cut posterior $p_{\cut}(\bp|\MD)$ based on the pseudo rank likelihood resembles a Gaussian density centered at $\widehat{\bp}_r$. If the copula is correctly specified, then the information matrix equality is satisfied and we have that $\mathcal{M}_{\bp\bp}(\bp_\star)^{-1}\equiv \mathrm{var}\left\{\nabla_{\bp} M_n(\bp_\star)/\sqrt{n}\right\}$, which has a particular form  given in Corollary~\ref{cor:two} in Web Appendix~\ref{app:dtf}. 
In such cases, Theorem \ref{thm:ranks} implies that the cut posterior based on the pseudo rank likelihood correctly quantifies uncertainty.

To state the behavior of $p_{\cut}(\bt|\MD)=\int p_{\text{cut}}(\psivec,\thetavec|{\cal D})d\bp$, let ${Q}_n(\bt,\bp)= \log p(\MD|\bt,\bp)$, and write $\mathcal{Q}(\bt,\bp):=\lim_{n\rightarrow \infty}n^{-1}\E\left(\log p(\MD|\bt,\bp)
\right)$, 
with derivatives of $\mathcal{Q}(\bt,\bp)$ denoted as $\mathcal{Q}_{ij}(\bt,\bp)=\nabla^2_{ij}\mathcal{Q}(\bt,\bp)$ for $i,j\in\{\bt,\bp\}$. Further define $\widehat\bt_r:=\argmax_{\bt} Q_n(\bt,\widehat\bp_r)$, $\bt_\star:=\argmax_{\bt}\mathcal{Q}(\bt,\bp_\star)$, 
$$
\Omega^{-1}=\mathcal{Q}_{\bt\bt}(\be_\star)^{-1}+\mathcal{Q}_{\bt\bt}(\be_\star)^{-1}\mathcal{Q}_{\bt\bp}(\be_\star)\mathcal{M}_{\bp\bp}(\bp_\star)^{-1}\mathcal{Q}_{\bp\bt}(\be_\star)\mathcal{Q}_{\bt\bt}(\be_\star)^{-1}\,,
$$ 
and $\be_\star=(\bt_\star^\top,\bp_\star^\top)^\top$. Then Theorem~\ref{thm:cut2} below characterizes the
behavior of $p_\cut(\bt|\MD)$.

\begin{theorem}\label{thm:cut2} 
If Assumptions \ref{ass:cons} and \ref{ass:dist2}  in Part~\ref{app:dtf} of the Web Appendix, and Assumptions \ref{ass:DGP1}--\ref{ass:crit1} in Part~\ref{app:rank} of the Web Appendix, and all satisfied, then
	$$
\int\left|p_{\cut}(\bt|\MD)-\phi_N\left(\bt;\widehat\bt_{r},n^{-1}\Omega^{-1}\right)\right|d\bt =o_p(1).
	$$
\end{theorem}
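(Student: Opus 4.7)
The plan is to treat $p_{\cut}(\bt|\MD)$ as a convolution of two asymptotically Gaussian pieces: the rank-based cut posterior $p_{\cut}(\bp|\MD)$ (characterized by Theorem~\ref{thm:ranks}) and the conditional posterior $p(\bt|\bp,\MD)$. Writing
\[
p_{\cut}(\bt|\MD)=\int p_{\cut}(\bp|\MD)\,p(\bt|\bp,\MD)\,d\bp,
\]
the first factor concentrates in an $n^{-1/2}$ neighborhood of $\widehat\bp_r$ and, for each such $\bp$, the conditional posterior is a classical Bayes posterior under the full likelihood $p(\MD|\bt,\bp)$, so standard Bernstein--von Mises arguments (guaranteed by Assumptions \ref{ass:cons}--\ref{ass:dist2}) can be applied to it. The target variance $\Omega^{-1}$ is the sum of the expected conditional variance and the variance of the conditional mean $\widehat\bt(\bp):=\argmax_\bt Q_n(\bt,\bp)$ when $\bp$ is drawn from $p_{\cut}(\bp|\MD)$, which is exactly the structure produced by a convolution.

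First, restrict attention to a shrinking neighborhood $N_n=\{\bp:\|\bp-\widehat\bp_r\|\le M_n n^{-1/2}\}$ with $M_n\to\infty$ sufficiently slowly; by Theorem~\ref{thm:ranks}, $\int_{N_n^c} p_{\cut}(\bp|\MD)\,d\bp=o_p(1)$, so the contribution of $\bp\notin N_n$ to the above integral is negligible in total variation. Second, for $\bp\in N_n$, apply a uniform BvM expansion of $p(\bt|\bp,\MD)$:
\[
\int\bigl|p(\bt|\bp,\MD)-\phi_N\bigl(\bt;\widehat\bt(\bp),n^{-1}\mathcal{Q}_{\bt\bt}(\be_\star)^{-1}\bigr)\bigr|\,d\bt=o_p(1),
\]
uniformly for $\bp\in N_n$. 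This uniform version follows from the standard BvM proof (quadratic expansion of $\log p(\MD|\bt,\bp)$ around $\widehat\bt(\bp)$, plus tail control of the posterior), using continuity of the Hessian in $\bp$ at $\bp_\star$ guaranteed by the regularity conditions.

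Third, Taylor-expand the conditional maximizer around $\widehat\bp_r$ via the implicit function theorem applied to $\nabla_\bt Q_n(\widehat\bt(\bp),\bp)=0$. This yields
\[
\widehat\bt(\bp)=\widehat\bt_r-\mathcal{Q}_{\bt\bt}(\be_\star)^{-1}\mathcal{Q}_{\bt\bp}(\be_\star)(\bp-\widehat\bp_r)+o_p(\|\bp-\widehat\bp_r\|),
\]
where the error is uniform on $N_n$. Substituting this expansion and the BvM expansion into the convolution, and replacing $p_{\cut}(\bp|\MD)$ by $\phi_N(\bp;\widehat\bp_r,n^{-1}\mathcal{M}_{\bp\bp}(\bp_\star)^{-1})$ using Theorem~\ref{thm:ranks} (paying only an $o_p(1)$ cost in total variation), reduces the problem to a Gaussian-Gaussian convolution. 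An explicit computation of that convolution gives a Gaussian density centered at $\widehat\bt_r$ with variance
\[
n^{-1}\mathcal{Q}_{\bt\bt}^{-1}+n^{-1}\mathcal{Q}_{\bt\bt}^{-1}\mathcal{Q}_{\bt\bp}\mathcal{M}_{\bp\bp}^{-1}\mathcal{Q}_{\bp\bt}\mathcal{Q}_{\bt\bt}^{-1}=n^{-1}\Omega^{-1},
\]
which is the stated limit. Collecting the $o_p(1)$ total-variation errors from the three substitutions (tail truncation, uniform BvM, rank-cut Gaussianization) delivers the claim.

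The main obstacle I anticipate is the \emph{uniform} BvM for $p(\bt|\bp,\MD)$ over the shrinking neighborhood $N_n$, since the standard theorems in Assumptions \ref{ass:cons}--\ref{ass:dist2} are typically stated at a fixed parameter value; one must verify that the quadratic expansion of $Q_n(\bt,\bp)$ and the posterior tail bounds hold jointly in $\bt$ and $\bp$ with remainders that are $o_p(1)$ uniformly on $N_n$. Once this uniformity is in hand, the remaining steps are essentially algebraic manipulation of Gaussian integrals combined with the already-established Theorem~\ref{thm:ranks}.
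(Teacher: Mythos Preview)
Your argument is correct and essentially reconstructs, from first principles, the content that the paper obtains by a one-line citation. The paper's own proof simply sets $Q_n(\bp,\bt)=M_n(\bp)+\log p(\MD\mid\bp,\bt)$ and invokes Corollary~1 of Frazier and Nott~(2022), which packages the whole convolution-plus-uniform-BvM machinery into a single black box for generic two-module cut posteriors. Your route---truncate $\bp$ to a $\sqrt{n}$-neighborhood via Theorem~\ref{thm:ranks}, apply a BvM expansion to $p(\bt\mid\bp,\MD)$ uniformly in $\bp$, linearize $\widehat\bt(\bp)$ via the implicit function theorem, and compute the resulting Gaussian--Gaussian convolution---is precisely the skeleton of that external corollary specialized to the present setting. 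What your approach buys is transparency: it makes explicit why $\Omega^{-1}$ decomposes as the expected conditional variance $\mathcal{Q}_{\bt\bt}^{-1}$ plus the propagated variance $\mathcal{Q}_{\bt\bt}^{-1}\mathcal{Q}_{\bt\bp}\mathcal{M}_{\bp\bp}^{-1}\mathcal{Q}_{\bp\bt}\mathcal{Q}_{\bt\bt}^{-1}$, and it isolates the uniform BvM as the only non-routine step. What the paper's approach buys is brevity and modularity, since the same cited result already underlies Theorem~\ref{thm:joint} and Lemmas~\ref{lem:two}--\ref{lem:ifm2}. Your identification of the uniform-in-$\bp$ BvM as the main technical hurdle is accurate; under Assumptions~\ref{ass:cons}--\ref{ass:dist2} the required uniformity over an $n^{-1/2}$-neighborhood follows from the same dominated-derivative and continuity conditions already imposed, so there is no genuine gap.
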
	

Theorem~\ref{thm:cut2} shows that the uncertainty for the cut posterior of $\bt$ depends on the uncertainty in the cut posterior for $\bp$ through the term $\mathcal{M}_{\bp\bp}(\bp_\star)^{-1}$. Therefore, the cut posterior of $\bt$ will only  quantify uncertainty correctly if the copula model marginals and copula function are both well-specified. This is in contrast to the type 1 cut posterior where the marginal parameter posteriors delivered reliable uncertainty quantification as long as the marginal models were well-specified (see Lemma~\ref{lem:two}) .

\subsection{Simulation 2}
\label{sec::simExample3}
The simulation study in Section~\ref{sec::simExample1} is extended to compare the accuracy of the
type 2 cut posterior to that of the conventional posterior.
Data is generated from a bivariate copula model with the similar
marginals as in Simulation~1 (except that $\sigma^2 = 0.25$), but using a Gumbel copula with Kendall's tau $\tau=0.7$. For each dataset we fit a copula model with the correct copula family (i.e. a Gumbel), along with normal
marginals with mean and variance parameters $\mu_j,\sigma^2_j$
for $j=1,2$ and constrained to be positive. Thus, the marginals
are misspecified but the distribution has the same support as the DGP. We employ the vague proper priors $\mu_j, \sim N(0,100^2)$, $\sigma_j^2 \sim \text{Half-Normal}(0,100^2)$, and $\tau  \sim U(0,1)$. Both the outlined variational
methodology and MCMC algorithms are used to evaluate the type 2 cut posterior, along with the conventional posteriors, resulting in four Bayesian estimators. Details are given
in Part~A1 of the Web Appendix.

The accuracy of each posterior is measured using the predictive KL divergence metrics. 
Table~\ref{Simulation3KL} (left hand side) reports their mean values over the $S=500$ replicates 
for the 
case where $n=1000$. The cut posterior provides much more accurate estimates of both the marginal and copula components, compared to the conventional posteriors. Moreover, MCMC and variational estimates provide very similar levels of accuracy. Table~\ref{Simulation3KL} (right hand side) reports the
accuracy when the correctly specified copula model (i.e. with the correct
forms for the marginals)  is fit using both the conventional and cut posteriors computed using MCMC.
The same vague proper priors
are used for the marginal parameters, and the accuracy of the cut posterior is almost identical to the
conventional posterior. Results for the cases where $n=100$ and $n=500$ are reported in Part~A5 of the 
Web Appendix, and are very similar to those for $n=1000$.

\begin{table}[htbp]
\caption{Copula Model Estimation Accuracy in Simulation 2 ($n=1000$)} 	
\label{Simulation3KL}
\begin{center}
\renewcommand\arraystretch{1.15}
\begin{tabular}{lcccccccc}
	\hline \hline
	& \multicolumn{4}{c}{Misspecified Copula Fit} & &\multicolumn{2}{c}{Correct Copula Fit} \\ \cline{2-5}\cline{7-8} 
& Uncut/ & Cut/ & Uncut/ & Cut/ & &Uncut/ &Cut/\\
& MCMC & MCMC &VI &VI & &MCMC &MCMC\\ \hline
           Marginal $f_1$ & 0.8993 & {\bf 0.7945} & 0.8941 & 0.8090 &  &0.0003 & 0.0003 \\
          Marginal $f_2$ & 0.2103 & {\bf 0.1513} & 0.2034 & 0.1597 & & 0.0005 &0.0005 \\
           Copula & 0.0166 & {\bf 0.0010} & 0.0084 & {\bf 0.0010} & & 0.0007 & 0.0010\\
            \hline\hline
        \end{tabular}
\end{center}
Mean predictive KL divergence metrics for the two marginals and the copula density
of the bivariate copula model estimate. The lowest values are in bold. Results are given for the type 2 cut
and conventional (i.e. uncut) posteriors, computed exactly using MCMC or approximately using variational inference. Results are given for both the misspecified copula model (left hand side) and the correctly
specified copula model (right hand side).
\end{table}


\subsection{Augmented type 2 cut posterior}
When $m$ is small, the cut posterior can be evaluated by direct application of Algorithm~\ref{alg:vigeneral}. However, for even moderate values of $m$, 
the pseudo rank likelihood at~\eqref{eq:rlike} forms a computational bottleneck because it requires $O(n2^m)$ 
evaluations of $C$. In this case the computation can be avoided by employing
the extended likelihood  in \cite{smith2012estimation} which is tractable for higher values of $m$. 

Let  $\uvec_i=(u_{i1},\ldots,u_{im})^\top\sim C(\cdot;\psivec)$ and $\uvec=(\uvec_1^\top,\ldots,\uvec_n^\top)^\top$ be auxiliary variables, such that
$p(r({\cal D})|\uvec)=\prod_{ij}p(r(y_{ij})|u_{ij})=\prod_{ij}\mathds{1}(a_{ij}\leq u_{ij}<b_{ij})$. Then define an
extended likelihood as
\[
p(r({\cal D}),\uvec|\psivec):=p(r({\cal D})|\uvec)p(\uvec|\psivec)=
\prod_{ij}\mathds{1}(a_{ij}\leq u_{ij}<b_{ij})\prod_{i=1}^n c(\uvec_i|\psivec)\,.
\] 
Theorem~1 in~\cite{smith2012estimation} shows that integrating over
$\uvec$ retrieves the pseudo rank likelihood; i.e. $p_{\text{PL}}(r({\cal D})|\psivec)=\int p(r({\cal D}),\uvec|\psivec)d\uvec$.
Using this extended likelihood, we define the marginal cut posterior of $\psivec$ augmented with $\uvec$ as
\[
p_{\text{cut}}(\psivec,\uvec|{\cal D})=\frac{p(r(D),\uvec|\psivec)p(\psivec)}{\int p_{\text PL}(r(D)|\psivec')p(\psivec')d\psivec'}\,. 
\]
Integrating the density above over $\uvec$ gives the required cut posterior at~\eqref{eq:mcutpsi}.

Again, this setup fits into the  two module system discussed in Section~\ref{sec:cfm}, but with $\etavec_1=(\psivec^\top,\uvec^\top)^\top$ and $\etavec_2=\thetavec$, so that the cut posterior
\begin{equation}
p_{\text{cut}}(\psivec,\uvec,\thetavec|{\cal D})=p_{\text{cut}}(\psivec,\uvec|{\cal D})p(\thetavec|\psivec,{\cal D})\,,\label{eq:jaugcut}
\end{equation}
which we call the ``augmented cut posterior'' (i.e. the joint cut posterior augmented with $\uvec$). 
In this augmented cut posterior, $p(\thetavec|\psivec,\uvec,{\cal D})=p(\thetavec|\psivec,{\cal D})$ and
the marginal in $(\psivec^\top,\thetavec^\top)^\top$ is the required
cut posterior at~\eqref{eq:sec4jntcutpost}. We now discuss how to approximate~\eqref{eq:jaugcut} using recent developments in 
variational inference methods.  

\subsection{Variational inference for the augmented type 2 cut posterior}\label{sec:viaug}
The augmented cut posterior at~\eqref{eq:jaugcut} is estimated using Algorithm~\ref{alg:vigeneral} with approximation
\begin{equation}	q_\lambda(\thetavec,\psivec,\uvec)=q_{\widetilde{\lambda}}(\psivec,\uvec)
	q_{\breve{\lambda}}(\thetavec|\psivec)\,.\label{eq:cutva2}
\end{equation}
As before, a 
$N(\muvec,LL^\top)$ approximation is used
in $(\psivec^\top,\thetavec^\top)^\top$,
which has marginal in $\psivec$ with density  $q_{\widetilde{\lambda}_a}(\psivec)=\phi_N(\psivec;\muvec_\psi,L_{\psi}L_{\psi}^\top)$ and
parameters $\widetilde{\lambdavec}_a=(\muvec_\psi^\top,\text{vech}(L_\psi)^\top)^\top$. 
In Step~2 of the algorithm, $p_{\text{cut}}(\psivec,\uvec|{\cal D})$ is approximated by 
 $q_{\widetilde \lambda}$, for which we consider the family discussed below. 

For copula models with discrete-valued marginals, \cite{loaiza2019VBDA} study approximations to a posterior augmented by latents $\uvec$. They consider
VAs of the form $q_{\widetilde{\lambda}}(\psivec,\uvec)=
q_{\widetilde{\lambda}_a}(\psivec)q_{\widetilde{\lambda}_b}(\uvec)$ with  $\widetilde{\lambdavec}=(\widetilde{\lambdavec}_a^\top,\widetilde{\lambdavec}_b^\top)^\top$. They found approximations with marginal density in $\uvec$ given by
\[
q_{\widetilde{\lambda}_b}(\uvec)=\mathop{\prod_{i=1:n}}_{j=1:m} 
\frac{\phi_N(\zeta_{ij};\delta_{ij},\omega_{ij})}{(b_{ij}-a_{ij})\phi_N(\zeta_{ij};0,1)}\,,\;\; \zeta_{ij}=\Phi^{-1}\left(\frac{u_{ij}-a_{ij}}{b_{ij}-a_{ij}}\right)\,,
\]
provide a balance between scalability and accuracy. With this approximation, the
 parameters $\widetilde{\lambdavec}_b$ consist
of the $2nm$ mean and log-variance values $\{\delta_{ij},\log \omega_{ij};i=1,\ldots,n;\, j=1,\dots,m\}$.
 
This approximation is derived from adopting a normal
distribution for a transformation of $u_{ij}\in(a_{ij},b_{ij}]$ to the real line. An advantageous property is that $q_{\widetilde{\lambda}_b}$ can be shown to 
converge to the exact marginal cut posterior in $\uvec$ as $n\rightarrow \infty$, so that
for larger datasets it is a very accurate approximation. 
Another advantage of this approximation is that it is tractable, and fast to learn when combined with stochastic
gradient descent (SGD). We implement this optimization with control variates as outlined in \cite{loaiza2019VBDA},
where further details can be found.  

%

\section{Macroeconomic Example}\label{sec:macro}
Recent studies have applied high-dimensional copula 
models to multivariate economic and 
financial time series to capture both 
cross-sectional and serial dependence jointly; see~\cite{smith2015} 
and~\cite{nagler2022}
for examples. Copula models are attractive because
when the marginals are asymmetric, the 
predictive distributions exhibit time-varying asymmetry, which is an important
feature of such data. Out-of-sample density and tail forecasting are the primary objectives of these studies, for which heavy-tailed parametric marginals are preferred. 
To illustrate the impact of cutting feedback, we use it
to account for misspecification of 
either the marginals or copula function in such a model.

\subsection{Gaussian copula model}
We consider the Gaussian copula model of~\cite{smithvahey2016}, who
apply it to $N=4$ U.S.
macroeconomic time series observed quarterly, which are  $Y_{1,t}$  (Output Growth), $Y_{2,t}$ (Inflation), $Y_{3,t}$  (Unemployment Rate), 
and  $Y_{4,t}$ (Interest Rate). These four variables are observed at times
$t=1,\ldots,T$, so that 
the copula is of dimension $m=NT$, although $n=1$ because this is a single time series. 
The implicit copula of an $N$-dimensional stochastic
process $\{\bm{W}_t\}_{t=1}^T$ that follows
a stationary lag $p=4$ Gaussian vector autoregression (VAR) is used. It is a large Gaussian copula with
parameter matrix $\Omega$ that is a correlation matrix with
a sparse block Toeplitz structure. 

Rather than define 
the copula model likelihood directly in terms of $\Omega$, these authors express it more efficiently in terms of the unique semi-partial correlations. Appendix~\ref{app:a} shows how to do so, where
the unique semi-partial correlations 
associated with each lag $k=0,1,\ldots,p$ are grouped
together and denoted as
\begin{eqnarray}
\phivec(0) &= &\{\phi_{l_1,l_2}^0\} \mbox{ for } l_2=2,\ldots,N\;\text{and } l_2<l_1\,, \nonumber\\
\phivec(k) &= &\{\phi_{l_1,l_2}^k\} \mbox{ for } l_2=1,\ldots,N\;\text{and } l_2=1,\ldots,N\;\text{and }k=1,\ldots,p\,.
\label{eq:partials}
\end{eqnarray}
This is achieved by writing the Gaussian copula as a sparse D-vine where many of the component 
pair-copulas have density exactly equal to unity.
Denote $\phivec=\{\phivec(0),\ldots,\phivec(p)\}$ as the set of unique semi-partial correlations, then there is a one-to-one relationship between $\phivec$ and $\Omega$.

The original study considered quarterly data from 1954:Q1 until 2011:Q1. The data were 
sourced from the Federal Reserve Economic Database and the 2022:Q3 vintage from Real-Time Dataset for Macroeconomists hosted by the Philadelphia Federal Reserve.
In our analysis we extend the same economic time series to 2022:Q2, so that 
$T=274$. The matrix
$\Omega$ is of dimension $m=1096$, although 
it is parsimonious because the underlying copula process has only 72 unique semi-partial
 correlations $\phivec$. Regularization is known to improve the predictive performance of standard VAR models, so that
\cite{smithvahey2016} use a spike-and-slab prior on $\phivec$ for their copula model. 
In the current analysis, ridge priors with different levels of regularization at each lag are used. 
If $\widetilde{\phi}^k_{l_1,l_2}=\Phi^{-1}\left((\phi_{l_1,l_2}^k+1)/2\right)$ is a transformation
of $\phi_{l_1,l_2}^k$ to the real line, then the prior $\widetilde{\phi}^k_{l_1,l_2}\sim N(0,\tau^2_k)$ with $\tau^2_k \sim C^+(0,1)$ a half-Cauchy distribution.
The unconstrained copula and regularization parameters are therefore
$\psivec=\{\widetilde{\phivec},\log \tau_0^2,\log \tau_1^2,\ldots,\log \tau_p^2\}$.

In this application, prediction
of the distributional tails is necessary to quantify 
macroeconomic risk. 
A heavy-tailed parametric model is usually preferred to a non- or semi-parametric one because the latter tends to  under-weight the possibility of extreme events, such as that observed during the recent pandemic.
We follow the original study where time invariant
skew-t marginals were used for each variable (truncated to positive values for 
the Interest Rate variable). 
However, given the economic shocks since 2011:Q1, it is uncertain whether or not this choice of
marginals or Gaussian copula remain suitable for the extended dataset used here. Therefore, we consider cutting feedback first from the copula parameters
$\psivec$ to the marginal parameters $\thetavec$ (the type 1 cut posterior), and then also from $\thetavec$ to $\psivec$ (the type 2 cut posterior).
Because $m$ is large it is infeasible to compute the cut posteriors exactly,
and VI was used. For the type 2 cut posterior, the
variational approximation to the augmented posterior was employed as outlined in Section~\ref{sec:viaug}. When solving the variational optimizations, a SGD algorithm with ADADELTA learning rate was used with 2000 steps.

\subsection{Prediction and log-score metric}
To judge the accuracy of the different posteriors, we calculate a log-score metric using the posterior predictive
distribution as follows. If $\yvec_t=(y_{1,t},\ldots,y_{N,t})^\top$ is the observed value of the $N=4$ variables $\bm{Y}_t=(Y_{1,t},\ldots,Y_{N,t})^\top$
at time $t$, then the posterior predictive density $h$ steps ahead is
\begin{equation}
f_{t+h|t}(\yvec_{t+h}|\yvec_{t},\ldots,\yvec_{t-p+1})\equiv\int p(\yvec_{t+h}|\yvec_{t},\ldots,\yvec_{t-p+1},\thetavec,\psivec)\pi_t(\thetavec,\psivec)\mbox{d}\thetavec\mbox{d}\psivec\,, \mbox{ for } h\geq 1\,.\label{eq:postpred}
\end{equation}
Here, $\pi_t(\thetavec,\psivec)$ is a posterior density based on the data $y_1,\ldots,y_t$, for which we consider both variational cut posteriors and also the joint posterior. The integral is evaluated by averaging over 5000 draws from $\pi_t$. For the conventional posterior these are obtained
using an MCMC scheme as in~\cite{smithvahey2016} but where the regularization
parameters $\tau_0^2,\ldots,\tau_p^2$ are also drawn. Drawing from the variational cut posteriors is straightforward because they are fixed form Gaussian approximations.
Conditional on $(\thetavec,\psivec)$, draws from the predictive density $p(\yvec_{t+h}|\yvec_{t},\ldots,\yvec_{t-p+1},\thetavec,\psivec)$ can be
obtained using the sparse D-vine representation of the Gaussian copula as outlined in~\cite{smithvahey2016}.
\begin{figure}[thb]
	\centering
	\includegraphics[width=0.7\textwidth]{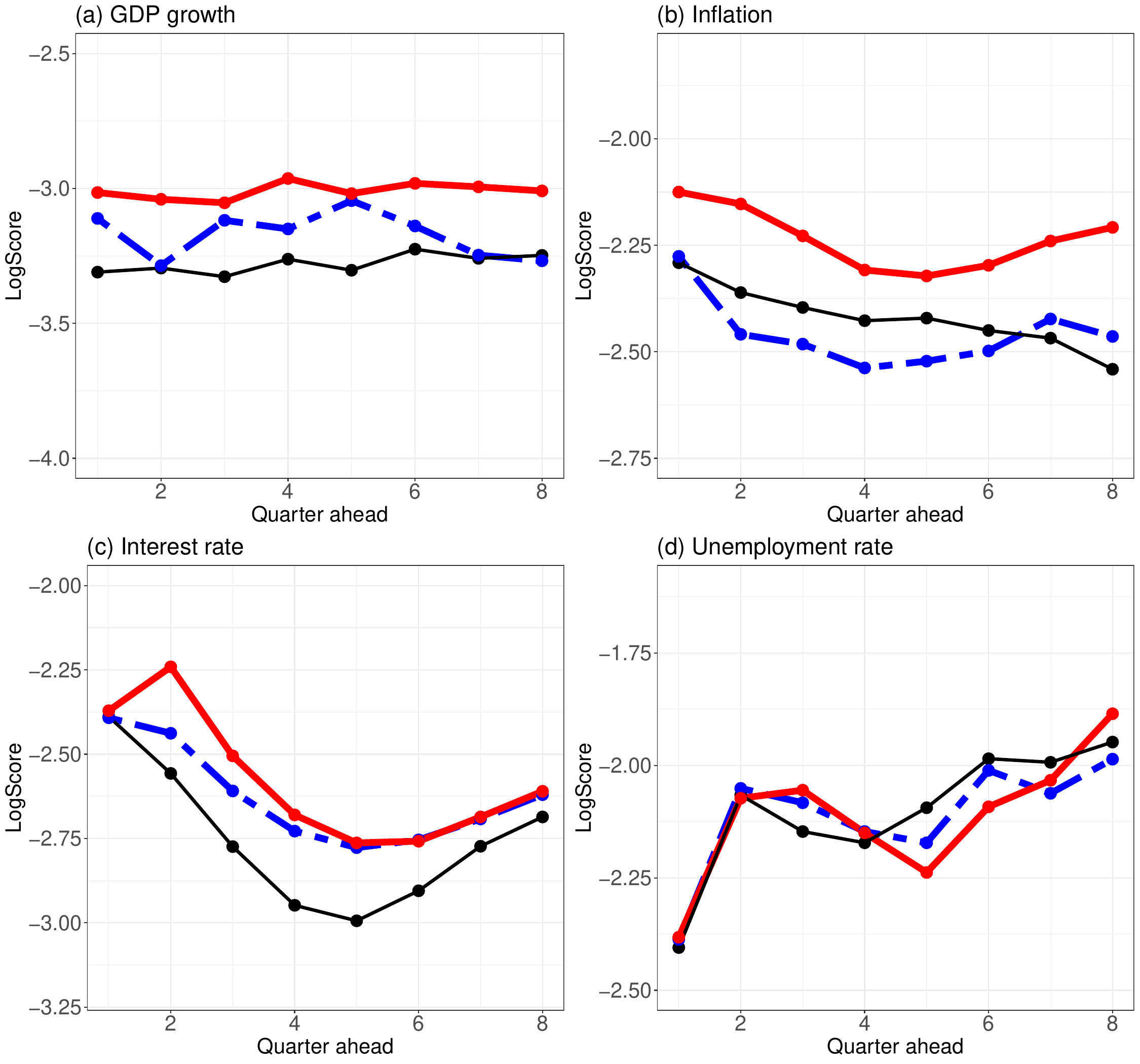}
	\caption{Plots of the log-score posterior predictive metric $LS_{j,h}$ for 
		the type~1 cut posterior (blue dashed line), type~2 cut posterior (red thick line) and 
		the conventional (i.e. uncut) posterior (black thin line). Panels (a-d) correspond to variables GDP Growth ($j=1$), Inflation ($j=2$), Interest 
		Rate ($j=3$) and Unemployment Rate ($j=4$), respectively. In each panel the metric values are plotted for predictions $h=1,\ldots,8$ quarters ahead. Higher values correspond to greater predictive accuracy.}
	\label{fig:macroLS}
\end{figure}

A log-score metric for variable $j$ predicted $h$ steps ahead can be computed
as 
\[
LS_{j,h}=\sum_{t=p}^{T-h} {\log \widehat{f_{t+h|t}}}(y_{j,t+h}|\yvec_{t},\ldots,\yvec_{t-p+1})\,.
\]
Here,
$\log\widehat{f_{t+h|t}}(y_{j,t+h}|\yvec_{t},\ldots,\yvec_{t-p+1})$ is a kernel density 
estimate of the logarithm of draws from~\eqref{eq:postpred}, evaluated at the observed value
$y_{j,t+h}$. 
Higher values of this log-score indicate 
better calibrated posterior distributions $\left\{\pi_p,\ldots,\pi_{T-h}\right\}$ for predictive purposes.

Figure~\ref{fig:macroLS} plots $LS_{j,h}$ for each variable $h=1,\ldots,8$ quarters
ahead, which matches the typical macroeconomic forecast horizon. By this metric, the 
type~2 cut posterior is a substantial improvement over the conventional posterior and type~1 cut posterior for GDP Growth (the main forecast variable), Inflation and the Interest Rate. This suggests that misspecification of the marginals impacts posterior inference, much more than any potential misspecification of the copula function. The approach of using predictive performance to select between 
cut and conventional posteriors has been discussed previously by~\cite{carmona+n22} in the context
of semi-modular inference.

\begin{figure}[thb]
	\centering
	\includegraphics[width=0.7\textwidth]{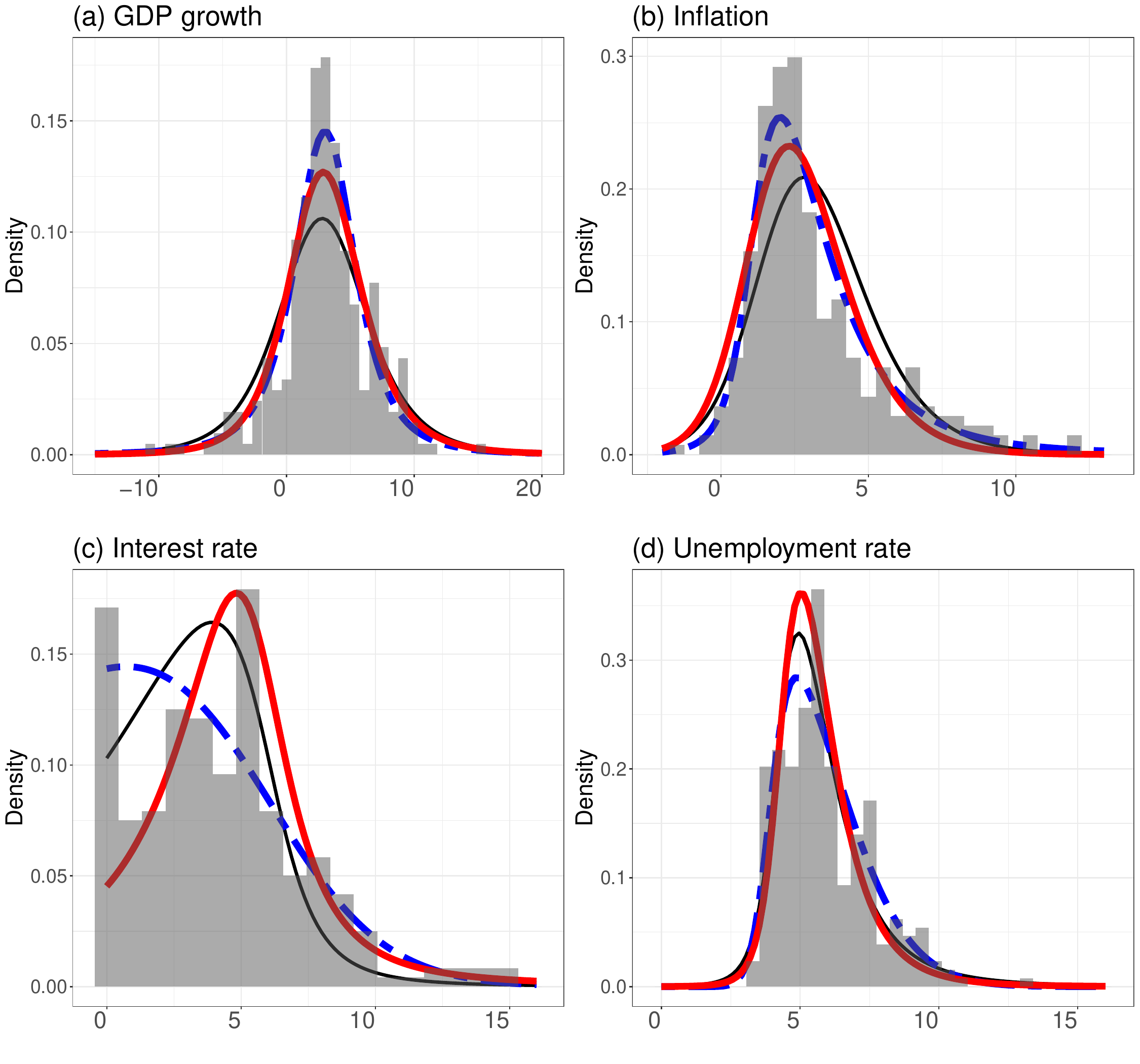}
	\caption{Plots of the estimated skew-t marginal densities for
		(a)~GDP Growth, (b)~Inflation, (c)~Interest Rate, and (d)~Unemployment Rate.
		These estimates are evaluated at the posterior means of $\thetavec$ for the type~1 cut posterior (blue dashed line), type~2 cut posterior (red thick line) and 
		the conventional (i.e. uncut) posterior (black thin line). Histograms of the data are also plotted.}
	\label{fig:macromargins}
\end{figure}

\subsection{Estimates}
Figure~\ref{fig:macromargins} 
plots
the estimated skew-t marginal densities for the four macroeconomic variables, along with histograms of the data. The three
posterior estimates differ substantially, highlighting the impact of cutting feedback
in this model. The  
histograms show that skew-t distributions are likely to be a misspecification 
for the copula model marginals in our extended dataset. 
For example, between 2011:Q1 and 2022, the Federal Reserve set interest rates to historical near-zero lows, corresponding to a mode at these values in
the histogram in panel~(c). 
While a truncated skew-t was an appropriate marginal for the pre-2011 data studied by~\cite{smithvahey2016}, it is inappropriate for the extended dataset that has a bimodal marginal in Interest Rate.
For this reason, the type~2 cut posterior correctly cuts feedback from the misspecified marginals when computing inference about 
the $\psivec$. This increases the overall accuracy of inference, as 
measured by the log-score metrics, relative to the conventional posterior.

\begin{figure}[p]
	\centering
	\includegraphics[height=0.9\textheight]{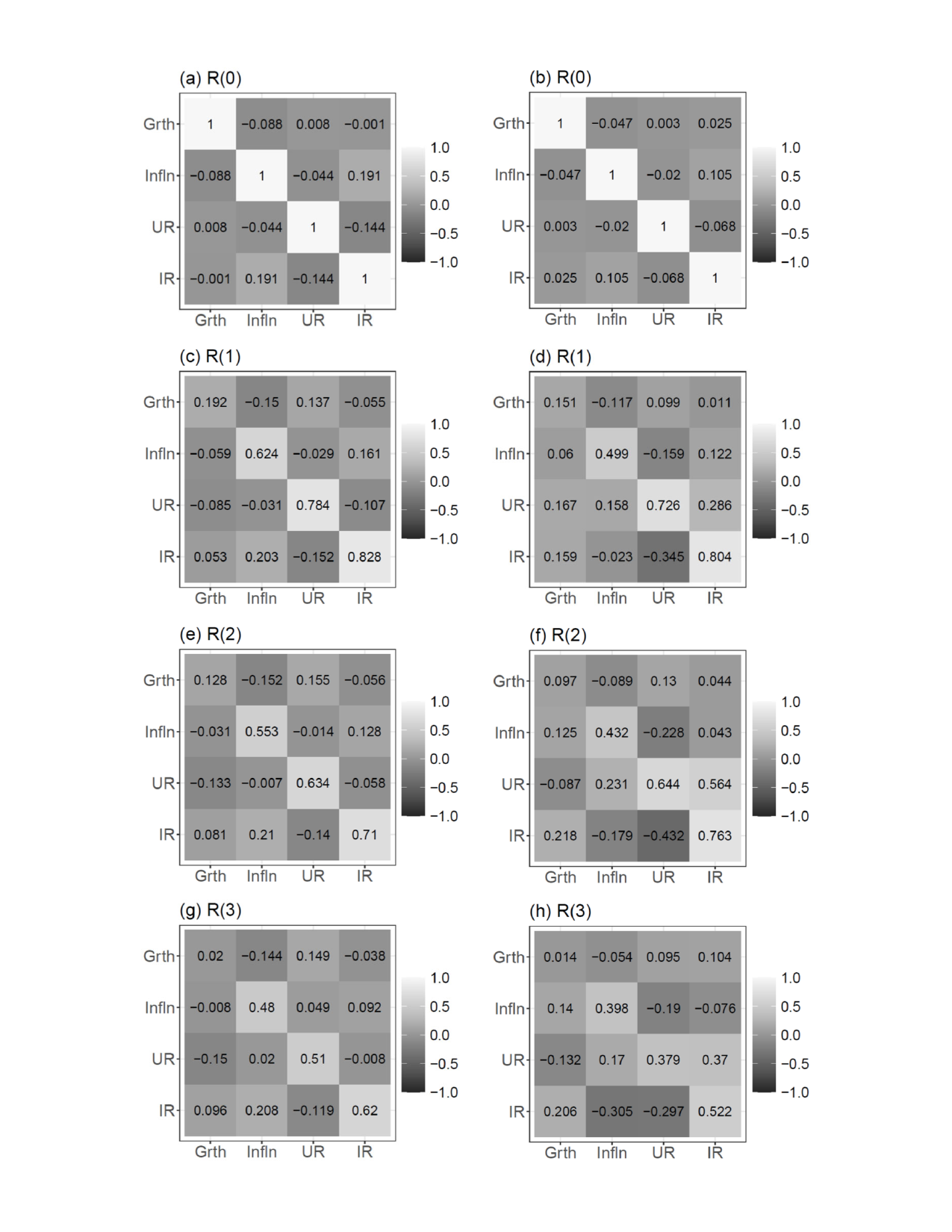}
	\caption{Posterior means of the matrices of Spearman pairwise correlations $R(k)$ for $k=0,1,2,3$ from the fitted copula model.  
	The left hand panels (a,c,e,g) contain results for the conventional (i.e. uncut) joint posterior, while the right hand panels 
	(b,d,f,h) contain results for the type~2 cut posterior. For example, the estimated Spearman correlation between
	the Interest Rate (IR) at time $t-3$ (variable $Y_{4,t-3}$) and  the Unemployment Rate (UR) at time $t$ (variable $Y_{3,t}$) is $-0.008$ 
	using the conventional posterior in panel~(g), and $0.370$ using the type~2 cut posterior in panel~(h).} 
	\label{fig:spearman}
\end{figure}

Finally, we consider the matrices $R(k)\equiv \{r_{i,j}(k)\}$ 
of pairwise Spearman's rho values $r_{i,j}(k)=\rho(Y_{i,t},Y_{j,t-k})$. These are a function
of the posterior of $\phivec$ as outlined in~\cite{smithvahey2016}, and their estimates provide important macroeconomic insights. 
Figure~\ref{fig:spearman} plots mean estimates of 
$R(0)$, $R(1)$, $R(2)$ and $R(3)$ using the conventional posterior (left hand panels), 
and using the type~2 cut posterior (right hand side). Cutting feedback 
perturbs these Spearman correlation estimates. For example, the pairwise correlation between 
the Interest Rate at time $t-3$ and Inflation at time $t$ is estimated to be $0.092$ in the conventional
 posterior, whereas in the type~2 cut posterior it is $-0.076$. The latter is more consistent with
monetary policy, where interest rate increases are often aimed at reducing future inflation. 
\section{Discussion}\label{sec:conc}
The modular nature of copula models can greatly simplify the specification of many multivariate stochastic models. It can also be used to improve the accuracy of statistical inference under potential model misspecification. As far as we are aware, this is the first paper to propose cutting feedback methods to do so.
We show theoretically and empirically that these methods can be more accurate in misspecified models than the conventional Bayesian posterior. 

Previous inference methods that control for  misspecification of the copula function when estimating the marginals include IFM~\citep{joexu1996,joe2005}.
For parametric marginals this is usually implemented using a two-stage maximum likelihood procedure, to which we show the type~1 cut posterior mean is asympototically equivalent. For nonparametric marginals, a 
well-established approach is to estimate the marginals using their empirical distribution functions, followed by estimating the copula parameters using pseudo-maximum likelihood; see~\cite{oakes1994} and~\cite{genest95}. This can be numerically unstable in higher dimensions, in which case kernel density estimators may be adopted for the marginals. If Bayesian nonparametric distributions~\citep{hjort2010bayesian} are used to model the marginals, then estimation using our proposed type~1 cut posterior provides a Bayesian equivalent which can be used in high dimensions when evaluated by variational methods. \cite{grazianliseo17} also
suggest Bayesian estimation of a copula model by generating each $\thetavec_j$ from their marginal posteriors, as at 
the first step of Algorithm~\ref{alg:nestedmcmc} when evaluating the type~1 cut posterior. However, they employ these draws to evaluate an approximate posterior of a dependence parameter based on an exponentially tilted likelihood, rather than a cut posterior.

\begin{table}[htbp]
\caption{Summary of Theoretical Results for Cut Posteriors}
\label{tab:theory}
\begin{center}	
\begin{tabular}{lcccccc}
		\toprule
		\multirow{2}{*}{} &
		\multicolumn{2}{c}{Both Correct} &
		\multicolumn{2}{c}{Marginal Miss. } &
		\multicolumn{2}{c}{Copula Miss. } 
		\\
		& {$\bt$} & {$\bp$} & {$\bt$}& {$\bp$} & {$\bt$}& {$\bp$}\\
		\midrule
  		Conventional Posterior& $\checkmark$ & $\checkmark$ & $\times$ & $\times$& $\times$ & $\times$\\
		Type 1 Cut Posterior& $\checkmark$ & $\checkmark$ & $\times$ & $\times$& $\checkmark$ & $\times$\\
			Type 2 Cut Posterior & $\checkmark$ & $\checkmark$ & $\times$ &$\checkmark$ & $\times$ & $\times$\\
		\bottomrule
	\end{tabular}
 \end{center}
In the column headings ``Both Correct" indicates that both  marginals and copula function are correctly specified; ``Marginal Miss." refers to the case where the marginals are misspecified, but the copula function is correct; and ``Copula Miss." refers to the case where the copula function is misspecified, but the marginals are correct. The parameters $\bt$ and $\bp$ refer to calibration of that specific parameter with ``$\checkmark$'' indicating (asymptotically) correct calibration, and ``$\times$'' denoting (possibly) inaccurate calibration.
\end{table}
Methods that control for misspecification of the marginals when estimating the copula function are rare, especially in high-dimensions. \cite{kim2007comparison} demonstrates that adopting nonparametric marginals as in~\cite{genest95} can guard against this, but this will be at the cost of reduced statistical efficiency when the marginals are in fact well-specified. Our type~2 cut posterior guards against this type of misspecification while attempting to limit any loss in statistical efficiency. Table~\ref{tab:theory} summarizes our theoretical results.
Along with the conventional posterior, both types of cut posterior are correctly calibrated asymptotically when the copula model is well-specified. However, unlike the conventional posterior, the type~2 cut posterior is also correctly calibrated under misspecification of the marginal models, and the type~1 cut posterior under misspecification of the copula function.

Evaluation of cut posteriors is difficult, and another contribution of our paper is the development of variational methods to do so for copula models. The definition of the type~2 cut posterior using 
a pseudo rank likelihood complicates computation, although this can be overcome by considering an augmented posterior with the cut posterior as its marginal in $(\thetavec,\psivec)$. Application of the variational methods to a 1096 dimension Gaussian copula for a macroeconomic forecasting application demonstrates their speed and efficiency in high dimensions. 

Finally, we note that macroeconomic example is also interesting in itself. Copula time series models have strong potential~\citep{smithmin2010,smith2015,smithman2018,nagler2022}, but selection of an appropriate copula function or parametric marginals can be difficult. In this case, guarding against misspecification is valuable, and our empirical work shows a cut posterior can increase density forecasting accuracy relative to the conventional posterior. Further useful applications of our new Bayesian methodology for cutting feedback in copula modeling await.

\newpage
\appendix
\setcounter{table}{0}
\setcounter{figure}{0}
\setcounter{algorithm}{0}
\renewcommand{\thetable}{\Alph{section}\arabic{table}}
\renewcommand{\thefigure}{\Alph{section}\arabic{figure}}
\renewcommand{\thealgorithm}{\Alph{section}\arabic{algorithm}}

\section{Copula multivariate time series model}\label{app:a}
This appendix gives the likelihood for the Gaussian copula model of~\cite{smithvahey2016}, to which 
we refer for full details.
Let the $T$ values of the VAR($p$) process be stacked into vector
$\bm{W}=(\bm{W}_1^\top,\ldots,\bm{W}_T^\top)^\top=(W_1,W_2,\ldots,W_m)^\top \sim N(0,\Omega)$. The VAR is stationary and constrained to have unit marginal variances,
so that 
\[
\Omega=
\left[ \begin{array}{ccc}
	\Omega(0) &\cdots &\Omega(T-1) \\
	\vdots &\ddots &\vdots \\
	\Omega(T-1) &\cdots &\Omega(0)
\end{array} \right]
\]
is a block Toeplitz correlation matrix with $\text{corr}(\bm{W}_t,\bm{W}_s)=\Omega(|t-s|)$.
For $i>j+1$, define the semi-partial correlation $\varphi_{i,j}=\mbox{corr}(W_i,W_j|W_{j+1},\ldots,W_{i-1})$ and
$\varphi_{i+1,i}=\text{corr}(W_{i+1},W_{i})$. 
There is a one-to-one transformation between $\Omega$ and the  
 semi-partial correlations $\varphivec=\{\varphi_{i,j}\}_{i=1:N,j<i}$ 
  due to Yule; e.g. see~\cite{daniels2009}. For a stationary VAR($p$) model, the
 majority of the elements in $\varphivec$ are either exactly zero or replicated values.
 \cite{smithvahey2016} show how to identify the unique values, which are denoted as $\phivec$ in Section~\ref{sec:macro}, and organize these
 into the blocks at~\eqref{eq:partials} that capture serial dependence at different lags.
 
Our copula model in Section~\ref{sec:macro} uses the implicit copula of $\bm{W}$, which is a Gaussian copula with parameter matrix
$\Omega$. It is well-known that a Gaussian copula can be written as a D-vine~\citep{czado2019} with 
density 
\begin{equation}
c(\uvec;\phivec)=\prod_{i=2}^m \prod_{j=1}^{i-1} c_{i,j}(u_{i|j+1},u_{j|i-1};\varphi_{i,j})\,,
\label{eq:dvine}
\end{equation}
where $c_{i,j}(\cdot,\cdot;\varphi_{i,j})$ is a bivariate Gaussian copula density with parameter
 $\varphi_{i,j}$ given by the semi-partial correlation defined above. When $\varphi_{i,j}=0$ the pair-copula is the 
independence copula with density $c_{i,j}(\cdot,\cdot;0)=1$. 
The arguments of each pair-copula, $u_{i|j+1}$ and $u_{j|i+1}$, 
can be computed from $\uvec$ and $\phivec$ efficiently using the recursive algorithm 
outlined in Appendix~A of~\cite{smithvahey2016}. This also gives an expression for  
the product at~\eqref{eq:dvine} in terms of only the non-independence pair-copula densities (i.e. those
pair-copula densities which are not equal to unity). Finally, because this model is for a single time series, the 
likelihood is simply given by~\eqref{eq:copden}.

\setcounter{section}{0}
\renewcommand{\thesection}{A\arabic{section}}
\section*{Web Appendix}
This Web Appendix has four parts:

\begin{itemize}
	\item[] {\bf Part~A1}: Additional computational details for the simulation studies.
	\item[] {\bf Part~A2}: Assumptions used to deliver the theoretical results in Section \ref{sec:theory_cut1}, and proofs of all stated results in Section \ref{sec:theory_cut1}.
	\item[] {\bf Part~A3}: Assumptions used to deliver the theoretical results in Section \ref{sec:theory_cut2}, and proofs of all stated results in Section \ref{sec:theory_cut2}.
	\item[] {\bf Part~A4}: Additional results for Simulation~1 when the sample size is $n=100$ and $n=500$.
	\item[] {\bf Part~A5}: Additional results for Simulation~2 when the sample size is $n=100$ and $n=500$.
\end{itemize}
\newpage

\section{Additional details for the simulation studies}

\subsection{Evaluation of conventional joint posterior}
In both Simulations~1 and~2, to compute the conventional (i.e. uncut) joint posterior exactly
using MCMC methods, we employed a Metropolis-Hastings (MH) sampler with a multivariate normal independence proposal centered at the posterior mode with covariance matrix given by the negative inverse Hessian evaluated at the mode.

\subsection{Evaluation of cut posteriors using MCMC}
Because the simulation examples are bivariate, it is possible to use MCMC to compute the cut posteriors using Algorithm~\ref{alg:nestedmcmc}. For the type~1 cut posterior in Simulation~1, in the first step we draw from the marginal cut posterior of $\thetavec$. To do so, a MH sampler is used with a normal approximation centred at the cut posterior mode as a proposal with 15,000 draws and
5,000 burn-in. In the second step, for each draw of
$\thetavec$, one draw of the copula parameter $\tau$ is obtained from its conditional  posterior using a second MH sampler with a normal approximation as a proposal and a burn-in of 100 iterations.

For the type~2 cut posterior in Simulation 2, in the first step we draw from the marginal cut posterior of $\tau$. To do so, a MH sampler is used with a
normal approximation centred at the cut posterior model as a proposal with 15,000 draws and 5,000 burn-in. In the second step, for each draw of $\tau$, one draw of the marginal parameters $\thetavec$ is obtained from their conditional posterior using 
a second MH sampler with a normal approximation as a proposal and a burn-in of 1000 iterations, 

\subsection{Evaluation of variational posteriors}
For the VI methods Gaussian  approximations to the cut and conventional posteriors are used as discussed in the manuscript. Algorithm~\ref{alg:vigeneral} is used to evaluate both the cut posterior in Section~\ref{sec:cutting1} and the 
cut augmented posterior in Section~\ref{sec:cutting2},
When solving all the variational optimizations, the optimal variational parameters are obtained using stochastic gradient descent (SGD). The 
ADADELTA automatic learning rate is used~\citep{zeiler12} with global learning rate $\rho = 0.85$ and conditioning perturbation $\epsilon = 10^{-6}$. The SGD algorithm is run for a conservative 10000 steps; see~\cite{ong+ns16} for further details on the application of SGD to variational optimization.
\newpage

\section{Assumptions and Results: Type~1 Cut Posterior}\label{app:dtf}
\subsection{Maintained Assumptions: Type~1 Cut Posterior}
The KL minimising point between the class $g_1(\MD\mid\bt)$ and the true density $p^{}_0$ can be defined as $g_1(\cdot\mid\bt_0)$, where
$$
\bt_0:=\argmin_{\bt\in\Theta}D_{\text{KL}}\left\{{p^{}_0}|| g_1(\cdot\mid\bt)\right\}.
$$Similarly, we can define the KL minimising point between the class $g_2(\MD\mid\bp,\bt_0)$ and $p^{}_0$ as $g_2(\cdot\mid\bp_0,\bt_0)$, where
$$
\bp_0:=\argmin_{\bp\in\Psi}D_{\text{KL}}\left\{{p^{}_0}|| g_2(\cdot\mid\bp,\bt_0)\right\}.
$$

Letting $\E(\cdot)$ denote expectation under $p^{}_0$,  define the following second derivative matrices:
\begin{flalign}
	\mathcal{I}:=-\lim_{n\rightarrow+\infty}\E\nabla_{\bt\bt}^{2}n^{-1}\log g_1(\MD\mid\bt_{0}),\quad \mathcal{M}:=-\lim_{n\rightarrow+\infty}\E\nabla_{\be\be}^{2}n^{-1}\log g_2(\MD\mid\be_{0}),
\end{flalign}where we can partition the matrix $\mathcal{M}$ as 
$$
\mathcal{M}=\begin{pmatrix}
	\mathcal{M}_{11}&\mathcal{M}_{12}\\\mathcal{M}_{21}&\mathcal{M}_{22}
\end{pmatrix},\text{ where }\mathcal{M}_{ij}:=-\lim_{n\rightarrow+\infty}\E\nabla_{\be_i\be_j}^{2}n^{-1}\log g_2(\MD\mid\be_{0}).
$$Lastly, define the matrix 
$$
V:=\begin{pmatrix}
	\mathcal{I}^{-1}&-\mathcal{I}^{-1}\mathcal{M}_{12}\mathcal{M}_{22}^{-1}\\-\mathcal{M}_{22}^{-1}\mathcal{M}_{21}\mathcal{I}^{-1}&\mathcal{M}_{22}^{-1}+\mathcal{M}_{22}^{-1}\mathcal{M}_{21}\mathcal{I}^{-1}\mathcal{M}_{12}\mathcal{M}_{22}^{-1}
\end{pmatrix}.
$$

The following smoothness assumptions are used to obtain the large sample results in Sections \ref{sec:theory_cut1}, and in particular to show that the IFM estimator and the type 1 cut posterior mean are asymptotically equivalent. We discuss these assumption in Section   \ref{sec:discuss1}.
\begin{assumption}\label{ass:cons}
	The data are independent and identically distributed (iid).The function $$g(\be)=g_1(\bt)+g_2(\bp,\bt)=\lim_{n\rightarrow+\infty}\E[ n^{-1}\log g(\mathcal{D}|\be)]$$ exist and is such that: 1) for all $\delta>0$, $g_1(\bt_0)> \sup_{\|\bt-\bt_0\|>\delta}g_1(\bt)$; 2) for all $\delta>0$, $g_2(\bp_0,\bt_0)> \sup_{\|\bp-\bp_0\|>\delta}g_2(\bp,\bt_0)$.  The parameter space $\Theta\times\Phi$ is compact, and the function $\log g(\cdot\mid\be)$ is continuously differentiable at each $\be\in\Theta\times\Psi$ with probability one. For each $n\ge1$, $\E[\sup_{\be\in\Theta\times\Psi}\|\nabla_{\be} \log g(\mathcal{D}|\be)/n\|]<\infty$. 
\end{assumption}
To obtain the asymptotic distribution of the IFM we require some additional conditions on $m_n(\be):=n^{-1}(\{\nabla_{\be} \log g_1(\MD\mid\bt)\}^\top,\{\nabla_{\bp}\log g_2(\MD\mid\be)\}^\top)^\top$. 
\begin{assumption}\label{ass:dist2}
	The parameter $\be_0\in\mathrm{Int}(\Theta\times\Psi)$; there exists a $\delta>0$ such that, for all $\|\bt-\bt_0\|\le\delta$, $m_n(\be)$ is continuously differentiable in $\be$, and $\E[\sup_{\be:\|\be-\be_0\|\le\delta}\|\partial m_n(\be)/\partial\be^\top\|]<\infty$, and $\E[\|m_n(\be)\|^2]<\infty$. The matrix $\E[\partial m_n(\be)/\partial\be^\top]|_{\be=\be_0}$ is invertible.
\end{assumption}

\subsection{Discussion of Maintained Assumptions: Type~1 Cut Posterior}\label{sec:discuss1}

Assumption \ref{ass:cons}-\ref{ass:dist2} are standard regularity conditions employed in the analysis of frequentist point estimators obtained in two-steps. These conditions constitute a formalisation of the regularity conditions implicitly maintained in \cite{joe2005}, and which were used to deduce the asymptotic distribution of the IFM point estimator. 

Assumption \ref{ass:cons} is sufficient to deduce consistency of $\widehat\be$ for $\be_0$ using existing results on two-step estimators, see, e.g., Theorem 2.1 in \cite{newey1994large}. The first part of Assumption \ref{ass:cons} constitutes the usual identification condition required to deduce consistency of point estimators, while the second part of Assumption \ref{ass:cons} - in particular the differentiability and boundedness of the gradients - and compactness of $\Theta\times\Phi$ is sufficient to deduce a uniform law of large numbers that would ultimately drive consistent point estimation.

Assumption \ref{ass:cons} contains smoothness conditions on the gradient $m_n(\be)$ and are used to ensure that a Taylor-series expansion in a neighborhood of $\be_{0}$ can be applied, and that the remainder term in this expansion can be suitably controlled.  When used in conjunction with Assumption \ref{ass:cons}, Assumption \ref{ass:dist2} then allows us to obtain asymptotic normality of the IFM estimator $\sqrt{n}(\widehat\be-\be_0)$ using existing results, such as, e.g., Theorem 6.1 of \cite{newey1994large}. 

We recall that a key goal of Section \ref{sec:theory_cut1} is to demonstrate the theoretical equivalence between the IFM point estimator and the type~1 cut posterior mean. Consequently, the main goal of these assumptions is to help facilitate this comparison. As such, these conditions are far from the weakest set of assumptions under which the Bernstein-von Mises (BvM) results in Lemmas \ref{lem:two} and \ref{lem:ifm2} can be obtained. More generally, Lemmas  \ref{lem:two} and \ref{lem:ifm2} can be obtained without several components of Assumption \ref{ass:cons}, such as the compactness assumption, and the conditions that guarantee smoothness of the gradients.

In the context of the type~1 cut posterior, the compactness assumption can be weakened by assuming that the prior has sufficient regularity so that we can control the posterior mass outside of $\|\be-\be_{0}\|\ge \varepsilon$ for any $\varepsilon>0$ (see, e.g., Theorem 4 of \citealp{shen2001rates}). Twice continuous differentiability of the gradients can be weakened to continuity of $\log g(\MD\mid\be)$, and twice differentiability in a suitable neighborhood of $\be_0$. Alternatively, one may simply impose the following high-level regularity that is similar to that utilized in Theorem 2 of \cite{shen2001rates}, Theorem 2 in \cite{miller2021asymptotic}, and Theorem 1 in \cite{chernozhukov2003mcmc}: for any $\delta>0$, there exists $\varepsilon>0$ such that
\begin{equation}\label{eq:alt_smooth}
	\liminf_{n\rightarrow+\infty}P_0^{}\left[\sup_{\|\be-\be_0\|\ge\delta}\frac{1}{n}\left\{\log g(\MD\mid\be)-\log g(\MD\mid\be_0)\right\}\le-\varepsilon\right]=1.
\end{equation}The condition in \eqref{eq:alt_smooth}, along with conditions on the prior, are sufficient to control the behavior of the posterior outside of balls of radius $\|\be-\be_{0}\|\le\delta/\sqrt{n}$, for $\delta>0$. Given this condition, so long as the density $g(\MD\mid\be)$ is sufficiently smooth on $\|\be-\be_{0}\|\le\delta$ to permit a well-behaved quadratic expansion, we can dispense with the more stringent smooth conditions maintained in Assumption \ref{ass:cons}-\ref{ass:dist2}. 

Therefore, while it is possible to deduce the BvM results in  Lemmas  \ref{lem:two} and \ref{lem:ifm2} under conditions that are  weaker than Assumptions \ref{ass:cons}-\ref{ass:dist2}, such conditions are not representative of the conditions that are used to deduce the large sample behavior of the IFM estimator. Consequently, we have used the more stringer set of conditions rather than the weaker ones encountered when proving BvM results.

\subsection{Proofs of Results: Type~1 Cut Posterior}

Lemmas \ref{lem:ifm1}-\ref{lem:ifm2} in Section \ref{sec:theory_cut1} are a direct consequence of the following general result.
\begin{theorem}\label{thm:joint}
	If Assumptions \ref{ass:cons}-\ref{ass:dist2} in Appendix \ref{app:dtf} are satisfied, then
	$$
	\int_{}\|\be\|\left|p_{\cut}(\be|\MD)-\phi_{N}\{\be;\widehat\be,n^{-1}V\}\right| d  \be=o_p(1).
	$$ 
	
\end{theorem}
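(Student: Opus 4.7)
The plan is to factor the cut posterior as $p_{\cut}(\be|\MD) = p_{\cut}(\bt|\MD)\,p(\bp|\bt,\MD)$ and analyse the two factors via separate Bernstein--von Mises (BvM) expansions, then recombine and identify the Gaussian limit. First, because $p_{\cut}(\bt|\MD)$ is constructed from the marginal-module likelihood $g_1(\MD|\bt)$ and prior $p(\bt)$ only, the assumptions of Section~\ref{sec:theory_cut1} (in particular A\ref{ass:cons}--A\ref{ass:dist2}, which imply well-separation of the KL-minimiser $\bt_0$, a positive-definite information matrix $\mathcal{I}$, and stochastic equicontinuity of the score) deliver a standard misspecified BvM for this marginal cut posterior, giving Lemma~\ref{lem:two} as an immediate by-product. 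I would strengthen it to convergence in \emph{first-moment} weighted total variation, i.e.\ $\int(1+\|\bt\|)|p_{\cut}(\bt|\MD)-\phi_N(\bt;\widehat\bt,[n\mathcal{I}]^{-1})|\,d\bt = o_p(1)$, by invoking the uniform integrability that follows from the log-concave-style tail bounds available under A\ref{ass:cons}.

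Second, for each fixed $\bt$ in a shrinking $O(n^{-1/2})$ neighbourhood $B_n$ of $\widehat\bt$, treat $p(\bp|\bt,\MD)\propto g_2(\MD|\bt,\bp)p(\bp)$ as an ordinary parametric posterior in $\bp$ and apply BvM conditionally, obtaining $p(\bp|\bt,\MD) \approx \phi_N(\bp;\widehat\bp(\bt),[n\mathcal{M}_{\bp\bp}]^{-1})$ where $\widehat\bp(\bt):=\argmax_\bp \log g_2(\MD|\bt,\bp)$ and $\mathcal{M}_{\bp\bp}$ is the second-derivative block already defined. Differentiating the first-order condition $\nabla_\bp \log g_2(\MD|\bt,\widehat\bp(\bt))=0$ in $\bt$ via the implicit function theorem and using A\ref{ass:dist2} yields the Taylor expansion
\begin{equation*}
\widehat\bp(\bt) = \widehat\bp \,-\, \mathcal{M}_{\bp\bp}^{-1}\mathcal{M}_{\bp\bt}(\bt-\widehat\bt) + o_p(n^{-1/2})
\end{equation*}
uniformly over $\bt\in B_n$, which also gives Lemma~\ref{lem:ifm1}.

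Third, assemble the pieces. Conditionally on $\bt$, $\bp$ is approximately $N(\widehat\bp - \mathcal{M}_{\bp\bp}^{-1}\mathcal{M}_{\bp\bt}(\bt-\widehat\bt),[n\mathcal{M}_{\bp\bp}]^{-1})$, and marginally $\bt\sim N(\widehat\bt,[n\mathcal{I}]^{-1})$. A direct computation of the induced joint Gaussian via the law of total variance gives $\Var(\bt)=\mathcal{I}^{-1}/n$, $\Cov(\bt,\bp) = -\mathcal{I}^{-1}\mathcal{M}_{\bt\bp}\mathcal{M}_{\bp\bp}^{-1}/n$, and $\Var(\bp)=\{\mathcal{M}_{\bp\bp}^{-1}+\mathcal{M}_{\bp\bp}^{-1}\mathcal{M}_{\bp\bt}\mathcal{I}^{-1}\mathcal{M}_{\bt\bp}\mathcal{M}_{\bp\bp}^{-1}\}/n$, which is exactly $n^{-1}V$. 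The final bound on $\int \|\be\|\,|p_{\cut}(\be|\MD)-\phi_N(\be;\widehat\be,n^{-1}V)|\,d\be$ follows from the triangle decomposition
\begin{equation*}
p_{\cut}(\be|\MD) - \phi_N(\be;\widehat\be,n^{-1}V) = \big[p_{\cut}(\bt|\MD) - \phi_N(\bt;\widehat\bt,[n\mathcal{I}]^{-1})\big]p(\bp|\bt,\MD) + \phi_N(\bt;\widehat\bt,[n\mathcal{I}]^{-1})\,R_n(\be),
\end{equation*}
where $R_n$ collects the conditional-BvM error and the Taylor remainder; integrating against $\|\be\|$ and applying Cauchy--Schwarz with the first-moment weighted BvM from step one and the uniform conditional BvM from step two finishes the argument. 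Lemmas~\ref{lem:ifm1}--\ref{lem:ifm2} then follow by taking appropriate marginals of the joint Gaussian limit.

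The main obstacle is the uniformity required in step two: the conditional BvM must hold \emph{uniformly} for $\bt\in B_n$, and the error must be integrable against the outer Gaussian in $\bt$. This needs a stochastic equicontinuity argument for $\nabla^2_{\bp\bp}\log g_2(\MD|\bt,\bp)$ in $(\bt,\bp)$ and uniform control of the remainder in the Taylor expansion of $\widehat\bp(\bt)$, both of which I expect to extract from A\ref{ass:dist2} combined with the compactness in A\ref{ass:cons}. The extra $\|\be\|$ weight also demands tail control on the cut posterior, which requires verifying that the cut posterior inherits exponential-type tail concentration from the misspecified likelihood—standard under A\ref{ass:cons} but requires explicit verification since the cut construction is nonstandard.
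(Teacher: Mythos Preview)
Your approach is sound but takes a genuinely different route from the paper's proof. The paper does not factor the cut posterior into marginal times conditional and analyse each piece separately; instead it sets up the single joint criterion $Q_n(\bt,\bp)=\log g_1(\MD|\bt)+\log g_2(\MD|\bp,\bt)$, verifies that under Assumptions~\ref{ass:cons}--\ref{ass:dist2} this $Q_n$ meets the sufficient conditions of Corollary~1 in Frazier and Nott (2022), and then reads off the first-moment weighted BvM for the re-centred and rescaled parameter $t=\sqrt{n}(\be-\be_0)-Z_n/\sqrt{n}$ with $Z_n=V(\nabla_\bt\log g_1(\MD|\bt_0)^\top,\nabla_\bp\log g_2(\MD|\bp_0,\bt_0)^\top)^\top$, from which the stated result follows by location--scale invariance of total variation. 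Your sequential decomposition is more transparent about where each block of $V$ originates---the implicit-function expansion of $\widehat\bp(\bt)$ makes the off-diagonal $-\mathcal{I}^{-1}\mathcal{M}_{\bt\bp}\mathcal{M}_{\bp\bp}^{-1}$ block visible---and is self-contained, not relying on an external corollary. The price is exactly the obstacle you flag: you need the conditional BvM for $\bp|\bt$ to hold uniformly over $\bt$ in a shrinking neighbourhood \emph{and} you must separately bound the contribution from $\bt$ outside that neighbourhood, where the conditional posterior mean $\int\|\bp\|\,p(\bp|\bt,\MD)\,d\bp$ is not controlled by the conditional BvM. The paper's joint treatment sidesteps both issues by working directly in the $t$-parametrisation, at the cost of deferring the hard work to the cited reference.
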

\begin{proof}[Proof of Theorem \ref{thm:joint}]
	Let $Q_n(\bt,\bp)=\log g_1(\MD\mid\bt)+\log g_2(\MD\mid\bp,\bt)$. Assumptions \ref{ass:cons}-\ref{ass:dist2} are sufficient to apply the result of Corollary 1 in \cite{frazier2022cutting} to $Q_n(\bt,\bp)$. Hence, for 
	$$
	t:=\sqrt{n}(\be-\be_0)-Z_n/\sqrt{n},\quad Z_n=V^{}\begin{pmatrix}
		\nabla_{\bt} \log g_1(\MD\mid \bt_0)\\\nabla_{\bp} \log g_2(\MD\mid \bp_0,\bt_0)
	\end{pmatrix}
	$$the posterior for $t$, $p_{\cut}(t\mid \MD)=p_{\cut}(\be_0+Z_n/\sqrt{n}+t/\sqrt{n}\mid\MD)/\sqrt{n}^{d_{\be}}$, satisfies
	\begin{flalign}\label{eq:result1}
		\int_{}\|t\||p_{\cut}(t\mid\MD)-\phi_{N}\{t;0,V^{}\}|dt=o_p(1).
	\end{flalign}
	Since $\|t\|\ge0$, \eqref{eq:result1} directly implies 
	\begin{flalign}\label{eq:result2}
		\int_{}|p_{\cut}(t\mid\MD)-\phi_{N}\{t;0,V^{}\}|dt=o_p(1).
	\end{flalign}
\end{proof}

\begin{proof}[Proof of {Lemmas \ref{lem:two}-\ref{lem:ifm2}.}]
	Since the total variation distance is invariant with respect to a change of location and/or scale, equation \eqref{eq:result2} immediately implies that 
	\begin{flalign*}
		\int_{}|p_{\cut}(\be\mid\MD)-\phi_{N}\{\be;\be_0+Z_n/n,n^{-1}V^{}\}|dt=o_p(1).
	\end{flalign*}
	However, noting that $Z_n/\sqrt{n}=\sqrt{n}(\widehat\be-\be_0)+o_p(1)$, which follows from \cite{joe2005}, we can re-express the above result as 
	\begin{flalign}\label{eq:result3}
		\int_{}|p_{\cut}(\be\mid\MD)-\phi_{N}\{\be;\widehat\be,n^{-1}V^{}\}|dt=o_p(1).
	\end{flalign}
	The results in Lemmas \ref{lem:two}-\ref{lem:ifm2} then follow directly from the structure of $V^{}$, and marginalization of Gaussian random variables. 
\end{proof}

\begin{proof}[Proof of {Lemma \ref{lem:ifm1}.}]
	To obtain the stated result in Lemma \ref{lem:ifm1}, first note that, from the change of variables $\be=\be_0+Z_n/n+t/\sqrt{n}$,
	$$
	\bar\be=\int \be p_{\cut}(\be\mid\MD)d\be=\int (\be_0+Z_n/n+t/\sqrt{n})p_{\cut}(t\mid\MD)dt,
	$$which yields
	$$
	\sqrt{n}(\bar\be-\be_0)-Z_n/\sqrt{n}=\int t p_{\cut}(t\mid\MD)dt=\int t[p_{\cut}(t\mid\MD)-\phi_{N}\{t;0,V^{}\}]dt.
	$$
	Bounding 
	$$
	\int t[p_{\cut}(t\mid\MD)-\phi_{N}\{t;0,V^{-1}\}]dt\le \int \|t\||p_{\cut}(t\mid\MD)-\phi_{N}\{t;0,V^{}\}|dt
	$$and using \eqref{eq:result1} then yields 
	$$
	o_p(1)=\|\sqrt{n}(\bar\be-\be_0)-Z_n/\sqrt{n}\|=\|\sqrt{n}(\bar\be-\widehat\be)\|.
	$$The stated result then follows since $Z_n/\sqrt{n}=\sqrt{n}(\widehat\be-\be_0)+o_p(1)$ under the maintained assumptions.	
\end{proof}
\newpage

\section{Assumptions and Results: Type~2 Cut Posterior}\label{app:rank}
To simplify the presentation we restrict the analysis to the case $m=2$. All results and derivations can be extended to $m\ge2$ at the cost of more cumbersome notations. Let $F_{1n},F_{2n}$ denote the re-scaled empirical CDF based on $(y_{11},y_{21}),\cdots,(y_{1n},y_{2n})$ as $$F_{jn}(y_{j})=\frac{r(y_{j})}{(1+n)}=\frac{1}{1+n}\sum_{k=1}^{n}\mathds{1}(y_{jk}\le y_j),$$ where $r(y_j)=\sum_{k=1}^{n}\mathds{1}(y_{jk}\le y_j)$ is the rank of $y_j$. Let
$$
C_n(u_1,u_2)=\frac{1}{1+n}\sum_{k=1}^{n}\mathds{1}[F_{1n}(y_{1k})\le u_1,F_{2n}(y_{2k})\le u_2]
$$ denote the re-scaled empirical copula process. Let $U_k=(U_{1,k},U_{2,k})^\top$ with $U_{j,k}=F_{j0}(y_j)$, where, for $j=1,2$, $F_{j0}(y_j)$ denotes the true CDF of $y_j$. For some unknown copula function 
$
C_0(u_1,u_2)
$ we have
$$
P^{}_0(y_{1,k},y_{2,k})=C_0(U_{1,k},U_{2,k}).
$$ 

Our preliminary interest is in the asymptotic behavior of statistics of the form
$$
R_n(\bp)=\frac{1}{1+n}\sum_{k=1}^{n}J\{F_{1n}(y_{1k}),F_{2n}(y_{2k});\bp\}=\int J(u_1,u_2;\bp)d C_n(u_1,u_2),
$$ for some $u_1,u_2\in[0,1]^2$,   and $J:[0,1]^2\times \Psi\rightarrow\mathcal{J}$ continuously differentiable in both arguments. Recall the logarithm of the pseudo rank likelihood at \eqref{eq:rlike} in the main text:
\begin{flalign*}
	M_n(\bp)&=\sum_{k=1}^{n} \log\{\Delta^{b_{k1}}_{a_{k1}}\Delta^{b_{k2}}_{a_{k2}}C(v;\bp)\}\\&= \sum_{k=1}^{n}\log \left[\left\{C(b_{1k},b_{2k})-C(b_{1k},a_{2k})\right\}-\left\{C(a_{1k},b_{2k})-C(a_{1k},a_{2k})\right\}\right],
\end{flalign*}
where we recall that  
$$
b_{jk}=\frac{r(y_{jk})}{(1+n)}=F_{jn}(y_{jk}),\text{ and } a_{jk}=b_{jk}-\frac{1}{(1+n)}.
$$Hence, for $u_1,u_1',u_2,u_2'\in(0,1)^4$, and $u_1',u_2'$ such that $u_1'=u_1-(1+n)^{-1}$, $u_2'=u_2-(1+n)^{-1}$, letting
\begin{equation}\label{eq:newJ}
	J\{u_1,u_2;\bp\}=\log\left[ \{C(u_1,u_2;\bp)-C(u_1,u_2';\bp)\}-\{C(u_1',u_2;\bp)-C(u_1',u_2';\bp)\}\right],	
\end{equation}
we can restate the rank log-likelihood as 
$$
M_n(\bp)=(1+n)\int J(u_1,u_2;\psi)d C_n(u_1,u_2).
$$So long as $J(\cdot,\cdot;\bp)$ is twice continuously differentiable in $\bp$, we can obtain the behavior of ${M}_{n,\bp}(\bp)=\nabla_{\bp}M_n(\bp)/(1+n)$ and ${M}_{n,\bp,\bp}(\bp)=\nabla^2_{\bp,\bp}M_n(\bp)/(1+n)$ using the behavior of the copula process $C_n(u_1,u_2)$. Lastly, recall the definition
$
\mathcal{M}(\bp)=\lim_{n\rightarrow+\infty}\E M_n(\bp)/(1+n).
$

\subsection{Maintained Assumptions: Type~2 Cut Posterior}

To analyze the behavior of statistics like $R_n(\bp)$, we impose the following regularity conditions. 

\begin{assumption}\label{ass:DGP1}(i) $Y_1=\left(Y_{11}, \ldots, Y_{m 1}\right)^{\top}, \ldots, Y_n=\left(Y_{1 n}, \ldots, Y_{m n}\right)^{\top}$ is an independent and identically distributed sample from the unknown distribution $F_0\left(y_1, \ldots, y_m\right)$ with continuous marginal distributions $F_{10}, \ldots, F_{m0}$. (ii) The true (unknown) copula function $C_0\left(u_1, \ldots, u_m\right)$ has continuous partial derivatives. 
	
\end{assumption}	

\begin{assumption}\label{ass:Jfun}
	$J:[0,1]^m \rightarrow(-\infty, \infty)$ is a continuous function having continuous partial derivatives $J_j(u)=\partial J(u) / \partial u_j$ on $(0,1)^m$ for $j=1, \ldots, m$. Suppose:
	(i) $|J(u)| \leq$ constant $\times \prod_{j=1}^m\left\{u_j\left(1-u_j\right)\right\}^{-a_j}$ for some $a_j \geq 0$ such that
	$$
	\mathrm{E}\left[\prod_{j=1}^m\left\{U_{j t}\left(1-U_{j t}\right)\right\}^{-2 a_j}\right]<\infty;
	$$(ii) $\left|J_k(u)\right| \leq$ constant $\times\left\{u_k\left(1-u_k\right)\right\}^{-b_k} \prod_{j=1, j \neq k}^m\left\{u_j\left(1-u_j\right)\right\}^{-a_j}$ for some $b_k>a_k$ such that
	$$
	\mathrm{E}\left[w_k\left(U_{k t}\right)\left\{U_{k t}\left(1-U_{k t}\right)\right\}^{-b_k} \prod_{j=1, j \neq k}^m\left\{U_{j t}\left(1-U_{j t}\right)\right\}^{-a_j}\right]<\infty
	$$
	for $k=1, \cdots, m$.
\end{assumption}
\begin{assumption}\label{ass:crit1} (i) For any $u \in(0,1)^m$ and $J(u;\bp)$ defined in \eqref{eq:newJ}, $J(u;\bp)$ is a twice-continuously differentiable function of $\bp$; (ii) $\mathrm{E}\left\{\sup _{\bp \in \Phi}\left|J(U_{k};\bp)\right|\right\}<\infty$. (iii) $\mathrm{E}\left\{J(U_{k};\bp)\right\}$ has a unique maximum $\bp_\star$ in $\Phi$, where $\Phi$ is a compact subset of $\mathbb{R}^{d_{\bp}}$; (iv) the map $\bp\mapsto\mathcal{M}_{\bp\bp}(\bp)$ is positive definite at $\bp=\bp_\star$.	
\end{assumption}

\subsection{Discussion of Maintained Assumptions: Type~2 Cut Posterior}\label{sec:discuss2}
Herein, we provide a brief discussion of Assumptions \ref{ass:DGP1}-\ref{ass:crit1}, and suggest when it may be possible to weaken them in certain examples. Similar to Assumptions \ref{ass:cons}-\ref{ass:dist2},  Assumptions \ref{ass:DGP1}-\ref{ass:crit1} ensure that the posterior based on the pseudo rank likelihood concentrates sufficiently fast, and has enough smoothness to permit a quadratic approximation around $\bp_{\star}$. 

Assumption \ref{ass:DGP1} and \ref{ass:crit1} are common assumptions employed in the study of copula processes, and ensures that the criterion function $M_n(\bp)$ and its derivatives are well-behaved (see, e.g., \citealp{genest95} for similar assumptions). Together, these assumptions enforce concentration of the type~2 cut posterior for $\bp$. Assumption \ref{ass:Jfun} and \ref{ass:crit1} together jointly enforce asymptotic normality of the type~2 cut posterior in a neighborhood of $\bp_\star$. Assumptions like \ref{ass:Jfun} are well-known in the literature on copula modeling, and ensures that the derivatives of $J(u;\bp)$ are allowed to explode at the boundaries of $[0, 1]^m$, but that this rate of explosion can be controlled. Similar condition can be found in Assumption A3 of \cite{chen2005pseudo}, and Condition 12 in \cite{alquier2023estimation}. 

To our knowledge, Assumptions \ref{ass:DGP1}--\ref{ass:crit1}  have been widely adopted in copula estimation based on general criterion functions. For example, \cite{alquier2023estimation} employ similar conditions to our Assumptions \ref{ass:DGP1}--\ref{ass:crit1} when proving consistency and asymptotic normality of copula dependence parameter estimates obtained by minimizing the maximum mean discrepancy divergence. While it may be possible to weaken Assumptions \ref{ass:DGP1}-\ref{ass:crit1} using approaches similar to those discussed in Section \ref{sec:discuss1}, the main goal of this analysis is not to present the weakest set of regularity conditions under which our results will hold, but to demonstrate the usefulness of cutting feedback methods in copula modeling. Hence, we leave an exploration of alternative assumptions under which the results in Section \ref{sec:theory_cut2} will be satisfied for further research. 

We note that in many copula models Assumptions \ref{ass:DGP1}-\ref{ass:crit1} are satisfied under moment assumptions on the true data generating process. Assumption \ref{ass:Jfun} is specifically tailored to allow for partial derivatives of $J(\cdot,\cdot;\bp)$ to explode at the boundaries of $[0,1]^m$, but not ``too quickly''. This condition is known to be satisfied for certain elliptical copulas, like the student-t or Gaussian copula. 

In the case of elliptical copulas, like the Gaussian, the compactness assumption on the parameter space $\Phi$ in Assumption \ref{ass:crit1} is satisfied. For Archimedean copulas, such as the Clayton or Gumbel copulas, the dependence parameters are not compactly supported, as these copulas can exhibit perfect dependence between the random variables in the limit. For example, in the case of the bivariate Gumbel copula, 
$\exp \left[-\left\{(-\log (u))^\alpha+(-\log (v))^\alpha\right\}^{1 / \alpha}\right]$ with $1\le\alpha<\infty$, when $\alpha$ is large the copula  exhibits stronger dependence and becomes arbitrarily close to perfectly dependent when $\alpha$ is very large; a similar argument to the above is also true for the Clayton copula.  However, if the random variables are not jointly singular, then they must exhibit finite dependence; i.e., in the Gumbel copula model it must be that $\alpha\le M<\infty$ for some $M$ arbitrarily large but finite. This means that in any realistic empirical example, it is almost without loss of generality to consider a compact parameter space, such as $[1,M]$  in the case of the Gumbel copula, with a  similar argument also being valid for other Archimedean copulas.  Consequently, the compactness assumption, while satisfied for certain copulas, and not for others, is really a technical regularity condition that one needs to deliver theoretical guarantees, but which does not impose any meaningful restriction on the types of behaviors that can be accurately modeled, and which are observed in practice.

\subsection{Proofs of Results: Type~2 Cut Posterior}

\subsubsection{Preliminary Lemmas}
Write $G_{C_0}$ to be a Gaussian process in $\ell^\infty([0,1]^m)$ defined as 
$$
G_{C_0}\left(u_1, \ldots, u_m\right)=B_{C_0}\left(u_1, \ldots, u_m\right)-\sum_{j=1}^m \frac{\partial C_0\left(u_1, \ldots, u_m\right)}{\partial u_j} B_{C_0}\left(1, \ldots, 1, u_j, 1, \ldots, 1\right),
$$
in which $B_{C_{0}}$ is a Brownian bridge on $[0,1]^m$ with covariance function
$$
\begin{aligned}
	\mathrm{E} & \left\{B_{C_0}\left(u_1, \ldots, u_m\right) B_{C_0}\left(u_1^{'}, \ldots, u_m^{'}\right)\right\}  =C_0\left(u_1 \wedge u_1^{'}, \ldots, u_m \wedge u_m^{'}\right)-C_0\left(u_1, \ldots, u_m\right) C_0\left(u_1^{'}, \ldots, u_m^{'}\right)
\end{aligned}
$$
for each $0 \leq u_1, \ldots, u_m, u_1^{'}, \ldots, u_m^{'} \leq 1$. 

The following lemmas are given in \cite{chen2005pseudo}, and are used to prove our main results. They are restated here for ease of reference. 
\begin{lemma}[Lemma 1 \cite{chen2005pseudo}]\label{lem:CF1}  Under the conditions in Assumption \ref{ass:DGP1}, the following are satisfied. 
	(a) The re-scaled empirical copula process $\left\{\sqrt{n}\left\{{C}_n(u)-C_0(u)\right\}: [0,1]^m\right\}$ converges weakly to the Gaussian process $\left\{G_{C_0}(u): [0,1]^m\right\}$ in $\ell^{\infty}\left([0,1]^m\right)$.
	(b) Let $\mathcal{H}$ be a class of functions $h:[0,1]^m \rightarrow(-\infty, \infty)$ which satisfies: for every $\delta>0$, $N_{[\cdot]}\left(\delta, \mathcal{H}, L_1\left(C_0\right)\right)<\infty$, where $N_{[\cdot]}\left(\delta, \mathcal{H}, L_1\left(C_0\right)\right)$ is the $L_1\left(C_0\right)$ bracketing number of the class $\mathcal{H}$. Then:
	$$
	\sup _{h \in \mathcal{H}}\left|\int_{[0,1]^m} h(u) d\left\{{C}_n(u)-C_0(u)\right\}\right| \rightarrow 0 \text { almost surely. }
	$$
\end{lemma}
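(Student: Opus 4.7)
The plan is to prove part~(a) via a functional linearization of the empirical copula process and part~(b) via a classical bracketing argument adapted to the rank-based measure $C_n$.

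For part~(a), I would first reduce to the case of uniform marginals via the probability-integral transform: letting $U_{j,k}=F_{j,0}(Y_{j,k})$, the rescaled empirical copula $C_n$ depends on the data only through the ranks, which are invariant under strictly monotone transformations of each coordinate. Hence without loss of generality I may assume $F_{j,0}=\mathrm{Id}$ and work directly with the $(U_{1,k},\ldots,U_{d,k})$. Letting $\mathbb{F}_n$ be the joint empirical CDF of the $U_k$'s, I would write
$$
\sqrt{n}\bigl(C_n(u)-C_0(u)\bigr)=\sqrt{n}\bigl(\mathbb{F}_n(F_{1n}^{-1}(u_1),\ldots,F_{dn}^{-1}(u_d))-C_0(u)\bigr)+o_p(1),
$$
where the $o_p(1)$ absorbs the $n/(n+1)$ rescaling. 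I would then split this into
$$
\sqrt{n}(\mathbb{F}_n-C_0)\bigl(F_{1n}^{-1}(u_1),\ldots,F_{dn}^{-1}(u_d)\bigr)\;+\;\sqrt{n}\bigl(C_0(F_{1n}^{-1}(u_1),\ldots,F_{dn}^{-1}(u_d))-C_0(u)\bigr).
$$
The first term converges uniformly to $B_{C_0}(u)$ by stochastic equicontinuity of the multivariate empirical process plus the uniform bound $\sup_{u_j}|F_{jn}^{-1}(u_j)-u_j|=O_p(n^{-1/2})$. The second term, by a first-order Taylor expansion that leans on the continuity of $\partial_j C_0$ in Assumption~\ref{ass:DGP1}(ii), equals $-\sum_j\partial_j C_0(u)\sqrt{n}(F_{jn}(u_j)-u_j)+o_p(1)$ uniformly in $u$, which converges jointly to $-\sum_j\partial_j C_0(u)B_{C_0}(1,\ldots,1,u_j,1,\ldots,1)$. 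Adding the two limits gives precisely $G_{C_0}(u)$ in $\ell^\infty([0,1]^d)$, and the joint weak convergence follows from the Donsker property of the class of lower-left indicator functions.

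For part~(b), I would use a standard bracketing argument. For given $\delta>0$, pick finitely many $L_1(C_0)$-brackets $[h_i^L,h_i^U]$ covering $\mathcal{H}$ with $\int(h_i^U-h_i^L)\,dC_0<\delta$. For any $h$ inside the $i$-th bracket,
$$
\int h\,d(C_n-C_0)\le \int h_i^U\,d(C_n-C_0)+\int(h_i^U-h_i^L)\,dC_0,
$$
with a symmetric lower bound, so it suffices to show that $\int f\,dC_n\to\int f\,dC_0$ almost surely for each of the finitely many endpoints $f$. Writing $\int f\,dC_n=(1+n)^{-1}\sum_k f(F_{1n}(Y_{1k}),\ldots,F_{dn}(Y_{dk}))$ and comparing with the oracle average $n^{-1}\sum_k f(U_{1,k},\ldots,U_{d,k})$ (which converges to $\int f\,dC_0$ by the strong law), the difference is controlled by the continuity of $f$ together with the uniform Glivenko-Cantelli statement $\sup_y|F_{jn}(y)-F_{j,0}(y)|\to 0$ a.s.\ for each coordinate; points of discontinuity of the bracketing endpoints are absorbed into the $\delta$ slack by standard smoothing inside the bracketing construction. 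Letting $\delta\downarrow 0$ through a countable sequence gives the desired uniform almost-sure convergence.

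The main obstacle is the linearization step in part~(a): establishing that $\sqrt{n}(\mathbb{F}_n-C_0)$ evaluated at the random arguments $(F_{1n}^{-1}(u_1),\ldots,F_{dn}^{-1}(u_d))$ differs from its evaluation at the deterministic arguments $(u_1,\ldots,u_d)$ by $o_p(1)$ uniformly in $u\in[0,1]^d$. This rests on a stochastic-equicontinuity modulus, namely $\sup_{\|u-u'\|\le\delta_n}|(\mathbb{F}_n-C_0)(u)-(\mathbb{F}_n-C_0)(u')|=o_p(1)$ for $\delta_n\to 0$, paired with the $n^{-1/2}$ rate for the coordinate-wise quantile processes. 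Modern empirical process tools (Donsker class of orthants, maximal inequalities under bracketing entropy) make this routine, but it is the technical heart of the argument and the place where Assumption~\ref{ass:DGP1}(ii) on the smoothness of $C_0$ is essential for the Taylor step to deliver the compensating correction yielding $G_{C_0}$ rather than $B_{C_0}$.
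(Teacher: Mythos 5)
The paper does not actually prove this statement: Lemma~\ref{lem:CF1} is imported verbatim from Chen and Fan (2005) and is restated, as the surrounding text says, only for ease of reference, so there is no internal proof to compare yours against. Judged on its own terms, your sketch follows the canonical route for both parts and is essentially sound. Part~(a) is the classical Stute/Fermanian--Radulovi\'c--Wegkamp decomposition of the empirical copula process into the uniformized empirical process evaluated at perturbed arguments plus a delta-method correction driven by the coordinatewise quantile processes; you correctly identify the asymptotic-equicontinuity step as the technical heart, and under Assumption~\ref{ass:DGP1}(ii) (continuous partial derivatives of $C_0$) the Taylor step delivers the compensating terms that turn $B_{C_0}$ into $G_{C_0}$ exactly as you describe.

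Two places are thinner than they should be. First, in part~(b) the measure $C_n$ is not the empirical measure of i.i.d.\ draws --- the pseudo-observations $(F_{1n}(Y_{1k}),\ldots,F_{dn}(Y_{dk}))$ are dependent across $k$ because each $F_{jn}$ is computed from the whole sample --- so the reduction to finitely many bracket endpoints cannot be closed by the strong law alone. Your comparison with the oracle average $n^{-1}\sum_k f(U_{1,k},\ldots,U_{d,k})$ is the right fix, but it genuinely requires the endpoints $f$ to be $C_0$-a.e.\ continuous with controlled growth near the boundary of the cube (this is precisely what Assumption~\ref{ass:Jfun} supplies in the paper's application), and your remark that discontinuities are ``absorbed into the $\delta$ slack by standard smoothing'' is an assertion rather than an argument. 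Second, the lemma as stated allows an arbitrary class $\mathcal{H}$ with finite $L_1(C_0)$ bracketing numbers, and for such classes the endpoint convergence $\int f\,dC_n\to\int f\,dC_0$ is exactly the nontrivial content; as written, your proof covers only the case where the brackets can be chosen with well-behaved endpoints. Neither issue is fatal for how the lemma is used downstream (the classes arising in Corollary~\ref{cor:two} consist of continuous functions satisfying the envelope conditions), but they are the points where the argument needs to be tightened to match the generality of the statement.
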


\begin{lemma}[Lemma 2 \cite{chen2005pseudo}]\label{lem:CF2}
	Suppose Assumption \ref{ass:DGP1} and \ref{ass:Jfun} hold. For $j=1, \ldots, m$, let $w_j$ in Assumption \ref{ass:Jfun} be a continuous function on $[0,1]$, positive on $(0,1)$, symmetric at $1 / 2$, increasing on $(0,1 / 2]$ and such that
	$$
	\int_0^1\left\{\frac{1}{w_j(v)}\right\}^2 d v<\infty
	$$
	Then
	$$
	\sqrt{n} \int_{[0,1]^m} J(u) d\left\{{C}_n(u)-C_0(u)\right\} \rightarrow N\left(0, \sigma_J^2\right) \text { in distribution }
	$$
	where
	$$
	\sigma_J^2=\operatorname{var}\left\{J\left(U_{1 t}, \ldots, U_{m t}\right)+\sum_{j=1}^m \int_{[0,1]^m} J_j(u) \times I\left(U_{j t} \leq u_j\right) d C_0(u)\right\}.
	$$
\end{lemma}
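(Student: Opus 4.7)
The plan is to derive this CLT as a consequence of (a) the weak convergence of the rescaled empirical copula process established in Lemma~\ref{lem:CF1}(a), together with (b) a Hadamard-differentiability style argument expressing $\int J\,dC$ as a continuous functional of $C$. So my starting point is $\sqrt{n}(C_n-C_0)\rightsquigarrow G_{C_0}$ in $\ell^\infty([0,1]^d)$. If I can establish an appropriate continuous mapping / delta-method type statement for the functional $\phi(C)=\int_{[0,1]^d} J(u)\,dC(u)$ at $C_0$, then $\sqrt n[\phi(C_n)-\phi(C_0)]$ will converge to a centred Gaussian whose variance I then identify with $\sigma_J^2$.

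The mechanism that makes $\phi$ a ``nice'' functional is multivariate integration by parts on the unit cube. Under Assumption~\ref{ass:Jfun}, for any signed measure $H=C_n-C_0$ with $H$ vanishing appropriately on the upper boundary (the empirical copula process does so up to a vanishing error), I would write
\[
\int_{[0,1]^d} J(u)\,dH(u) \;=\; (-1)^d\!\int_{[0,1]^d} H(u)\,d^{\,d} J(u) \;+\; \text{lower-dimensional boundary terms},
\]
and reduce the leading contribution to $-\sum_{j=1}^{d}\int J_j(u)\,H(u)\,du_j$ after successively integrating by parts and using the vanishing properties at $u_j=0$. Plugging $\sqrt{n}H=\sqrt n(C_n-C_0)\rightsquigarrow G_{C_0}$ into the resulting linear functional and invoking the continuous mapping theorem yields the limiting random variable $-\sum_{j=1}^d\int J_j(u)\,G_{C_0}(u)\,du$, which is Gaussian because $G_{C_0}$ is. Substituting the explicit representation
\[
G_{C_0}(u)=B_{C_0}(u)-\sum_{j=1}^d \partial_j C_0(u)\,B_{C_0}(1,\ldots,u_j,\ldots,1),
\]
and applying Fubini plus the covariance of $B_{C_0}$, I would obtain after routine (but lengthy) computation the closed form $\sigma_J^2=\mathrm{var}\{J(U_{1t},\ldots,U_{dt})+\sum_j \int J_j(u)\,I(U_{jt}\le u_j)\,dC_0(u)\}$, matching the stated variance.

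The main obstacle is purely technical: $J$ and its partials $J_j$ are allowed to blow up at the boundary of $[0,1]^d$, so the integration-by-parts identity and the application of the continuous mapping theorem are not directly justified by off-the-shelf delta-method results on $\ell^\infty$. This is exactly where the weight functions $w_j$ and the integrability conditions in Assumption~\ref{ass:Jfun} enter. My plan is to truncate: replace $J$ by $J^{(\epsilon)}$ supported on $[\epsilon,1-\epsilon]^d$, apply Lemma~\ref{lem:CF2}'s argument on the truncated problem (where everything is bounded and continuous), and then show the truncation error is uniformly negligible by bounding the tails via $|J|\le \prod_j\{u_j(1-u_j)\}^{-a_j}$ together with Chibisov-type weighted empirical-process estimates $\sup_u|C_n(u)-C_0(u)|/\prod_j w_j(u_j)=O_p(1)$. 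The finiteness assumptions on $\mathrm{E}\prod_j\{U_{jt}(1-U_{jt})\}^{-2a_j}$ and the analogous mixed moments for $J_k$ are what guarantee the truncation remainder is $o_p(1)$ in $n$ before $\epsilon\to 0$.

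Finally, to identify the variance I would organise the computation by noting that by definition of $G_{C_0}$, the random variable $\int J\,dG_{C_0}$ (interpreted via the integration-by-parts form above) equals in distribution $\sum_k[h(U_k)-\mathrm{E}h(U_k)]/\sqrt n$'s limit with $h(u)=J(u)+\sum_j\int J_j(v)I(u_j\le v_j)\,dC_0(v)$; this representation drops out cleanly from combining the Brownian-bridge covariance with the integration-by-parts kernel, and its variance is exactly $\sigma_J^2$, closing the argument.
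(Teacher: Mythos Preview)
The paper does not prove this lemma; it is quoted from Chen and Fan (2005) as an input to Corollary~\ref{cor:two} and Theorem~\ref{thm:ranks} and is restated without proof, so there is nothing in the paper itself to compare your argument against.

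Assessing your sketch on its own merits: the overall plan---linearize, truncate to tame the boundary singularities of $J$, control the truncation error via weighted empirical-process bounds using the $w_j$, then apply the CLT to the linearized statistic---is the right architecture and is essentially how Chen and Fan (following Ruymgaart) proceed. But the linearization step you write down is not correct. The claim that multivariate integration by parts reduces $\int_{[0,1]^d} J\,dH$ to $-\sum_{j=1}^d\int J_j(u)\,H(u)\,du_j$ is false for $d\ge 2$: iterated integration by parts on the cube produces the mixed partial $\partial_1\cdots\partial_d J$, not the sum of first partials $J_j=\partial_j J$. The first partials enter $\sigma_J^2$ through a different mechanism. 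Writing $\int J\,dC_n=(1+n)^{-1}\sum_k J(\widehat U_k)$ with $\widehat U_{jk}=F_{jn}(Y_{jk})$ and Taylor-expanding $J(\widehat U_k)$ about $J(U_k)$, the first-order term $\sum_j J_j(U_k)\{\widehat U_{jk}-U_{jk}\}$, after substituting $\widehat U_{jk}-U_{jk}=(1+n)^{-1}\sum_l\{I(U_{jl}\le U_{jk})-U_{jk}\}$ and exchanging the two sums, becomes $(1+n)^{-1}\sum_l\int J_j(u)\{I(U_{jl}\le u_j)-u_j\}\,dC_0(u)$, which is exactly the extra summand in $\sigma_J^2$. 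The weight functions $w_j$ and the moment bounds in Assumption~\ref{ass:Jfun} are then used to show that the Taylor remainder is $o_p(n^{-1/2})$. Your truncation and tail-control ideas are therefore on target, but they should be applied to the remainder of this Taylor expansion, not to justify an integration-by-parts identity that does not hold in the form you stated.
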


By applying Lemmas \ref{lem:CF1} and \ref{lem:CF2} we can prove the following results.

\begin{corollary}\label{cor:two}
	If Assumptions \ref{ass:DGP1}-\ref{ass:crit1} are  satisfied, then  the following hold. 
	\begin{enumerate}
		\item[(1)] $\sup_{\bp\in\Psi}|M_n(\bp)/(1+n)-\mathcal{M}_{}(\bp)|=o_p(1)$.
		\item[(2)] For any $\delta_n=o(1)$, and ,  $\sup_{\|\bp-\bp_\star\|\le\delta_n}\|\nabla^2_{\bp\bp}M_{n}(\bp)/(1+n)-\nabla^2_{\bp\bp}\mathcal{M}(\bp)\|=o_p(1)$.
		\item[(3)] $\sqrt{n}\nabla_{\bp} M_n(\bp_\star)/(1+n)\Rightarrow N(0,\Sigma)$, for some positive definite matrix 
		$$
		\Sigma=\operatorname{var}\left\{\tilde{J}\left(U_{1 t}, U_{2 t}\right)+\sum_{j=1}^2 \int_{[0,1]^2} \nabla_{u_j}\tilde{J}(u_1,u_2) \times I\left(U_{j t} \leq u_j\right) d C_0(u_1,u_2)\right\},
		$$ and $\tilde{J}(U_{1t},U_{2t})=\nabla_{\bp} J(u_1,u_2;\bp_\star)$, with $J(u_1,u_2;\bp_\star)$ defined in \eqref{eq:newJ}.
	\end{enumerate}		
\end{corollary}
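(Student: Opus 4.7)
The plan is to recognize that all three claims reduce to analyzing functionals of the rescaled empirical copula process $C_n$ of the form $\int \Phi(u_1,u_2;\bp)\,dC_n$ for appropriate integrand $\Phi$, and then to invoke Lemmas~\ref{lem:CF1} and~\ref{lem:CF2} together with standard empirical process arguments. Writing $M_n(\bp)/(1+n) = \int J(u_1,u_2;\bp)\,dC_n(u_1,u_2)$ with $J$ as in \eqref{eq:newJ}, and noting that $\mathcal{M}(\bp) = \int J(u_1,u_2;\bp)\,dC_0(u_1,u_2)$, I would first verify that $J$, $\nabla_{\bp}J$ and $\nabla_{\bp\bp}^2J$ all satisfy the integrability/growth bounds required by Assumption~\ref{ass:Jfun}, exploiting the fact that the differencing in \eqref{eq:newJ} cancels the leading singularities of the copula density near the boundary.

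For part~(1), I would apply Lemma~\ref{lem:CF1}(b) to the class $\mathcal{H} = \{J(\cdot,\cdot;\bp):\bp\in\Psi\}$. By compactness of $\Psi$ (Assumption~\ref{ass:crit1}(iii)) and the continuous differentiability of $\bp\mapsto J(u_1,u_2;\bp)$ coupled with the dominating function from Assumption~\ref{ass:Jfun}, a standard bracketing argument shows that $N_{[\cdot]}(\delta,\mathcal{H},L_1(C_0))<\infty$ for every $\delta>0$. Lemma~\ref{lem:CF1}(b) then gives $\sup_{\bp}|\int J(u;\bp)\,d(C_n-C_0)|\to 0$ a.s., and combining with the pointwise identification of $\mathcal{M}(\bp)$ yields the uniform convergence.

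For part~(2), I would apply the same uniform Glivenko-Cantelli argument to the matrix-valued class $\{\nabla_{\bp\bp}^2 J(\cdot,\cdot;\bp):\bp\in\Psi\}$, again checking finite $L_1(C_0)$ bracketing numbers using the smoothness from Assumption~\ref{ass:crit1}(i) and the growth bound from Assumption~\ref{ass:Jfun}. This gives uniform convergence of $\nabla_{\bp\bp}^2 M_n(\bp)/(1+n)$ to $\nabla_{\bp\bp}^2 \mathcal{M}(\bp)$ over all of $\Psi$, which in particular implies the stated result on the shrinking ball $\{\|\bp-\bp_\star\|\le\delta_n\}$ by the continuity (hence uniform continuity on compacta) of $\nabla_{\bp\bp}^2\mathcal{M}$.

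For part~(3), I would apply Lemma~\ref{lem:CF2} coordinatewise (via the Cram\'er-Wold device) to $\tilde{J}(u_1,u_2)=\nabla_{\bp} J(u_1,u_2;\bp_\star)$. Since $\bp_\star$ is an interior maximiser of $\mathcal{M}$ (Assumption~\ref{ass:crit1}(iii)--(iv)), the population first-order condition $\int \nabla_{\bp} J(u_1,u_2;\bp_\star)\,dC_0 = 0$ holds, so
\[
\sqrt{n}\,\nabla_{\bp} M_n(\bp_\star)/(1+n) \;=\; \sqrt{n}\int \tilde{J}(u_1,u_2)\,d(C_n-C_0)(u_1,u_2) + o_p(1),
\]
and Lemma~\ref{lem:CF2} delivers asymptotic normality with the variance $\Sigma$ displayed in the statement. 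The main obstacle I expect is purely technical: establishing the integrability of $J$ and its first two $\bp$-derivatives under the tail bounds of Assumption~\ref{ass:Jfun}. Because $J$ in \eqref{eq:newJ} is the logarithm of a second difference of $C$, some care is needed to bound $|J|$ and its derivatives near the corners of $[0,1]^2$ in terms of the boundary weights $\{u_j(1-u_j)\}^{-a_j}$, but once that bookkeeping is done the three conclusions follow directly from Chen and Fan's two lemmas.
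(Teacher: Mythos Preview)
Your proposal is correct and follows essentially the same route as the paper: represent $M_n(\bp)/(1+n)$ and its $\bp$-derivatives as integrals against $C_n$, invoke Lemma~\ref{lem:CF1}(b) with finite bracketing numbers (via compactness of $\Psi$ and smoothness in $\bp$) for parts~(1)--(2), and invoke Lemma~\ref{lem:CF2} for part~(3). The paper's proof is in fact terser than yours; it does not spell out the Cram\'er--Wold reduction, the centering via the first-order condition $\int\nabla_{\bp}J(u;\bp_\star)\,dC_0=0$, or the boundary bookkeeping for Assumption~\ref{ass:Jfun}, all of which you correctly flag as the places where care is needed.
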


\begin{proof}[Proof of Corollary \ref{cor:two}] Recall that we take $m=2$ for exposition purposes, and again note that this can be extended to $m>2$ at the cost of more cumbersome notations. We prove each result in turn. 
	
	\noindent \textbf{(1)} Note, $(1+n)^{-1}M_n(\bp)=\int_0^1J(u;\bp)dC_n(u)$, and $\mathcal{M}(\bp)=\int_{[0,1]^2} J(u;\bp)dC_0(u)$, for $J(u;\bp)$ as defined in \eqref{eq:newJ}. Since $\Phi$ is compact, by the continuity of $M_n(\bp)$ in Assumption \ref{ass:crit1}(i), the class of functions $\mathcal{H}:=\{J(u;\bp):\bp\in\Phi\}$ is such that $N_{[\cdot]}(\delta,\mathcal{H},L_1(C_0))<\infty$. Hence, Lemma \ref{lem:CF1}(b) implies the stated result.  
	\\
	
	\noindent\textbf{(2)} Now, write $\nabla^2_{\bp\bp}M_n(\bp_\star)/(1+n)=\int_{[0,1]^2} h(u;\bp)dC_n(u)$, and $\nabla^2_{\bp\bp}\mathcal{M}(\bp_\star)=\int_{[0,1]^2}h(u;\bp_\star) dC_n(u) $, for $h(u;\bp)=\nabla^2_{\bp\bp}J(u;\bp)$, which is continuous by Assumption \ref{ass:crit1}(1). For any $\delta_n=o(1)$, the class of functions $\mathcal{H}:=\{\nabla_{\bp\bp}^2J(u;\bp):\|\bp-\bp_\star\|\le\delta_n\}$ satisfies $N_{[\cdot]}(\delta,\mathcal{H},L_1(C_0))<\infty$. Hence, Lemma \ref{lem:CF1}(b) implies the stated result.
	\\
	
	\noindent\textbf{(3)} Rewriting $\sqrt{n}\nabla_{\bp}M_n(\bp_\star)/(1+n)=\sqrt{n}\int_{[0,1]^2}\{\nabla_{\bp}J(u;\bp_\star)\}d\{C_n(u)-C_0(u)\}$, the stated result follows directly by Lemma \ref{lem:CF2}. 	
\end{proof}	

\subsubsection{Proofs of Results: Type~2 Cut Posterior}

\begin{proof}[Proof of Theorem \ref{thm:ranks}]
	Recall the pseudo rank log-likelihood given by $M_n(\bp)$, and rewrite the cut posterior for $\bp$ as 
	$$
	p_{\cut}(\bp\mid\mathcal{D})=\frac{p_{\text{PL}}(r(\mathcal{D})\mid\bp)p(\bp)}{\int_{\Psi}p_{\text{PL}}(r(\mathcal{D}\mid\bp)p(\bp)d\bp}=\frac{\exp\{M_n(\bp)-M_n(\bp_\star)\}p(\bp)}{\int_{\Psi}\exp\{M_n(\bp)-M_n(\bp_\star)\}p(\bp)d\bp}.
	$$	For any  $\delta=o(1)$, split $\Phi$ into $\Phi=\Phi_\delta\cup\Phi_\delta^c$, where $\Phi_\delta:=\{\bp\in\Phi:\|\bp-\bp_\star\|\le\delta\}$.  
	\\
	
	\noindent\textbf{Region $\Phi_\delta^c$:}
	From Corollary \ref{cor:two}(1), we have that 
	\begin{equation}\label{eq:lln}
		\sup_{\bp\in\Phi}|(1+n)^{-1}M_n(\bp)-\mathcal{M}(\bp)|=o_p(1).	
	\end{equation}
	From equation \eqref{eq:lln}, for any $\delta>0$,
	\begin{flalign*}
		\sup_{\|\bp-\bp_\star\|\ge \delta}(1+n)^{-1}\left\{M_n(\bp)-M_n(\bp_\star)\right\}\leq& \sup_{\|\bp-\bp_\star\|\ge \delta}2|(1+n)^{-1}M_n(\bp)-\mathcal{M}(\bp)|\\&+\sup_{\|\bp-\bp_\star\|\ge \delta}\left\{\mathcal{M}(\bp)-\mathcal{M}(\bp_\star)\right\}.
	\end{flalign*}
	From equation \eqref{eq:lln} the first term is $o_p(1)$, and from Assumption \ref{ass:crit1}(iii), for any $\delta>0$ there exists an $\epsilon>0$ such that
	$
	\sup_{\|\bp-\bp_\star\|\ge \delta}\left\{\mathcal{M}(\bp)-\mathcal{M}(\bp_\star)\right\}\le -\epsilon.
	$
	Hence, 
	\begin{equation}
		\label{eq:expconv}
		\lim_{n\rightarrow+\infty}P^{}_0\left[\sup_{\|\bp-\bp_\star\|\geq \delta}\exp\left\{M_n(\bp)-M_n(\bp_\star)\right\}\leq \exp\{-\epsilon (1+n)\}\right]=1.
	\end{equation}
	Using the above, over $\Phi_\delta^c$, 
	\begin{align*}
		\int_{\|\bp-\bp_{0}\|\ge \delta }p\left(\bp\right)\exp\left\{M_n(\bp)-M_n(\bp_\star) \right\}d \bp&\leq \{1+o_p(1)\}O_{P_0}(1) \exp\left\{-n\epsilon\right\}\int_{\|\bp-\bp_0\|\ge \delta }p\left(\bp\right)d \bp\\&\leq \{1+o_p(1)\}O_{P_0}(1) \exp\left\{-n\epsilon\right\}
		\\&=o_p(1). 
	\end{align*} 
	The above directly implies that
	\begin{flalign*}					p_{\cut}(\bp\mid\MD)&=\frac{\exp\left\{M_n(\bp)-M_n(\bp_\star)\right\}p(\bp)}{\int_\Phi\exp\left\{M_n(\bp)-M_n(\bp_\star)\right\}p(\bp)d \bp}=\frac{\exp\left\{M_n(\bp)-M_n(\bp_\star)\right\}p(\bp)}{\int_{\Phi_\delta}\exp\left\{M_n(\bp)-M_n(\bp_\star)\right\}p(\bp)d \bp+o_p(1)}.
	\end{flalign*} Integrating over $\Phi_\delta$ demonstrates that, for any $\delta>0$,  as $n\rightarrow+\infty$, 
	$$
	\int_{\Phi_\delta}p_{\cut}(\bp\mid\MD)d\bp=1+o_p(1). 
	$$

	\noindent\textbf{Region $\Phi_\delta$.} Consider the second-order Taylor series expansion of $M_n(\bp)-M_n(\bp_\star)$:  for some line-by-line intermediate value $\bar\bp$, such that $\|\bp_\star-\bar\bp\|\le\|\bp-\bp_\star\|$,  
	\begin{flalign*}
		M_n(\bp)-M_n(\bp_\star)	&=(\bp-\bp_\star)'\nabla_{\bp} M_n(\bp)+\frac{1}{2}(\bp-\bp_\star)'{\nabla_{\bp\bp}^2 M_n(\bar{\bp})}{}(\bp-\bp_\star)\\&=\sqrt{(1+n)}(\bp-\bp_\star)'\mathcal{M}_{\bp\bp}Z_n/\sqrt{(1+n)}-\frac{1}{2}\sqrt{(1+n)}(\bp-\bp_\star)'\mathcal{M}_{\bp\bp}\sqrt{(1+n)}(\bp-\bp_\star)\\&+\frac{1}{2}\sqrt{(1+n)}(\bp-\bp_\star)\left\{\frac{1}{1+n}{\nabla_{\bp\bp} M_n(\bar{\bp})}-{\nabla_{\bp\bp} \mathcal{M}(\bar{\bp})}{}\right\}\sqrt{(1+n)}(\bp-\bp_\star)
		\\&-\frac{1}{2}\sqrt{(1+n)}(\bp-\bp_\star)\left\{{\nabla_{\bp\bp} \mathcal{M}(\bp)}-{\nabla_{\bp\bp} \mathcal{M}(\bar{\bp})}\right\}\sqrt{(1+n)}(\bp-\bp_\star)
	\end{flalign*}	where we recall that $\mathcal{M}_{\bp\bp}=-\nabla_{\bp\bp} \mathcal{M}(\bp_\star)$, and let $Z_n=\mathcal{M}_{\bp\bp}^{-1}(1+n)\int_0^1\nabla_{\bp}J(u;\bp_\star)d\{C_n(u)-C_0(u)\}$.  Represent the last two terms in the above equation as $R_n(\bp)=R_{1n}(\bp)+R_{2n}(\bp)$, and re-arrange the first two terms to obtain 
	\begin{flalign}\label{eq:tse1}
		M_n(\bp)-M_n(\bp_\star)	&=-\frac{1}{2}t'\mathcal{M}_{\bp\bp}t+\frac{1}{2n}Z_n'\mathcal{M}^{}_{\bp\bp}Z_n+R_{1n}(\bp)+R_{2n}(\bp), 
	\end{flalign}where $t:=\sqrt{(1+n)}(\bp-\bp_\star)-Z_n/\sqrt{(1+n)}$. 
	
	For $T_n=\bp_\star+Z_n/\sqrt{(1+n)}$, use the change of variables $\bp\mapsto t$ to obtain 
	$$
	p_{\cut}(t\mid\MD) =\frac{p_{\cut}(t/\sqrt{(1+n)}+T_n\mid\MD)}{n^{d_{\bp}/2}}=\frac{\exp\left\{\xi_n(t)\right\}p(t/\sqrt{(1+n)}+T_n)}{\int_{\mathcal{T}}\exp\left\{\xi_n(t)\right\}p(t/\sqrt{(1+n)}+T_n)dt},
	$$
	where $\xi_n(t):=M_n(t/\sqrt{(1+n)}+T_n)-M_n(\bp_\star)-\frac{1}{2}\frac{1}{n}Z_n'\mathcal{M}_{\bp\bp}Z_n$, and $\mathcal{T} :=\{t=\sqrt{(1+n)}(\bp-\bp_\star)-Z_n/\sqrt{(1+n)}:\bp\in\Phi_\delta\}$. Applying \eqref{eq:tse1}, into the definition of $\xi_n(t)$, then yields the following form for the cut posterior:
	\begin{equation}\label{eq:xit}
		p_{\cut}(t\mid\MD)	=\frac{\exp\{-\frac{1}{2}t'\mathcal{M}_{\bp\bp}t+R_{1n}(t/\sqrt{(1+n)}+T_n)+R_{2n}(t/\sqrt{(1+n)}+T_n)\}p(t/\sqrt{(1+n)}+T_n)}{\int_{\mathcal{T}}\exp\{-\frac{1}{2}t'\mathcal{M}_{\bp\bp}t+R_{1n}(t/\sqrt{(1+n)}+T_n)+R_{2n}(t/\sqrt{(1+n)}+T_n)\}p(t/\sqrt{(1+n)}+T_n)d t}.	
	\end{equation}
	If we can show that the remainder terms $R_{1n}(\bp)$, $R_{2n}(\bp)$, satisfy 
	$$
	\sup_{\|\bp-\bp_\star\|\le\delta}\frac{R_{1n}(\bp)}{(1+n\|\bp-\bp_\star\|^2)}=o_p(1),\;\sup_{\|\bp-\bp_\star\|\le\delta}\frac{R_{2n}(\bp)}{(1+n\|\bp-\bp_\star\|^2)}=o_p(1),
	$$ then the result follows using arguments similar to those in Theorem 1 of \cite{frazier2022cutting}.
	
	To show that $\sup_{\|\bp-\bp_\star\|\le\delta}\frac{R_{1n}(\bp)}{(1+n\|\bp-\bp_\star\|)}$, we note that 
	\begin{flalign*}
		&\sup_{\|\bp-\bp_\star\|\le\delta}\frac{R_{1n}(\bp)}{(1+n\|\bp-\bp_\star\|^2)}\\&\le \sup_{\|\bp-\bp_\star\|\le\delta}\left\|\left\{\frac{1}{n}{\nabla_{\bp\bp} M_n({\bp})}-{\nabla_{\bp\bp} \mathcal{M}({\bp})}{}\right\}\right\|^2\frac{\|\sqrt{(1+n)}(\bp-\bp_\star)\|^2}{1+\|\sqrt{(1+n)}(\bp-\bp_\star)\|^2}\\&\le \sup_{\|\bp-\bp_\star\|\le\delta}\left\|\left\{\frac{1}{n}{\nabla_{\bp\bp} M_n({\bp})}-{\nabla_{\bp\bp} \mathcal{M}({\bp})}{}\right\}\right\|^2=o_p(1),
	\end{flalign*}where the last line follows from Corollary \ref{cor:two}(2). 
	To demonstrate that 
	$\sup_{\|\bp-\bp_\star\|\le\delta}\frac{R_{2n}(\bp)}{(1+n\|\bp-\bp_\star\|)}$, we note that 
	\begin{flalign*}
		&\sup_{\|\bp-\bp_\star\|\le\delta}\frac{R_{2n}(\bp)}{(1+n\|\bp-\bp_\star\|^2)}\\&\le \sup_{\|\bp-\bar\bp_\star\|\le\delta}\left\|\left\{ {\nabla_{\bp\bp} \mathcal{M}({\bp})}-{\nabla_{\bp\bp} \mathcal{M}(\bar{\bp})}{}\right\}\right\|^2\frac{\|\sqrt{(1+n)}(\bp-\bp_\star)\|^2}{1+\|\sqrt{(1+n)}(\bp-\bp_\star)\|^2}\\&\le \sup_{\|\bp-\bp_\star\|\le\delta}\left\|\left\{{\nabla_{\bp\bp} \mathcal{M}({\bp})}-{\nabla_{\bp\bp} \mathcal{M}(\bar{\bp})}{}\right\}\right\|^2. 
	\end{flalign*}Recall that $\bar\bp$ is a line-by-line intermediate value such that $\|\bp_\star-\bar\bp\|\le\|\bp-\bp_\star\|$. Hence, for any $\delta=o(1)$, and $\|\bp-\bp_\star\|\le\delta$, continuity of $\nabla_{\bp\bp}\mathcal{M}(\bp)$ implies that 
	$$
	\sup_{\|\bp-\bp_\star\|\le\delta}\left\|\left\{{\nabla_{\bp\bp} \mathcal{M}({\bp})}-{\nabla_{\bp\bp} \mathcal{M}(\bar{\bp})}{}\right\}\right\|^2=o(1). 
	$$

\end{proof}

\begin{proof}[Proof of Theorem \ref{thm:cut2}]
	The result follows since Assumptions  \ref{ass:DGP1}-\ref{ass:crit1} are sufficient to apply Corollary 1 of \cite{frazier2022cutting} to $Q_n(\bp,\bt)=M_n(\bp)+\log p(\MD\mid\bp,\bt)$.
\end{proof}

\newpage
\section{Additional results for Simulation 1}
In this appendix we provide additional results for Simulation~1 when the sample size is $n=100$ and $n=500$. The priors are given in
the text, although we note here that for the Gumbel copula in the misspecified models the prior on Kendall's $\tau$ is uniform on the admissible region $[0,1)$ and 
for the t-copula it is uniform on the admissible region $(-1,1)$.

\begin{table}[htbp]
	\caption{Parameter Point Estimation Accuracy in Simulation 1 ($n=100$)} 
	\label{tab:sim1biasrmseN100}    
	\begin{center}
		\begin{tabular}{ccccccccc}
			\hline\hline
			& \multicolumn{5}{c}{Misspecified Copula Fit} & &\multicolumn{2}{c}{Correct Copula Fit} \\ \cline{2-6}\cline{8-9} 
			& Uncut/ & Cut/ & IFM & Uncut/ & Cut/ & &Uncut/ &Cut/\\
			& MCMC & MCMC & &VI &VI & &MCMC &MCMC\\ \hline
			Parameter &{\em Bias} & & & & & & &\\ \cline{2-6}\cline{8-9} 
			$\mu$ & 0.0071 & 0.0041 &  0.0042 & {\bf 0.0029} & 0.0031 & &0.0019 &0.0041\\ 
			$\sigma^2$ & 0.0856 & 0.0283 &  {\bf 0.0015} & 0.0998 & 0.0986  & &0.0015 &0.0283\\ 
			$\alpha$ & -0.2685 & -0.0557 & 0.1560  & -0.2071 & {\bf -0.0150} & &-0.0101 & -0.0557\\
			$\beta$ & -0.1289 & -0.0260 & 0.0659  & -0.0968 & {\bf -0.0060} & &-0.0046 &-0.0260\\
			$\tau$ & 0.0328 & -0.0266 & {\bf 0.0179}  & 0.0329 &  - 0.0327 & &-0.0034 &-0.0078\\
			&{\em RMSE} & & & & & & &\\ \cline{2-6}\cline{8-9}
			$\mu$ & 0.0899 & 0.0902 & 0.0902  & {\bf 0.0900} & 0.0905 & & 0.0823 & 0.0902\\
			$\sigma^2$ & 0.1683 & 0.1432 &  {\bf 0.0683}  & 0.1764 & 0.1791 & & 0.1408 & 0.1432\\
			$\alpha$ & 0.9682 & {\bf 0.9309} & 0.9957  & 0.9560 & 0.9461 & & 0.8389 & 0.9309\\
			$\beta$ & 0.4273 & {\bf 0.4127} & 0.4410  & 0.4201 & 0.4191 & & 0.3740 & 0.4127\\
			$\tau$ & 0.0636 & 0.0555 & {\bf 0.0532}  & 0.0633 & 0.0584 & & 0.0412 & 0.0486\\ \hline\hline
		\end{tabular}
	\end{center}
	The bias and RMSE values of different point estimators computed over the $S=500$ simulation replicates, with the lowest values
	in bold. Results on the left are where the misspecified copula is fit using the posterior mean from the conventional (i.e. uncut) and cut (type 1) posteriors, computed exactly using MCMC or approximately using VI. IFM is included for comparison. Results on the right are where the correct copula is fit using the 
	posterior mean from the conventional (i.e. uncut) and cut (type 1) posteriors
	computed exactly using MCMC.
\end{table}

\begin{table}[htbp]
	\caption{Coverage and Copula Model Accuracy in Simulation 1 ($n=100$)} \label{tab:sim1covklN100}
	\begin{center}
		\renewcommand\arraystretch{1.15}
		\begin{tabular}{ccccccccc}
			\hline\hline
			& \multicolumn{5}{c}{Misspecified Copula Fit} & &\multicolumn{2}{c}{Correct Copula Fit} \\ \cline{2-6}\cline{8-9} 
			& Uncut/ & Cut/ & IFM & Uncut/ & Cut/ & &Uncut/ &Cut/\\
			& MCMC & MCMC & &VI &VI & &MCMC &MCMC\\ \hline
			Parameter &\multicolumn{3}{l}{\em Coverage Probabilities} & & & & &\\ \cline{2-6}\cline{8-9} 
			$\mu$ & 0.9740 & {\bf 0.9600} & - & 0.9760 & 0.9720 & &0.9520 &0.9600\\ 
			$\sigma^2$ & 0.8380 & {\bf 0.9600} & - & 0.9760 & 0.9640 & &0.9500 & 0.9600\\
			$\alpha$ & 0.8300 & {\bf 0.9420} & - & 0.8740 & 0.9740 & &0.9620 &0.9420\\
			$\beta$ & 0.9320 & {\bf 0.9360} & - & 0.8760 & 0.9760 & &0.9520 &0.9360\\
			$\tau$ & 0.6560 & {\bf 0.8740} & - & 0.6400 & 0.8680 & &0.9440 &0.8920\\ 
			Component &\multicolumn{3}{l}{\em Mean Predictive  KL Divergence} & & & & &\\ \cline{2-6}\cline{8-9} 
			$f_1$ & 0.0102 & 0.0084 &  {\bf 0.0057} & 0.0108 & 0.0115 & & 0.0079 & 0.0084\\
			$f_2$ & 0.0096 & 0.0091 & 0.0095 & 0.0092 & {\bf 0.0090} & & 0.0074 & 0.0091\\ 
			$c$ & 0.1795 & 0.1546 & {\bf 0.1510} & 0.1793 & 0.1659 & & 0.0074 & 0.0235\\ \hline\hline
		\end{tabular}
	\end{center}
	Top: coverage probabilities for 95\% credible intervals for each parameter, with
	those closest to 0.95 in bold. Bottom: the mean predictive KL divergence for the two marginals and the copula density for their estimate, with the lowest values 
	in bold. 
	Results on the left are where the misspecified copula is fit using the conventional (i.e. uncut) and cut (type 1) posteriors, computed exactly using MCMC or approximately using VI. IFM is included for comparison. Results on the right are where the correct copula is fit using the conventional (i.e. uncut) and cut (type 1) posteriors
	computed exactly using MCMC.
\end{table}

\begin{table}[htbp]
	\caption{Parameter Point Estimation Accuracy in Simulation 1 ($n=500$)} 
	\label{tab:sim1biasrmseN500}    
	\begin{center}
		\begin{tabular}{ccccccccc}
			\hline\hline
			& \multicolumn{5}{c}{Misspecified Copula Fit} & &\multicolumn{2}{c}{Correct Copula Fit} \\ \cline{2-6}\cline{8-9} 
			& Uncut/ & Cut/ & IFM & Uncut/ & Cut/ & &Uncut/ &Cut/\\
			& MCMC & MCMC & &VI &VI & &MCMC &MCMC\\ \hline
			Parameter &{\em Bias} & & & & & & &\\ \cline{2-6}\cline{8-9} 
			$\mu$ & 0.0095 & {\bf 0.0018} &  0.0019 & 0.0083 & 0.0019 & &0.0011 &0.0018\\ 
			$\sigma^2$ & 0.0398 & 0.0053 &  {\bf 0.0022} & 0.0438 & 0.0386  & &0.0018 &0.0053\\ 
			$\alpha$ & -0.1834 & 0.0118 & 0.0337  & -0.1796 &  {\bf 0.0106} & &-0.0091 & 0.0118\\
			$\beta$ & -0.0934 & {\bf -0.0047} & 0.0122  & -0.0904 & -0.0120 & &-0.0049 &-0.0047\\
			$\tau$ & 0.0190 & 0.0054 & 0.0142  & 0.0135 &  {\bf 0.0041} & &-0.0001 &-0.0019\\
			&{\em RMSE} & & & & & & &\\ \cline{2-6}\cline{8-9}
			$\mu$ & 0.0438 & 0.0449 & 0.0448  & 0.0437 & {\bf 0.0426} & & 0.0379 & 0.0449\\
			$\sigma^2$ & 0.0760 & {\bf 0.0617} &  {\bf 0.0305}  & 0.0784 & 0.0752 & & 0.0586 & 0.0617\\
			$\alpha$ & 0.4559 & {\bf 0.4109} & 0.4283  & 0.4537 & 0.4227 & & 0.3886 & 0.4109\\
			$\beta$ & 0.2049 & {\bf 0.1824} & 0.1896  & 0.2030 & 0.1874 & & 0.1737 & 0.1824\\
			$\tau$ & 0.0298 & {\bf 0.0217} & 0.0270  & 0.0283 & 0.0228 & & 0.0180 & 0.0211\\ \hline\hline
		\end{tabular}
	\end{center}
	The bias and RMSE values of different point estimators computed over the $S=500$ simulation replicates, with the lowest values
	in bold. Results on the left are where the misspecified copula is fit using the posterior mean from the conventional (i.e. uncut) and cut (type 1) posteriors, computed exactly using MCMC or approximately using VI. IFM is included for comparison. Results on the right are where the correct copula is fit using the 
	posterior mean from the conventional (i.e. uncut) and cut (type 1) posteriors
	computed exactly using MCMC.
\end{table}

\begin{table}[htbp]
	\caption{Coverage and Copula Model Accuracy in Simulation 1 ($n=500$)} \label{tab:sim1covklN500}
	\begin{center}
		\renewcommand\arraystretch{1.15}
		\begin{tabular}{ccccccccc}
			\hline\hline
			& \multicolumn{5}{c}{Misspecified Copula Fit} & &\multicolumn{2}{c}{Correct Copula Fit} \\ \cline{2-6}\cline{8-9} 
			& Uncut/ & Cut/ & IFM & Uncut/ & Cut/ & &Uncut/ &Cut/\\
			& MCMC & MCMC & &VI &VI & &MCMC &MCMC\\ \hline
			Parameter &\multicolumn{3}{l}{\em Coverage Probabilities} & & & & &\\ \cline{2-6}\cline{8-9} 
			$\mu$ & 0.9580 & {\bf 0.9540} & - & 0.9680 & 0.9600 & &0.9480 &0.9540\\ 
			$\sigma^2$ & 0.8980 & {\bf 0.9640} & - & 0.8880 & 0.9760 & &0.9620 & 0.9640\\
			$\alpha$ & 0.9100 & {\bf 0.9520} & - & 0.9900 & 0.9860 & &0.9520 &0.9520\\
			$\beta$ & 0.9000 & {\bf 0.9380} & - & 0.9900 & 0.9760 & &0.9520 &0.9380\\
			$\tau$ & 0.6220 & {\bf 0.8840} & - & 0.8500 & 0.8560 & &0.9440 &0.9080\\ 
			Component &\multicolumn{3}{l}{\em Mean Predictive  KL Divergence} & & & & &\\ \cline{2-6}\cline{8-9} 
			$f_1$ & 0.0023 & 0.0019 &  {\bf 0.0012} & 0.0024 & 0.0020 & & 0.0010 & 0.0019\\
			$f_2$ & 0.0023 & {\bf 0.0020} & 0.0021 & 0.0022 & {\bf 0.0020} & & 0.0016 & 0.0020\\ 
			$c$ & 0.1525 & 0.1438 & 0.1460 & 0.1545 & {\bf 0.1431} & & 0.0015 & 0.0022\\ \hline\hline
		\end{tabular}
	\end{center}
	Top: coverage probabilities for 95\% credible intervals for each parameter, with
	those closest to 0.95 in bold. Bottom: the mean predictive KL divergence for the two marginals and the copula density for their estimate, with the lowest values 
	in bold. 
	Results on the left are where the misspecified copula is fit using the conventional (i.e. uncut) and cut (type 1) posteriors, computed exactly using MCMC or approximately using VI. IFM is included for comparison. Results on the right are where the correct copula is fit using the conventional (i.e. uncut) and cut (type 1) posteriors
	computed exactly using MCMC.
\end{table}
\newpage

\section{Additional results for Simulation 2}
In this appendix we provide additional results for Simulation~2 when the sample size is $n=100$ and $n=500$. The priors for the misspecified marginals are reported in the text. The priors for the
correctly specified marginals are $\mu \sim \text{N}(0,100^2)$, $\sigma^2 \sim \text{Half-Normal}(0,100^2)$, 
$\alpha \sim \text{Half-Cauchy}(0,5)$, and $\beta \sim \text{Half-Cauchy}(0,5)$.

\begin{table}[htbp]
	\caption{Copula Model Estimation Accuracy in Simulation 2 ($n=500$)} 	
	\label{tab:sim2covklN500}
	\begin{center}
		\renewcommand\arraystretch{1.15}
		\begin{tabular}{lcccccccc}
			\hline \hline
			& \multicolumn{4}{c}{Misspecified Copula Fit} & &\multicolumn{2}{c}{Correct Copula Fit} \\ \cline{2-5}\cline{7-8} 
			& Uncut/ & Cut/ & Uncut/ & Cut/ & &Uncut/ &Cut/\\
			& MCMC & MCMC &VI &VI & &MCMC &MCMC\\ \hline
			Marginal $f_1$ & 0.9092 & {\bf 0.7978} & 0.9036 & 0.8113 &  &0.0012 & 0.0011 \\
			Marginal $f_2$ & 0.2155 & {\bf 0.1543} & 0.2106 & 0.1639 & & 0.0019 &0.0019 \\
			Copula & 0.0172 & {\bf 0.0018} & 0.0105 & {\bf 0.0018} & & 0.0016 & 0.0018\\
			\hline\hline
		\end{tabular}
	\end{center}
	Mean predictive KL divergence metrics for the two marginals and the copula density 
	of the bivariate copula model estimate. The lowest values are in bold. Results are given for the type 2 cut
	and conventional (i.e. uncut) posteriors, computed exactly using MCMC or approximately using variational
	inference. 
	Results are given for both the misspecified copula model (left hand side) and the correctly
	specified copula model (right hand side).
\end{table}

\begin{table}[htbp]
	\caption{Copula Model Estimation Accuracy in Simulation 2 ($n=100$)} 	
	\label{tab:sim2covklN100}
	\begin{center}
		\renewcommand\arraystretch{1.15}
		\begin{tabular}{lcccccccc}
			\hline \hline
			& \multicolumn{4}{c}{Misspecified Copula Fit} & &\multicolumn{2}{c}{Correct Copula Fit} \\ \cline{2-5}\cline{7-8} 
			& Uncut/ & Cut/ & Uncut/ & Cut/ & &Uncut/ &Cut/\\
			& MCMC & MCMC &VI &VI & &MCMC &MCMC\\ \hline
			Marginal $f_1$ & 0.9501 & {\bf 0.8169} & 0.9394 & 0.8458 &  &0.0066 & 0.0066 \\
			Marginal $f_2$ & 0.2722 & {\bf 0.1833} & 0.2651 & 0.2018 & & 0.0095 &0.0095 \\
			Copula & 0.0374 & {\bf 0.0094} & 0.0267 & 0.0120 & & 0.0092 & 0.0094\\
			\hline\hline
		\end{tabular}
	\end{center}
	Mean predictive KL divergence metrics for the two marginals and the copula density 
	of the bivariate copula model estimate. The lowest values are in bold. Results are given for the type 2 cut
	and conventional (i.e. uncut) posteriors, computed exactly using MCMC or approximately using variational
	inference. 
	Results are given for both the misspecified copula model (left hand side) and the correctly
	specified copula model (right hand side).
\end{table}

{
	\def\spacingset#1{\renewcommand{\baselinestretch}%
		{#1}\small\normalsize} \spacingset{1}
	\spacingset{1.0}
	\bibliographystyle{apalike}
	\bibliography{refs}
}

\end{document}